\title[Beyond formulas-as-cographs: Boolean Graph Logic]{Beyond formulas-as-cographs: an extension of Boolean logic to arbitrary graphs}
	\author{Cameron Calk$^1$}
	\address{$^1$Laboratoire d'Informatique de l'\'Ecole Polytechnique, Palaiseau, 91120, France.}
		\author{Anupam Das$^2$}
		\address{$^2$University of Birmingham, Birmingham, B15 2TT, UK.}
		\author{Tim Waring$^3$}
		\address{$^3$University of Copenhagen, Copenhagen, 2100, Denmark.}
	\date{\today}
\begin{document}
						
			\begin{abstract}
				We propose a graph-based extension of Boolean logic called Boolean Graph Logic (BGL). Construing formula trees as the cotrees of cographs, we may state semantic notions such as evaluation and entailment in purely graph-theoretic terms, whence we recover the definition  of BGL. Naturally, it is conservative over usual Boolean logic. 
				
				Our contributions are the following:
				\begin{enumerate}
					\item We give a natural semantics of BGL based on Boolean relations, i.e.\ it is a multivalued semantics, and show the adequacy of this semantics for the corresponding notions of entailment.
					\item We show that the complexity of evaluation is $\NP$-complete for arbitrary graphs (as opposed to $\mathbf{ALOGTIME}$-complete for formulas), while entailment is $\Pi^p_2$-complete (as opposed to $\coNP$-complete for formulas).
					\item We give a `recursive' algorithm for evaluation by induction the modular decomposition of graphs. (Though this is not polynomial-time, cf.~point (2) above).
					\item We characterise evaluation in a game-theoretic setting, in terms of both static and sequential strategies, extending the classical notion of positional game forms beyond cographs.
					\item We give an axiomatisation of BGL, inspired by deep-inference proof theory, and show soundness and completeness for the corresponding notions of entailment.
				\end{enumerate}
	
				One particular feature of the graph-theoretic setting is that it escapes certain no-go theorems such as a recent result of Das and Strassburger \cite{DasStra:nolinearsys15,DasStra:16:On-Linea:uk}, that there is no linear axiomatisation of the linear fragment of Boolean logic (equivalently the multiplicative fragment of Japaridze's Computability Logic or Blass Game Semantics for Multiplicative Linear Logic).
			\end{abstract}
		\maketitle
	
		\section{Introduction}
		\emph{Boolean} logic, a.k.a classical propostional logic, lies at the heart of multiple areas, including algebra, proof theory and computational complexity.
		Axiomatisations and {proof systems} for this logic generally manipulate Boolean formulas, built from $\neg, \vee,\wedge$ or some other adequate basis of connectives.
		These include the classical Hilbert-Frege style axiomatic systems and Gentzen style sequent and natural deduction systems, as well as more recent methodologies such as Belnap's display logic, cf.~\cite{Belnap82a}, and Guglielmi's deep inference, cf.~\cite{Gugl:14:Deep-Inf:fj}.
		The latter, along with Girard's linear logic \cite{Girard87}, has also been responsible for a more graph-based viewpoint on proof theory, for instance via proof nets \cite{Girard87}, expansion proofs \cite{Miller84}, atomic flows \cite{GuglGund:07:Normalis:lr} and combinatorial proofs \cite{Hughes:PWS}.
		Nonetheless such systems still fundamentally deal with the same underlying structure, formulas, with various annotations/decorations.

One viewpoint of formulas is to construe formula trees as the cotrees of cographs, leading to the notion of `co-occurrence graph' (in Boolean function theory, e.g.\ \cite{CraHam:BFbook11}) or `relation web' (in structural proof theory, e.g.\ \cite{Gugl:06:A-System:kl}).
This viewpoint has been exploited several times in order to prove fundamental properties of Boolean logic. A notable result is seminal theorem of Gurvich:
\begin{theorem*}
	[\cite{gurvich1977repetition}]
	A Boolean function $f$ is computed by a read-once formula if and only if every minterm and maxterm of $f$ has singleton intersection.
\end{theorem*}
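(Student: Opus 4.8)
The plan is to exploit the dictionary between read-once formulas and cographs, reducing throughout to the \emph{monotone} case. Since a read-once formula uses each variable at most once, no branch contains both a literal and its negation, so after pushing negations to the leaves and renaming each literal as a fresh variable, $f$ becomes a monotone read-once function whose minterms and maxterms are \emph{sets of variables}, and the singleton-intersection hypothesis is preserved by this renaming. For monotone $f$ I will use the standard facts that the minterms are the prime implicants (minimal true sets), the maxterms are the prime implicates (minimal false sets), that $f = \bigvee_{P}\bigwedge_{x\in P}x = \bigwedge_{Q}\bigvee_{x\in Q}x$, and that every minterm meets every maxterm: if a minterm $P$ and maxterm $Q$ were disjoint, the assignment setting $P$ true and everything else false would make $f$ true yet falsify the implicate $\bigvee_{x\in Q}x$, a contradiction. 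Thus $|P\cap Q|\ge 1$ always holds, and the real content of the hypothesis is the upper bound $|P\cap Q|\le 1$.

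\emph{Forward direction.} I would argue by structural induction on the read-once formula, i.e.\ on its cotree. In the base case $f=x$ the unique minterm and the unique maxterm are both $\{x\}$. For $f = g\wedge h$ with $g,h$ on disjoint variable sets, the minterms of $f$ are exactly the unions $P_g\cup P_h$ of a minterm of $g$ with a minterm of $h$, while the maxterms of $f$ are exactly the maxterms of $g$ together with the maxterms of $h$; intersecting $P_g\cup P_h$ with, say, a maxterm $Q_g$ of $g$ collapses to $P_g\cap Q_g$ because $P_h$ is disjoint from the variables of $g$, and this is a singleton by the induction hypothesis. The case $f=g\vee h$ is dual. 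This yields the condition for every read-once $f$.

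\emph{Backward direction (the hard part).} Assuming the singleton condition, I would run an induction on the number of variables in the style of the modular decomposition. Associate to $f$ the \emph{co-occurrence graph} $G_f$ (edge $xy$ iff some minterm contains both $x$ and $y$) and the dual graph $H_f$ (edge iff some maxterm contains both). A first lemma is that $G_f$ and $H_f$ are complementary: one inclusion is immediate, since a pair lying in a common minterm \emph{and} a common maxterm would force $|P\cap Q|\ge 2$; the reverse inclusion (every pair co-occurs in a minterm or a maxterm) is where the hypothesis begins to bite. The crux is then to show that for $n\ge 2$ either $G_f$ or $H_f$ is disconnected. If $G_f$ splits into parts $A,B$ then no minterm straddles $A$ and $B$, so $f = f_A\vee f_B$ with disjoint supports; dually a disconnected $H_f$ gives $f = f_A\wedge f_B$. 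In either case the minterm/maxterm recursion from the forward direction shows that $f_A$ and $f_B$ again satisfy the singleton condition, so the induction closes and the cotree assembled from the splits is the desired read-once formula. Finally I would confirm that this formula $\phi$ computes $f$ by checking that its prime implicants and prime implicates, generated by the same $\vee$/$\wedge$ recursion, coincide with those of $f$.

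The main obstacle is the crux step: that both $G_f$ and $\compl{G_f}$ being connected manufactures a minterm and a maxterm with non-singleton (indeed empty) intersection, exactly as an induced $\pfour$ would. Concretely this requires pinning down that minterms are realised as \emph{maximal} cliques of $G_f$ and maxterms as \emph{maximal} independent sets — the correspondence that converts the combinatorial hypothesis into $\pfour$-freeness — together with the cograph characterisation that every maximal clique meets every maximal independent set. Establishing the maximality correspondence (and in particular that no minterm sits properly inside a larger clique under the hypothesis) carries the real weight; the complementarity lemma, the forward recursion, and the cotree readout are comparatively routine.
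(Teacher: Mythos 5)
Your forward direction is fine: the structural induction with the minterm/maxterm recursion for $\wedge$ and $\vee$ over disjoint variable sets is correct and complete. The genuine gap is in the backward direction, where everything that makes Gurvich's theorem a theorem is named but never proved. Concretely, your argument rests on two deferred claims: (i) the reverse inclusion of the complementarity lemma (every pair of variables co-occurs in some minterm or some maxterm), and (ii) the crux that $G_f$ or $H_f$ is disconnected, which you yourself reduce to the maximality correspondence (minterms of $f$ are exactly the \emph{maximal} cliques of $G_f$, maxterms exactly its \emph{maximal} stable sets). Neither is routine, and no argument is given for either. For (i), note it is simply false for general monotone functions: the function with minterms $\{x,a\},\{a,b\},\{b,y\}$ (the $\pfour$ function) has maxterms $\{a,b\},\{a,y\},\{x,b\}$, and the pair $\{x,y\}$ lies in no common minterm and no common maxterm; so (i) must be squeezed out of the singleton-intersection hypothesis by a real argument, which is absent. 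For (ii), the ``normality'' half --- that no minterm sits inside a strictly larger clique of $G_f$ and every maximal clique of $G_f$ is an implicant --- is precisely the nontrivial content of the theorem: granting it, the rest (complementation, cograph decomposition, cotree readout) is indeed bookkeeping. As written, the proposal reduces the theorem to equivalent reformulations of itself rather than proving it.

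For comparison, the paper itself never proves this statement: it is imported from \cite{gurvich1977repetition} and used as a black box in the main-text proof of Theorem~\ref{thm:function-p4free}. What the paper does prove self-containedly (Section~\ref{app:det-tot-pfourfree}) is the graph-level analogue of the hard direction: a deterministic (CIS) and total graph is $\pfour$-free. That proof operates where ``minterms $=$ maximal cliques'' and ``maxterms $=$ maximal stable sets'' hold \emph{by definition}, so the correspondence blocking you never arises; the work is instead done by a selection/covering lemma (exploiting totality) followed by a direct contradiction from an induced $\pfour$ (exploiting CIS). That argument is a good template for your crux step, but to obtain Gurvich's statement about functions you would still have to supply exactly the function-to-graph transfer --- the maximality/normality lemma --- that your proposal postpones.
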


More recently, one the current authors, Das, proved with Strassburger that the linear fragment of Boolean logic admits no polynomial-time axiomatisation (unless $\coNP = \NP$), crucially exploiting the graph-theoretic viewpoint \cite{DasStra:nolinearsys15,DasStra:16:On-Linea:uk}.
This result is important since it immediately implies the non-axiomatisability of multiplicative fragments of Japaridze's Computability logic, cf.~\cite{Japaridze05:intro-to-cirquent-calc}, and  Blass' game semantics for linear logic \cite{Blass92}, resolving a problem that was open since the '90s (see \cite{Japaridze17}).

One feature of the graph-theoretic setting is that semantic notions such as evaluation and entailment may be framed in purely graph-theoretic terms, e.g.\ as exploited in \cite{gurvich1977repetition,DasStra:16:On-Linea:uk}. In particular, these characterisations are meaningful for all graphs, not just the cographs that correspond to formulas.
In this work we further develop this idea into a bona fide logical system, which we call `Boolean Graph Logic' (BGL).
We establish various fundamental properties of BGL, from the viewpoints of complexity theory, game theory and proof theory, using graph theoretic tools throughout.
BGL is conservative over usual Boolean logic (as expected), and semantically it corresponds to a natural extension to Boolean relations rather than Boolean functions.
One particular feature of BGL is that entailment is much more fine-grained, admitting interpolations that are impossible in usual Boolean logic. An example of this is given in Section~\ref{subsect:case-study}, where a minimal 10 variable linear inference from \cite{Das:13:Rewritin:uq} is decomposable in BGL.
\todo{is it really minimal?}

At least one motivation of this work is the issue of (linearly) axiomatising the linear fragment of Boolean logic.
The impossibility result of \cite{DasStra:nolinearsys15,DasStra:16:On-Linea:uk} crucially exploited the fact that the graphs corresponding to formulas are $\pfour$-free (i.e.\ cographs). In particular, their critical Lemma~5.8 does not scale to arbitrary graphs. It is natural to ask whether this impossibility result can be extended to BGl or whether BGL might admit a linear axiomatisation.
One way of looking at this motivation is with the following question: can we establish a Boolean proof theory without any structural behaviour, such as duplication and erasure? 
A positive answer would be complementary to the established orthodoxy in structural proof theory.

Our contributions are the following:
\begin{enumerate}
	\item We give a natural semantics of BGL based on Boolean relations, i.e.\ it is a multivalued semantics, and show the adequacy of this semantics for the corresponding notions of entailment.
	\item We show that the complexity of evaluation is $\NP$-complete for arbitrary graphs (as opposed to $\mathbf{ALOGTIME}$-complete for formulas), while entailment is $\Pi^p_2$-complete (as opposed to $\coNP$-complete for formulas).
	\item We give a `recursive' algorithm for evaluation by induction the modular decomposition of graphs. (Though this is not polynomial-time, cf.~point (2) above).
	\item We characterise evaluation in a game-theoretic setting, in terms of both static and sequential strategies, extending the classical notion of positional game forms beyond cographs.
	\item We give an axiomatisation of BGL, inspired by deep-inference proof theory, and show soundness and completeness for the corresponding notions of entailment.
\end{enumerate}
%
%
%
%
%
%
%

		\subsection{History and related work}
		The ideas behind the Boolean Graph Logic project were already mentioned in the aforementioned paper \cite{DasStra:16:On-Linea:uk}, where Section~9 proposed the basic notions of entailment as the possible basis of a proof theory on arbitrary graphs.
		The research behind BGL properly began in 2016 when the author Calk conducted his Bachelor's thesis project under the supervision of the author Das \cite{calk}. This comprised mainly of the results from Sections~\ref{sect:bgl} and \ref{sect:moddecomp}.
		The results of Section~\ref{sect:complexity} were established by Das and have been circulated in private correspondence \cite{das-note}.
		The author Waring conducted his Master's thesis under the supervision of Das in 2019 \cite{waring}, comprising of the results of Sections~\ref{sect:games} and parts of \ref{sect:bgl} and \ref{sect:proof-system}, as well as supplementary material found in the appendix.
		
		In parallel other groups of authors have become interested in the prospect of logic and proof theory based on arbitrary graphs.
		In particular, Acclavio, Horne and Strassburger have investigated a structural proof theory of arbitrary graphs, from a linear logic perspective \cite{AccHorStr:20}. They obtain a cut-elimination result for a conservative extension of multiplicative linear logic, via the `splitting' method of deep inference proof theory.  Their syntax of graphs is identical to ours, only interpreting conjunction and disjunction by their multiplicative variants. However their logic and BGL seem to be incomparable, at the level of validity, distinguishing these two graph settings from their restriction to formulas. 
			It would be interesting to establish a graph level semantics of their logic to more fully understand the differences between these two approaches; relevant work in this direction includes recent work of Seiller and Nguyen, who developed a model of multiplicative linear logic with a form of nondeterminism via `interaction graphs' \cite{NguSei18}.

%
%
%

			Beyond structural proof theory, there are several systems that operate with objects other than formulas.
		These include circuit-based systems (e.g.\ \cite{Japaridze05:intro-to-cirquent-calc} and \cite{jerabek:dual-wphp}), algebraic systems (e.g.\ \cite{BeameIKPP94} and \cite{CuttingPlaneProofs-CookCT87}) and systems operating with forms of decision diagrams (e.g.\ \cite{AtseriasKV04} and \cite{Buss0K20}).
		Of these, algebraic systems are particularly interesting (and thematically relevant to this work) since they extend the usual Boolean semantics conservatively to an arithmetic setting.
		
		\subsection{Perspectives on Boolean Graph Logic}
		Given that this article collects various research carried out by the three authors, it is natural that different readers will take interest in different parts. 
		We give the mutual dependencies between sections in Figure~\ref{fig:dependencies}, and we give some possible readings of this article below:
		\begin{itemize}
			\item Complexity theoretic viewpoint: Sections~\ref{sect:prelims}, \ref{sect:bgl} and \ref{sect:complexity}.
			\item Game theoretic viewpoint: Sections~\ref{sect:prelims}, \ref{sect:bgl}, \ref{sect:moddecomp} and \ref{sect:games}.
			\item Proof theoretic viewpoint: Sections~\ref{sect:prelims}, \ref{sect:bgl}, \ref{sect:moddecomp} and \ref{sect:proof-system}.
		\end{itemize}
		It is our intention that this article be an introduction to BGL and serve as a reference for later research.
		
		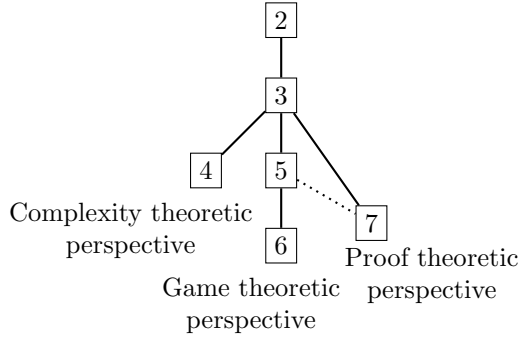
\begin{figure}[t]
			\[
			\begin{tikzpicture}
			\node[draw] (prelims) at (0,0) {\ref{sect:prelims}};
			\node[draw] (bgl) at (0,-1) {\ref{sect:bgl}};
			\node[vertex,align=center] (compview) at (-2,-3) {Complexity theoretic \\ perspective};
			\node[draw] (complexity) at (-1,-2) {\ref{sect:complexity}};
			\node[draw] (moddecomp) at (0,-2) {\ref{sect:moddecomp}};
			\node[vertex,align=center] (gameview) at (-0.4,-4) {Game theoretic \\ perspective};
			\node[draw] (games) at (0,-3) {\ref{sect:games}};
			\node[vertex,align=center] (proofview) at (2,-3.6) {Proof theoretic \\ perspective};
			\node[draw] (system) at (1.2,-2.7) {\ref{sect:proof-system}};
			\draw[b] (prelims) -- (bgl);
			\draw[b] (bgl) -- (complexity);
			\draw[b] (bgl) -- (moddecomp);
			\draw[b] (bgl) -- (system);
			\draw[b] (moddecomp) -- (games);
			\draw[b,dotted] (moddecomp) -- (system);
			\end{tikzpicture}
			\]
		\caption{Dependencies of sections in this paper, and the viewpoints of Boolean Graph Logic given by different readings of the paper. For Section~\ref{sect:proof-system}, familiarity with the content of Section~\ref{sect:moddecomp} is helpful but not strictly necessary. }
		\label{fig:dependencies}
		\end{figure}

		\section{Preliminaries}
		\label{sect:prelims}
		Throughout this work let us fix a set $\Var$ of \emph{variables} that will be used in various contexts, e.g.\ nodes of a graph, Boolean variables etc. 
		This overloading is intentional as we will often identify these objects later on.
		We write $\undirset X$ for the set of unordered pairs of a set $X$.
		
		\subsection{Graphs}
		In this work we deal with undirected finite graphs that are simple and loopless. 
		We specify a graph $G$ by a pair $(V,E)$ such that $V$ is a finite set of \emph{nodes} or \emph{vertices}, typically a subset of $\Var$, and $E \subseteq \undirset{V}$ the set of \emph{edges}.
		The \emph{dual} or \emph{complement} 
		$\compl{G}$ of a graph $G = (V,E)$ is $\left(V, \undirset{V} \setminus E \right)$.
		The \emph{size} of a graph $G$, written $|G|$, is just its number of vertices, i.e.\ $|V|$.
		
		We will use standard graph-theoretic terminology, e.g.\ neighbours, adjacency, cliques, stable sets, connected etc.
		Sometimes we will write $\Ver(G)$ (or $\Edg (G)$) to denote the set of vertices (respectively edges) of a graph $G$ when it is not completely specified.
		
		For convenience, we will often (but not always) draw graphs with red solid lines to indicate edges and green dotted lines to indicate non-edges, to allow for incomplete information to be presented. 
		I.e.\ we may say ``\emph{$\TwoGraph{x,y}{r}$ in $G$}'' to denote that $\{x,y\}\in \Edg(G)$ and ``\emph{$\TwoGraph{x,y}{g}$ in $G$}'' to denote that $\{x,y\} \notin \Edg(G)$.

		\begin{example}
			[Paths, cycles and complete graphs]
			Let $V_n = \{x_1 , \dots , x_n \}$.
			We will use the following graphs throughout this work.
			\begin{itemize}
				\item $ P_n := \left( V_n , \{ \{x_1, x_2\}, \dots , \{x_{n-1}, x_n\} \}\right) $ is the \emph{$n$-path}. For instance, here are some (equivalent) representations of the $4$-path, $\pfour$:
				\begin{equation}
				\label{eqn:paths}
				\FourGraphPath{x_1,x_2,x_3,x_4}rggrgr
				\qquad
				\FourGraph{x_2,x_3,x_1,x_4}rrggrg
				\end{equation}
				\item $ C_n := P_n \cup \{x_n, x_1 \} $ is the \emph{$n$-cycle}. For instance, here are the $4$-cycle $C_4$ and $5$-cycle $C_5$, respectively:
				\begin{equation}
				\label{eqn:cycles}
				\FourGraph{x_2,x_3,x_1,x_4}rrggrr
				\qquad
				\FiveGraphCirc{x_1,x_2,x_3,x_4,x_5}rggrrggrgr
				\end{equation} 
				\item \( K_n := \left( V_n , \undirset{V_n} \right) \) is the \emph{$n$-complete-graph}. For instance here are the graphs $K_3$ and $K_5$, respectively:
				\[
				\ThreeGraph{x_1,x_2,x_3,x_4}rrr
				\qquad
				\FiveGraphCirc{x_1,x_2,x_3,x_4,x_5}rrrrrrrrrr
				\]
			\end{itemize}
		\end{example}

		\begin{definition}
			[Subgraphs]
			A graph $G$ is a \emph{subgraph} of a graph $H$, written $G \leq H $, if $\Ver (G) \subseteq \Ver (H)$ and $\Edg (G) \subseteq \Edg (H)$.
			For $X \subseteq \Ver(G)$ we say that $ G\restr X := \left( X, \undirset{X} \cap \Edg(G) \right)$ is the subgraph of $G$ \emph{induced} by $X$.
			We will sometimes simply write $X$ instead of $G\restr X$, as an abuse of notation, when the ambient graph $G$ is clear from context.
		\end{definition}

		\begin{definition}
			[Homomorphisms and freeness]
			For graphs $G$ and $H$, we say that $\phi : \Ver(G) \to \Ver (H)$ is a \emph{homomorphism} if $\{x,y\} \in \Edg(G)$ implies that $\{\phi(x), \phi(y) \} \in \Edg(H)$, or $\phi(x) = \phi(y)$.\footnote{This caveat is because we deal with loopless graphs. However note that this convention does not affect the notions of clique, stable set and connectedness that we rely on throughout this work.}
			If $\phi$ is a bijection then we say it is an \emph{isomorphism}.
			
			We say that $G$ is \emph{$H$-free} if no induced subgraph of $G$ is isomorphic to $H$.
		\end{definition}

	\begin{remark}
		[Variants of homomorphisms]
		\label{rmk:rg-homomorphisms}
		It would have been quite natural for us to consider homomorphisms that further satisfy the following dual property: if $\{\phi(x), \phi(y) \} \notin V(H)$ then $\{x,y\}\notin V(G)$, i.e.\ $\phi$ \emph{reflects} green edges.
		This is pertinent because of the notion of disjunctive entailment $\entails \dis$ we introduce in the next section, but we do not develop this notion of homomorphism formally.
		
		We could rather consider homomorphisms that preserve more structure, for instance `preserving green edges' too. Such a property is satisfied by the `quotient homomorphisms' we discuss in Section~\ref{sect:moddecomp}.
		More detailed investigations studying the links between logical phenomena and classes of homomorphisms have been carried out in \cite{RalphS19}.
	\end{remark}
		
		We will often speak about graphs up to isomorphism, when there is no ambiguity.
		
		\begin{example}
			[Cographs]
			A \emph{cograph} is a graph for which every induced subgraph is either disconnected or its complement is disconnected.
			
			For instance, the $P_4$ (cf.~\eqref{eqn:paths}) is not a cograph, but the $C_4$ is (cf.~\eqref{eqn:cycles}). Notice that the $C_5$ is not a cograph since it has a $P_4$ as an induced subgraph.
			This means that the $C_5$ is not $\pfour$-free.
		\end{example}
		
		We have the following well-known characterisation of cographs, immediately yielding an efficient algorithm for cograph-recognition.
		
		\begin{fact}
			\label{prop:cograph-iff-p4free}
			$G$ is a cograph if and only if it is $\pfour$-free.
		\end{fact}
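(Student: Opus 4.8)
The plan is to prove both implications, treating the forward direction as routine and concentrating effort on the converse. For the forward direction I would argue contrapositively: if $G$ is not $\pfour$-free then it contains an induced $\pfour$, say on a vertex set $X$. The graph $\pfour$ is connected, and its complement is again (isomorphic to) $\pfour$, hence also connected. Thus $G\restr X$ is an induced subgraph that is neither disconnected nor has disconnected complement, so $G$ fails the defining condition of a cograph. This establishes that every cograph is $\pfour$-free.

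For the converse I would set up an induction on $|G|$, exploiting two closure facts. First, $\pfour$-freeness is inherited by induced subgraphs, and it is preserved under complementation because $\compl{\pfour}\cong\pfour$; hence $G$ is $\pfour$-free iff $\compl G$ is. Second, the class of cographs is closed under disjoint union and under complement (both immediate from the recursive reading of the definition). Given these, the induction step splits on connectivity: if $G$ is disconnected, each component is a smaller $\pfour$-free graph, hence a cograph by the inductive hypothesis, and a disjoint union of cographs is a cograph; if $G$ is connected, I would instead show that $\compl G$ is disconnected, apply the same argument to the components of $\compl G$ (which are $\pfour$-free), and conclude that $\compl G$, and therefore $G$, is a cograph. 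The base cases of at most one vertex hold by the standard convention that such graphs are cographs.

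Everything thus reduces to the key lemma: a connected $\pfour$-free graph $G$ on at least two vertices has disconnected complement. My first observation would be that a connected $\pfour$-free graph has diameter at most $2$: along any shortest path $x_0 x_1 x_2 x_3$ of length $3$, the pairs $\{x_0,x_2\}$, $\{x_1,x_3\}$, $\{x_0,x_3\}$ are all non-edges by minimality of the path, so $\{x_0,x_1,x_2,x_3\}$ would induce a $\pfour$. To produce the required join decomposition I would fix a vertex $v$ and partition $\Ver(G)\setminus\{v\}$ into its set $A$ of neighbours and set $B$ of non-neighbours; connectivity together with the diameter bound forces each $b\in B$ to have a neighbour in $A$, and $\pfour$-freeness applied to candidate paths of the form $a - v - a' - b$ then forces enough edges between $A$ and $B$ to exhibit $\Ver(G)$ as a nontrivial join, i.e.\ to make $\compl G$ disconnected.

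The main obstacle is exactly this last step: turning the local $\pfour$-free constraints into a global join decomposition, equivalently ruling out configurations such as $C_5$ in which both the graph and its complement are connected of diameter $2$. I expect to need a careful case analysis on the adjacency patterns of the remaining vertices to the fixed neighbourhood, of the kind underlying the Corneil--Lerchs--Stewart characterisation of cographs. Throughout, the recurring simplification is the self-complementarity $\compl{\pfour}\cong\pfour$, which lets every argument be run equally on $G$ or on $\compl G$.
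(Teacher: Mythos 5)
First, note that the paper itself offers no proof of this statement: it is recorded as a well-known Fact (this is Seinsche's theorem, also associated with Corneil, Lerchs and Stewart), with the citation left as a to-do. So your attempt can only be measured against the standard argument from the literature, not against anything in the text. Within your proposal, the forward direction is complete and correct: an induced $\pfour$ is connected and, since $\compl{\pfour}\cong\pfour$, also co-connected, so no graph containing one satisfies the defining condition of a cograph. The reduction of the converse to your key lemma --- every connected $\pfour$-free graph on at least two vertices has disconnected complement --- via induction on $|G|$, closure of $\pfour$-freeness under induced subgraphs and complementation, and closure of cographs under disjoint union and complement, is also the right skeleton.

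The genuine gap is the key lemma itself, and you flag it yourself: ``turning the local $\pfour$-free constraints into a global join decomposition'' is not a residual technicality, it is the entire mathematical content of the theorem. Your preparatory observations (diameter at most $2$; every non-neighbour of a fixed vertex $v$ has a neighbour in $N(v)$; the propagation constraints from potential induced paths through $v$) are all sound, but they do not by themselves produce the join, and as you note the diameter bound alone cannot suffice, since graphs such as $C_5$ share it while having connected complement. The cleanest repair is not to push this neighbourhood case analysis through, but to prove, by induction on the number of vertices, the dichotomy that every $\pfour$-free graph on at least two vertices is disconnected or co-disconnected --- which is exactly what your outer induction consumes anyway. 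For the inductive step (Seinsche's argument), pick any vertex $v$ and apply the hypothesis to $G-v$; by self-complementarity of $\pfour$ you may assume $G-v$ is disconnected, with components $C_1,\dots,C_k$, $k\geq 2$. If $v$ is adjacent to every vertex of $G-v$, then $v$ is isolated in $\compl G$, so $\compl G$ is disconnected. If $v$ has no neighbour in some $C_i$, then $C_i$ receives no edges from the rest of $G$, so $G$ is disconnected. Otherwise $v$ has a neighbour in every component and a non-neighbour in some component $C$; since $C$ is connected there are adjacent $x,y\in C$ with $x$ adjacent to $v$ and $y$ not, and choosing a neighbour $z$ of $v$ in a different component, the vertices $y,x,v,z$ induce a $\pfour$, a contradiction. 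Substituting this three-case argument for your unfinished analysis turns your outline into a complete proof.
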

	\todo{cite?}
		
		Finally, we introduce some topological aspects of graphs that we use throughout.
		\begin{definition}
			[Cliques, stable sets and connectedness]
			For a graph $G$, a subset of its nodes $X \subseteq \Ver (G)$ is a \emph{clique} if $G \restr X$ is isomorphic to some $K_n$.
			If $G\restr X$ is isomorphic to some $\compl{K}_n$ then we say that $X$ is a \emph{stable set}.
			We say that a clique (stable set) is \emph{maximal} in $G$ if it cannot be extended to a larger clique (respectively stable set).
			We write $\mc (G)$ ($\ms (G)$) for the set of maximal cliques (respectively stable sets) of $G$.
			
			A graph $G$ is \emph{connected} if, for every $x,y \in \Ver(G)$, there is an induced subgraph isomorphic to some $P_n$ containing $x$ and $y$. Otherwise $G$ is \emph{disconnected}.
			$G$ is \emph{co-connected} if, for every $x,y \in G$, there is an induced subgraph isomorphic to some $\compl{P}_n$ containing $x$ and $y$. Otherwise $G$ is \emph{co-disconnected}.
		\end{definition}

		Note, in particular, that a graph may be both connected and coconnected, in particular the $P_4$:
		\[
		\FourGraph{ x_2,x_3 ,x_1 ,x_4 }rrggrg
		\]
		Notice that co-connectedness can be seen visually as `green' connectedness, whereas connectedness is `red' connectedness.

		The following result is immediate from the definitions of homomorphisms and cliques:
		\begin{observation}
			[Homomorphisms preserve cliques]
			\label{prop:homo-pres-refl-cli-stab}
			Let $\phi : \Ver (G) \to \Ver (H)$ be a homomorphism from a graph $G$ to a graph $H$.
			Then, if $S$ is a clique in $G$ then $\phi(S)$ is a clique in $H$. 
		\end{observation}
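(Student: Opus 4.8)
The plan is to unfold the definitions of clique and homomorphism and verify the defining condition directly. Recall that $S \subseteq \Ver(G)$ being a clique means $G \restr S \cong K_n$ for some $n$, i.e.\ every pair of distinct vertices of $S$ is joined by a (red) edge; and to show that $\phi(S)$ is a clique in $H$ I must check that every pair of distinct vertices of $\phi(S)$ is joined by an edge of $H$.

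First I would fix two distinct vertices $u \neq v$ in $\phi(S)$ and choose preimages $x, y \in S$ with $\phi(x) = u$ and $\phi(y) = v$. The one point that needs care — arising precisely from the loopless caveat in the definition of homomorphism — is that I must ensure $x \neq y$: this is immediate, since $\phi(x) = u \neq v = \phi(y)$ forces $x \neq y$.

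Then, since $S$ is a clique and $x, y$ are distinct elements of $S$, we have $\{x, y\} \in \Edg(G)$. Applying the homomorphism condition yields $\{\phi(x), \phi(y)\} \in \Edg(H)$ or $\phi(x) = \phi(y)$; the latter disjunct is excluded because $u \neq v$, so $\{u, v\} = \{\phi(x), \phi(y)\} \in \Edg(H)$. As $u, v$ were arbitrary distinct vertices of $\phi(S)$, the induced subgraph $H \restr \phi(S)$ is complete, whence $\phi(S)$ is a clique.

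There is essentially no genuine obstacle here; the only subtlety worth flagging is the interplay with looplessness, namely that the disjunct $\phi(x) = \phi(y)$ in the homomorphism condition must be ruled out, which is exactly why passing to distinct images (and hence, as noted, distinct preimages) is the right move.
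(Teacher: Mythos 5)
Your proof is correct and is exactly the routine unfolding of definitions that the paper has in mind when it states this as an observation "immediate from the definitions of homomorphisms and cliques" (the paper gives no explicit proof). Your careful handling of the loopless caveat — ruling out the disjunct $\phi(x)=\phi(y)$ because the images were chosen distinct — is precisely the right way to discharge the only point of potential subtlety.
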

%
%

Following on from Remark~\ref{rmk:rg-homomorphisms}, for homomorphisms that reflect green edges we have a dual result to that above: if $T $ is stable in $H$ then $\phi^{-1}(T)$ is stable in $G$.
On the other hand if green edges are preserved then so are stable sets.

Notice, however, that maximality of a clique is not preserved by a homomorphism. A very simple example is given by the map below:
\[
\phi: \quad \ThreeGraph{x,y_0,y_1}rgr
\qquad \mapsto \qquad
\TwoGraph{x,y}{r}
\]
by the homomorphism $\{x \mapsto x, y_0 \mapsto y, y_1 \mapsto y \}$. Here the maximal clique $\{y_0,y_1 \}$ on the LHS is mapped to the clique $y$ on the RHS, but this is clearly not maximal.
This example also shows that, a priori, stable sets are not reflected by homomorphisms: $\{y\}$ is stable on the RHS but its preimage $\{y_0,y_1 \}$ is clearly not stable on the LHS.

		\subsection{Boolean logic}
		We will consider \emph{positive Boolean formulas} (henceforth simply \emph{formulas}), built as follows:
		\begin{itemize}
			\item Any variable $x\in \Var$ is a formula;
			\item If $A$ and $B$ are formulas then so is $(A\dis B)$;
			\item If $A$ and $B$ are formulas then so is $(A\con B)$.
		\end{itemize}
		
		We call a formula \emph{linear} or \emph{read-once} if each variable occurs at most once in it.\footnote{The nomenclature `linear' comes from term rewriting, e.g.\ \cite{Terese}, where as `read-once' comes from complexity theory and Boolean function theory, e.g.\ \cite{crama2011boolean}.}
		Formulas compute Boolean functions in the usual way. We identify Boolean assignments $\Var \to \{0,1\}$, with subsets of $\Var$ as expected: $\alpha:\Var \to \{0,1\}$ is identified with the subset $\{x \in \Var : \alpha(x) = 1 \} \in \pow (\Var)$.
		In this way, a Boolean function is a map $\pow (\Var) \to \{0,1\}$.
		Note that, throughout this work, we assume that the support of assignments is finite, so that only finite subsets of variables need be considered.
		
		 Formally, for a formula $A$, we write $A(X)$ for its Boolean output on input $X$, defined recursively as follows:
		\[
		\begin{array}{rcl}
		x(X) & := & \begin{cases}
		1 & x \in X \\
		0 & \text{otherwise}
		\end{cases} \\
		(A\dis B)(X) & :=& \max (A(X),B(X)) \\
				(A\con B)(X) & :=& \min (A(X),B(X)) 
		\end{array}
		\]
		
		In our setting, where we do not admit negation, the functions computed by formulas are \emph{monotone}, i.e.\ if $X \subseteq Y $ then $ A(X) \leq A(Y)$.
		For us this is sufficiently general since any Boolean tautology $A(x_1, \dots, x_n, \compl x_1, \dots \compl x_n)$, with all literals displayed, is equivalent to a monotone implication $\bigwedge\limits^n_{i=1} (x_i \dis {x}_i') \Rightarrow A(\vec x, \vec x')$. Note that if $A$ is read-once then the resulting monotone implication is a linear inference, i.e.\ both the LHS and RHS are read-once on the same variables.

		More generally, we may speak of arbitrary Boolean functions $f: \pow (\Var) \to \{0,1\}$ being monotone if $X \subseteq Y \implies f(X) \leq f(Y)$.
		Again, since we are only considering finite Boolean functions, such functions $f$ are determined by their actions on some finite set of variables $V_n = \{x_1 , \dots, x_n \}$ that they `depend on'.

		We introduce now a standard semantic abstraction of monotone Boolean functions, which we will later link to graph-theoretic notions.
		\begin{definition}
			[Minterms and maxterms]
			Let $f : \pow(\Var) \to \{0,1\}$ be a monotone Boolean function.
			\begin{itemize}
				\item A \emph{minterm} of $f$ is a minimal set $X\subseteq \Var$ such that $f(X) = 1$.
				\item Dually, a \emph{maxterm} of $f$ is a minimal set $X$ such that $f(\compl X) = 0$.
			\end{itemize}
			Since we identify formulas with the monotone Boolean functions they compute, we may also speak of the minterms and maxterms of a formula.
		\end{definition}
		
		Notice that minterms correspond to the terms of the irredundant disjunctive normal form of a function, while maxterms correspond to the clauses of the irredundant conjunctive normal form of a function (see, e.g., \cite{crama2011boolean}).
		Consequently we have the following (well-known) characterisations of evaluation and entailment in terms of their minterms and maxterms:
		
		\begin{proposition}
		[Characterisations of evaluation]
		\label{prop:evaluation-characterisation}
		For a monotone Boolean function $f:\pow (\Var) \to \{0,1\}$, we have the following:
			\begin{enumerate}
				\item $f(X) = 1$ if and only if there is a minterm $S$ of $f$ such that $S \subseteq X$.
				\item $f(X) = 0$ if and only if there is a maxterm $T$ of $f$ such that $T\cap X = \emptyset$.
			\end{enumerate}
		\end{proposition}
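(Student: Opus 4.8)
The plan is to prove the two items in parallel, as they are dual: item (2) will be obtained from item (1) by passing to the ``complemented'' data. In both cases the backward direction is a one-line consequence of monotonicity, whereas the forward direction requires extracting a minimal witness. For the latter I would use the fact, noted above, that $f$ depends only on some finite $V_n$, so that all the relevant families of sets are finite and hence admit minimal elements.

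For item (1), the ($\Leftarrow$) direction is immediate: if $S$ is a minterm with $S \subseteq X$, then $f(S) = 1$ by definition of minterm, and monotonicity gives $1 = f(S) \leq f(X)$, so $f(X) = 1$. For ($\Rightarrow$), assuming $f(X) = 1$, I would consider the family $\{ Y \subseteq X : f(Y) = 1 \}$, which is nonempty (it contains $X$) and finite, hence has a minimal element $S$. It then remains to check that $S$ is genuinely a minterm of $f$, i.e.\ minimal among \emph{all} subsets of $\Var$ on which $f$ takes value $1$, and not merely among subsets of $X$. This is the one point needing care: any $S' \subsetneq S$ with $f(S') = 1$ would satisfy $S' \subseteq X$, contradicting minimality of $S$ in the restricted family; hence $S$ is a minterm, and $S \subseteq X$ by construction.

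For item (2), I would run the dual argument, using the identity $X \cap T = \emptyset \iff X \subseteq \compl T$. The ($\Leftarrow$) direction then follows from $f(X) \leq f(\compl T) = 0$. For ($\Rightarrow$), set $T_0 := \compl X$, so that $f(\compl{T_0}) = f(X) = 0$, and take $T$ minimal in the finite family $\{ Z \subseteq T_0 : f(\compl Z) = 0 \}$; the same minimality argument as above shows $T$ is a maxterm, and $T \subseteq \compl X$ yields $T \cap X = \emptyset$.

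The main subtlety, rather than a genuine obstacle, is exactly the verification that a witness chosen minimal \emph{within a restricted family} ($Y \subseteq X$, respectively $Z \subseteq \compl X$) is in fact minimal over all of $\Var$, so that it qualifies as a minterm (respectively maxterm). This goes through because shrinking a witness keeps it inside the restricting set, so minimality in the restricted family already entails global minimality. Finiteness, guaranteed by $f$ depending only on some $V_n$, is what licenses the choice of a minimal element in the first place.
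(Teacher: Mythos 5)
Your proof is correct. Note that the paper itself gives no proof of this proposition: it records it as folklore, pointing out that minterms and maxterms are exactly the terms and clauses of the irredundant DNF and CNF of $f$, and defers to the cited references for details. Your argument is a more elementary, self-contained alternative: you extract a minimal witness directly from monotonicity, and you correctly isolate and resolve the one point that genuinely needs care, namely that a set minimal within the restricted family $\{Y \subseteq X : f(Y)=1\}$ (respectively $\{Z \subseteq \compl{X} : f(\compl{Z})=0\}$) is automatically a globally minimal witness, since any proper shrinking of a witness stays inside the restricting set. One small wrinkle in item (2): as literally written, the family $\{Z \subseteq \compl{X} : f(\compl{Z})=0\}$ need not be finite, because $\compl{X}$ is cofinite in $\Var$ and hence typically infinite; to invoke the existence of a minimal element you should first replace each $Z$ by $Z \cap V_n$, which leaves $f(\compl{Z})$ unchanged since $f$ depends only on $V_n$ (this also shows every maxterm is contained in $V_n$). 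Since you explicitly appeal to $f$ depending on a finite $V_n$, this is a presentational fix rather than a gap, and the same remark is vacuous for item (1), where the assignment $X$ itself has finite support by the paper's standing convention.
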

		
		\begin{proposition}
		[Characterisations of entailment]
		\label{prop:entailment-characterisation}
			For a monotone Boolean function $f:\pow (\Var) \to \{0,1\}$, the following are equivalent
			\begin{enumerate}
				\item $f \leq g$, i.e.\ if $f(X) = 1$ then $g(X) = 1$.
				\item For each minterm $S$ of $f$ there is a minterm $S'$ of $g$ such that $S' \subseteq S$.
				\item For each maxterm $T$ of $g$ there is a maxterm $T'$ of $f$ such that $T' \subseteq T$.
			\end{enumerate}
		\end{proposition}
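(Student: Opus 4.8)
The plan is to reduce the entire statement to Proposition~\ref{prop:evaluation-characterisation}, which already characterises when a monotone Boolean function outputs $0$ or $1$ in terms of its minterms and maxterms. Rather than running a three-way cycle, I would prove the two equivalences $(1)\Leftrightarrow(2)$ and $(1)\Leftrightarrow(3)$ independently: the first is the ``minterm side'' and follows directly, while the second is its dual ``maxterm side'', which requires passing to complements.

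For $(1)\Leftrightarrow(2)$ I would argue as follows. Assuming $f\leq g$, take any minterm $S$ of $f$; then $f(S)=1$ by definition, so $g(S)=1$, and Proposition~\ref{prop:evaluation-characterisation}(1) applied to $g$ yields a minterm $S'$ of $g$ with $S'\subseteq S$, giving (2). Conversely, assuming (2), suppose $f(X)=1$; Proposition~\ref{prop:evaluation-characterisation}(1) gives a minterm $S$ of $f$ with $S\subseteq X$, then (2) gives a minterm $S'$ of $g$ with $S'\subseteq S\subseteq X$, and a final application of Proposition~\ref{prop:evaluation-characterisation}(1) to $g$ gives $g(X)=1$; hence $f\leq g$.

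For $(1)\Leftrightarrow(3)$ I would use the contrapositive form of entailment, namely that $f\leq g$ iff $g(X)=0$ implies $f(X)=0$, together with Proposition~\ref{prop:evaluation-characterisation}(2). Assuming $f\leq g$, take a maxterm $T$ of $g$, so $g(\compl T)=0$ by definition; then $f(\compl T)=0$, and Proposition~\ref{prop:evaluation-characterisation}(2) gives a maxterm $T'$ of $f$ with $T'\cap\compl T=\emptyset$, i.e.\ $T'\subseteq T$, which is (3). Conversely, assuming (3) and $g(X)=0$, Proposition~\ref{prop:evaluation-characterisation}(2) gives a maxterm $T$ of $g$ with $T\cap X=\emptyset$, then (3) gives a maxterm $T'$ of $f$ with $T'\subseteq T$ and hence $T'\cap X=\emptyset$, so $f(X)=0$ by Proposition~\ref{prop:evaluation-characterisation}(2) again; this establishes $f\leq g$.

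I do not expect any serious obstacle here, since all the content is carried by Proposition~\ref{prop:evaluation-characterisation}. The only point needing care is the bookkeeping with complements in the maxterm case — translating $T'\cap\compl T=\emptyset$ into $T'\subseteq T$ and back — together with the implicit finiteness-of-support convention that keeps these complements well-behaved.
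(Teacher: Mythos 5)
Your proof is correct. There is nothing in the paper to compare it against: the paper does not prove Proposition~\ref{prop:entailment-characterisation} at all, recording it as folklore and deferring to \cite{DasStra:16:On-Linea:uk} and \cite{crama2011boolean}. Your reduction to Proposition~\ref{prop:evaluation-characterisation} is the standard argument one would expect those references to contain, and the two delicate points are handled properly: the passage from $f\leq g$ to its contrapositive form ``$g(X)=0$ implies $f(X)=0$'' is legitimate precisely because $f$ and $g$ are \emph{total} Boolean functions (this would fail for the graph relations introduced later in the paper), and the translation between $T'\cap\compl{T}=\emptyset$ and $T'\subseteq T$ is exactly the complement bookkeeping the paper's own definition of maxterm already presupposes.
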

	
	Both results are folklore, but proofs may be found in, e.g., \cite{DasStra:16:On-Linea:uk}, along with several examples.	
	Naturally, the book \cite{crama2011boolean} also constitutes good reference material.
	
		
		\subsection{Relation webs}
		Now we introduce a useful abstraction of Boolean formulas, that relates them (and their semantics) to the graph theoretic notions we introduced earlier. 

		\begin{definition}
			[Least common connectives and relation webs]
			For a linear formula $A$ containing (distinct) variables $x$ and $y$, the \emph{least common connective} (lcc) of $x$ and $y$ in $A$ is the main connective, $\dis$ or $\con$, of the smallest subformula of $A$ containing both $x$ and $y$.
			
			The \emph{relation web}, or simply \emph{web}, of $A$, written $\web(A)$, is the graph defined as follows:
			\begin{itemize}
				\item The nodes of $\web(A)$ are the variables of $A$.
				\item The edges of $\web (A)$ are those $\{x,y\}$ where the lcc of $x$ and $y $ is $\con$ in $A $.
			\end{itemize}
		\end{definition}
	
	Relation webs are also known as `co-occurrence' graphs, though we follow the nomenclature from structural proof theory here.
	
	\begin{example}
		\label{ex:formula-web-5}
The formula $((v \vee (w\wedge x)  ) \vee y)\wedge z $ has the following relation web:
\begin{equation}
\label{eqn:fivegraphexampleweb}
\FiveGraphCirc{z,v,w,x,y}rrrrgggrgg
\end{equation}
Notice that we can also see the web as drawing a `red edge' between variables whose lcc is $\con$ and a `green edge' between variables whose lcc is $\dis$.

The graph above is also the relation web of, e.g., $z \con (((x\wedge w) \dis y) \dis v )$.
	\end{example}
		
		In fact, webs represent precisely the quotients of linear formulas modulo associativity and commutativity of $\dis$ and $\con$:
		\begin{fact}
			\label{fact:web-ac}
			$\web(A) = \web(B)$ if and only if $A $ and $B$ are equivalent modulo the following equational theory:\footnote{As usual, equational theories operate under `deep inference', i.e.\ if $A \eqac A' $ then $B[A] \eqac B[A']$ for any formula context $B[\cdot]$.}
			\[
\begin{array}{rcl}
			A \dis B &\eqac & B \dis A \\
			A \con B &\eqac & B \con A 
\end{array}
\qquad
\begin{array}{rcl}
A\dis(B\dis C) & \eqac & (A \dis B) \dis C \\
A\con(B\con C) & \eqac & (A \con B) \con C 
\end{array}
			\]
		\end{fact}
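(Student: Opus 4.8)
The plan is to prove the two directions separately, with the reverse implication ($A \eqac B \Rightarrow \web(A) = \web(B)$) being a routine invariance check and the forward implication carrying the real content. The key reformulation throughout is that the lcc of two variables $x,y$ in a linear formula $A$ is exactly the connective labelling the least common ancestor of the leaves $x$ and $y$ in the syntax tree of $A$; thus $\web(A)$ records, for each pair of leaves, the label of their least common ancestor.

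For the reverse direction it suffices to show that each of the four generating equations, applied in an arbitrary context, preserves the label of the least common ancestor of every pair of leaves; since $\eqac$ is the congruence (deep-inference) closure of these equations, web-equality then follows by reflexivity, symmetry and transitivity. Commutativity only swaps the two immediate subtrees of a node, leaving every least common ancestor and its label unchanged. For associativity, say $A \dis (B \dis C) \eqac (A \dis B) \dis C$, every pair of leaves drawn from two distinct blocks among $A,B,C$ has its least common ancestor labelled $\dis$ both before and after the rewrite, while pairs internal to a single block are untouched; the case for $\con$ is symmetric. Finally, applying an equation inside a context $D[\cdot]$ only affects pairs lying entirely within the rewritten subformula, since any pair straddling the hole has its least common ancestor strictly above the hole, hence is unaffected. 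So $\eqac$ preserves all lccs, giving $\web(A) = \web(B)$.

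For the forward direction I would argue by strong induction on the number of variables $n$ of $A$ (equivalently on $|G|$, where $G := \web(A) = \web(B)$). The base case $n = 1$ is immediate, as both $A$ and $B$ must be the unique variable labelling the single vertex of $G$. For $n \geq 2$ the key structural observation is a dichotomy: exactly one of $G$ and $\compl G$ is disconnected. That at most one of them can be disconnected is a standard fact about a graph and its complement; that at least one is disconnected follows from the main connective of $A$, since if $A = A' \dis A''$ then no edge of $G$ crosses between $\Ver(A')$ and $\Ver(A'')$ (so $G$ is disconnected), while a top-level $\con$ dually forces $\compl G$ to be disconnected. The main lemma I would isolate is then a canonical decomposition: if $G$ is disconnected with connected components $G_1, \dots, G_k$ ($k \geq 2$), then any linear formula $F$ with $\web(F) = G$ satisfies $F \eqac F_1 \dis \cdots \dis F_k$ with $\web(F_j) = G_j$. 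Indeed, the main connective of $F$ must be $\dis$, for otherwise $F$ is a top-level conjunction and $\web(F)$ is a join of nonempty graphs, hence connected, contradicting $k \geq 2$; flattening the top-level disjunction writes $F \eqac F_1 \dis \cdots \dis F_m$ with each $F_j$ a variable or of main connective $\con$, so each $\web(F_j)$ is connected (a single vertex, or a join) while distinct blocks share no edge, forcing the $\Ver(F_j)$ to be precisely the connected components of $G$. Crucially this decomposition refers only to $G$, so applying it to both $A$ and $B$ yields $A \eqac A_1 \dis \cdots \dis A_k$ and $B \eqac B_1 \dis \cdots \dis B_k$ with $\web(A_j) = G_j = \web(B_j)$; the induction hypothesis gives $A_j \eqac B_j$, and congruence of $\eqac$ yields $A \eqac B$. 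The co-disconnected case is handled by the exactly dual argument, with $\dis$ replaced by $\con$ and the components of $G$ replaced by those of $\compl G$.

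The step I expect to be the main obstacle is establishing that the top-level decomposition of a formula is \emph{forced} by the web, namely that the maximal top-level disjuncts of $F$ coincide with the connected components of $\web(F)$ (and dually for conjuncts and $\compl{\web(F)}$). This is where the interplay between the syntactic flattening and the graph-theoretic components must be pinned down carefully, and it is exactly the crux that makes the underlying cotree, and hence the formula up to associativity and commutativity, an invariant of the web.
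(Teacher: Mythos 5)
Your proposal is correct, and it actually does more than the paper, whose entire justification is the single remark that the fact ``is well-known and may be proved by induction on the length of a $\eqac$-derivation.'' That hint only covers the easy (soundness) direction, $A \eqac B \Rightarrow \web(A)=\web(B)$, which is exactly your reverse direction: your lcc-preservation check for the four generators plus closure under contexts is the induction on derivation length the paper has in mind. The forward (completeness) direction, which the paper leaves entirely to folklore, is the part you supply in full: the dichotomy that for $n\geq 2$ exactly one of $G$ and $\compl G$ is disconnected, and the forcing lemma that the maximal top-level disjuncts of any linear formula for $G$ must have precisely the connected components of $G$ as their variable sets (dually for conjuncts and $\compl G$). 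This is the standard uniqueness-of-cotree argument, and it is sound as written: the flattening step is legitimate modulo associativity, a join of nonempty graphs is connected, a partition into connected pieces with no crossing edges is the component partition, and the induction hypothesis applies because all $k\geq 2$ components are strictly smaller. Your argument is also a small-scale precursor of the modular-decomposition machinery the paper develops later (Proposition~\ref{prop:mod-decomp} and Definition~\ref{dfn:dec-tree}): what your approach buys is a self-contained proof of the Fact; what the paper buys by deferring it is brevity, at the cost of leaving the genuinely nontrivial direction unproved.
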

This fact is well-known and may be proved by induction on the length of a $\eqac$-derivation.
		The graph classification of relation webs is well-known:
		\begin{fact}
			Any relation web is a cograph. Conversely, any cograph (with nodes in $ \Var$) is the web of some linear formula.
		\end{fact}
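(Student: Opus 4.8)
The plan is to prove both directions by induction, exploiting the key correspondence that, at the level of webs, $\dis$ realises \emph{disjoint union} and $\con$ realises \emph{join} (all cross-edges present). I would first isolate this as a lemma: for a linear formula $A = B\dis C$, linearity forces $\Ver(\web(B))\cap\Ver(\web(C)) = \emptyset$, so for $x$ a variable of $B$ and $y$ a variable of $C$ the smallest subformula containing both is $A$ itself, whence their lcc is $\dis$, i.e.\ a non-edge; whereas for $x,y$ both occurring in $B$ the smallest common subformula (and hence the lcc) is the same computed in $B$ as in $A$. Therefore $\web(B\dis C)$ is the disjoint union of $\web(B)$ and $\web(C)$, and dually $\web(B\con C)$ is their join. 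This lemma is the engine for both directions.

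For the forward direction (every web is a cograph) I would induct on the structure of the linear formula $A$. The base case $A = x$ yields a single node, which is trivially a cograph. For $A = B\dis C$ (resp.\ $A = B\con C$), the induction hypothesis gives that $\web(B)$ and $\web(C)$ are cographs, and by the lemma $\web(A)$ is their disjoint union (resp.\ join); so it suffices to note that these operations preserve cographs. Using Fact~\ref{prop:cograph-iff-p4free} this is immediate: a $\pfour$ is connected, so no induced $\pfour$ can straddle the two sides of a disjoint union; and since $\pfour$ is self-complementary, the join case follows by passing to complements.

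For the converse (every cograph is a web) I would induct on $|G|$. If $|G| = 1$, say $\Ver(G) = \{x\}$, take $A = x$. If $|G|\geq 2$, the cograph hypothesis supplies that $G$ is disconnected or co-disconnected. In the disconnected case, write $G$ as the disjoint union of its connected components $G_1,\dots,G_k$ with $k\geq 2$; each $G_i$ is an induced subgraph of $G$, hence a cograph of strictly smaller size, so by the induction hypothesis $G_i = \web(A_i)$ for a linear $A_i$ whose variable set is $\Ver(G_i)$. Since these variable sets are pairwise disjoint, $A_1\dis\cdots\dis A_k$ is linear, and by the lemma its web is exactly $G$. The co-disconnected case is dual, applying the argument to the components of $\compl{G}$ and combining with $\con$.

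Honestly there is no deep obstacle here; the one point demanding care is the lcc computation underlying the lemma, which is precisely where linearity is indispensable — without each variable occurring exactly once, ``the smallest subformula containing both $x$ and $y$'' is not well-behaved and the clean disjoint-union/join dictionary breaks down. A secondary subtlety worth flagging explicitly is that a single node is neither disconnected nor co-disconnected, so it must be treated as the base case of the converse rather than arising from the decomposition step; relatedly, termination of the converse induction relies on each component being strictly smaller, which holds because $k\geq 2$ in any genuine (co-)disconnection.
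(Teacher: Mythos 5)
Your proposal is correct and follows essentially the same route the paper sketches: induction on the structure of the formula (via the disjoint-union/join dictionary for $\dis$ and $\con$) for one direction, and induction on graph size using closure of cograph-ness under induced subgraphs for the converse. The paper leaves this as a one-line sketch, and your write-up is a faithful, correctly detailed elaboration of exactly that argument, including the right care about linearity and the singleton base case.
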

	This is readily proved by induction on the size of graphs, using either the underlying formula structure of a web or the fact that cograph-ness is closed under taking induced subgraphs.
	
	Notice that the above fact also gives us another characterisation of relation webs in terms of forbidden induced subgraphs:
	\begin{corollary}
		[of Proposition~\ref{prop:cograph-iff-p4free}]
		Any relation web is a $\pfour$-free. Conversely, any $\pfour$-free graph (with nodes in $ \Var$) is the web of some linear formula.
	\end{corollary}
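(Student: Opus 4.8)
The plan is to prove this purely as a substitution, since the statement is flagged as a corollary \emph{of} Proposition~\ref{prop:cograph-iff-p4free}: the two biconditionals I need are the preceding Fact (relating relation webs to cographs) and the characterisation ``cograph $\iff$ $\pfour$-free'' supplied by that proposition. Composing them will yield both directions with no further graph-theoretic work.

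For the forward direction I would argue: let $G = \web(A)$ for some linear formula $A$. By the preceding Fact, $G$ is a cograph. By Proposition~\ref{prop:cograph-iff-p4free}, every cograph is $\pfour$-free, so $G$ is $\pfour$-free. For the converse I would reverse the chain: let $G$ be a $\pfour$-free graph with nodes in $\Var$. By Proposition~\ref{prop:cograph-iff-p4free}, $G$ is a cograph, and then by the converse half of the preceding Fact, $G = \web(A)$ for some linear formula $A$. This exhausts both claims.

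The honest assessment is that there is no real obstacle here: all the substance lives in the two results being invoked. The Fact itself is the load-bearing statement, proved by induction on $|G|$ using the recursive $\dis$/$\con$ structure of webs together with closure of cograph-ness under induced subgraphs; and Proposition~\ref{prop:cograph-iff-p4free} is the standard $\pfour$-freeness characterisation of cographs. Given those, the corollary is immediate, and the only thing I would take care to state cleanly is that the ``with nodes in $\Var$'' proviso is inherited verbatim from the Fact, ensuring the linear formula $A$ produced in the converse direction is over the intended variable set.
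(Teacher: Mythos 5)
Your proposal is correct and takes essentially the same route as the paper: the corollary is obtained there, without further argument, by composing the Fact that relation webs are exactly the cographs (with nodes in $\Var$) with the equivalence between cographs and $\pfour$-free graphs given by Proposition~\ref{prop:cograph-iff-p4free}. Your observation that the ``with nodes in $\Var$'' proviso is simply inherited from the Fact is also consistent with the paper's treatment.
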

This alternative characterisation makes it abundantly clear that the relation webs really only span a small subset of all possible graphs: almost all graphs contain $\pfour$s, in a standard sense, by a simple counting argument:

\begin{remark}
	[Counting cographs]
	Appealing to the probabilistic method, note that the uniform distribution on graphs (of fixed size) is induced by independently assigning edges to pairs of variables with probability $\frac 1 2 $. 
	In this case, for any fixed four nodes, the chance that they do not form a $\pfour$ is some fixed $\varepsilon <1$. Thus for a graph with, say, $4n$ nodes, the chance there is no $\pfour$ is bounded above by $\varepsilon^n$ by partitioning the nodes into sets of $4$ (since $\pfour$-ness of disjoint sets are independent events).
	Therefore the class of cographs, as we increase the number of nodes, is sparse in the class of all graphs. 
\end{remark}

		As we have mentioned the point of this work is to study an extension of the notion of web to arbitrary graphs.
		In order to do so we will need characterisations of logical concepts in terms of graphs.
		Namely, we have the following results from \cite{DasStra:nolinearsys15,DasStra:16:On-Linea:uk}:		
		\begin{proposition}
		[Characterisation of maximal cliques and stables sets of webs]
		\label{prop:cliques-minterms-stable-maxterms}
		We have the following, for any linear formula $A$:
		\begin{enumerate}
		\item $S$ is a minterm of $A$ if and only if it is a maximal clique of $\web(A)$. 
		\item $T$ is a maxterm of $A$ if and only if it is a maximal stable set of $\web(A)$.
		\end{enumerate}
		\end{proposition}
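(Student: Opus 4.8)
The plan is to proceed by structural induction on the linear formula $A$, exploiting the fact that the two formula constructors act on webs as two standard graph operations. First I would record the following translation: writing $A = B \con C$ or $A = B \dis C$, linearity guarantees that $B$ and $C$ share no variables, so the lcc of any $x \in \Ver(B)$ and $y \in \Ver(C)$ is the main connective of $A$. Hence $\web(B \con C)$ is obtained from $\web(B)$ and $\web(C)$ by adding every edge between $\Ver(B)$ and $\Ver(C)$ (the \emph{join}), while $\web(B \dis C)$ is simply the disjoint union of $\web(B)$ and $\web(C)$. This translation is the structural backbone of the argument.

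Next I would isolate two elementary lemmas and match them term by term. On the graph side: in a join a set is a clique iff it is a clique in each part, so the maximal cliques are exactly the unions $K_B \cup K_C$ of maximal cliques of the two parts, whereas a stable set must lie entirely in one part, so the maximal stable sets of the join are the maximal stable sets of either part; the disjoint union satisfies the dual statements (maximal cliques lie in a single part, maximal stable sets are unions). On the semantic side, since $B$ and $C$ depend on disjoint variable sets, $(B \con C)(X) = 1$ iff both conjuncts are satisfied, so the minterms of $B \con C$ are precisely the unions $S_B \cup S_C$ of a minterm of $B$ and a minterm of $C$, while its maxterms are exactly the maxterms of $B$ together with the maxterms of $C$; dually for $B \dis C$. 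Comparing these lemmas, the inductive step matches minterms with maximal cliques and maxterms with maximal stable sets, the base case $A = x$ being immediate since the one-vertex web has unique maximal clique and stable set $\{x\}$, which is also the unique minterm and maxterm of $x$.

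To avoid duplicating the bookkeeping I would streamline by proving only statement (1) directly and deriving (2) by De Morgan duality: the dual formula $A^\partial$, obtained by swapping $\con$ and $\dis$, satisfies $A^\partial(X) = 1 - A(\compl X)$ (an easy induction), so the maxterms of $A$ are exactly the minterms of $A^\partial$; and since dualising the formula complements every lcc, we have $\web(A^\partial) = \compl{\web(A)}$, whose maximal cliques are precisely the maximal stable sets of $\web(A)$. Thus statement (2) for $A$ is nothing but statement (1) applied to $A^\partial$.

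The routine but delicate point — the one I expect to be the main obstacle — is the maxterm half of the semantic lemma: one must check that a minimal set $X$ with $(B \con C)(\compl X) = 0$ cannot straddle both variable blocks, using monotonicity to discard the redundant block, and that minimality of such an $X$ as a subset of $\Var$ coincides with its minimality as a maxterm of the relevant conjunct. Since the duality reduction above sidesteps exactly this computation, I would present the induction for (1) in full and obtain (2) as a one-line corollary.
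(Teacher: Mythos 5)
Your proposal is correct and takes essentially the same approach as the paper, which disposes of this proposition with exactly the routine structural induction on $A$ that you carry out in detail (the join/disjoint-union behaviour of webs of linear formulas matching the corresponding decomposition of minterms and maxterms over disjoint variable sets). Your derivation of part (2) from part (1) via De Morgan duality, $\web(A^\partial) = \compl{\web(A)}$, is a harmless streamlining rather than a genuinely different route.
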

		
		This result is proved by a routine induction on the structure of the formula $A$.
	 This gives us purely graph theoretic characterisations of evaluation and entailment, thanks to Propositions~\ref{prop:evaluation-characterisation} and \ref{prop:entailment-characterisation}, which will induce the graph logic we define in the next section.
	 Notice also that this means that distinct webs correspond to distinct Boolean functions, since two webs are the same just if they have the same maximal cliques and stable sets, just if they have the same minterms and maxterms, just if they compute the same Boolean function.
	 Since we also have that webs quotient formulas exactly by associativity and commutativity of $\dis$ and $\con$, cf.~Fact~\ref{fact:web-ac}, we also have the following well-known result:
	 
	 \begin{corollary}
	 	[E.g., \cite{DasStra:16:On-Linea:uk}]
	 	Read-once/linear formulas compute the same Boolean function if and only if they are equivalent modulo $\eqac$.
	 \end{corollary}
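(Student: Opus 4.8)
The plan is to route everything through relation webs. By Fact~\ref{fact:web-ac} the congruence $\eqac$ is captured exactly by equality of webs, so it suffices to prove that two read-once formulas $A,B$ compute the same Boolean function if and only if $\web(A) = \web(B)$, and then quote Fact~\ref{fact:web-ac} to translate web-equality back into $\eqac$.

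One direction is immediate. If $A \eqac B$ then $\web(A) = \web(B)$ by Fact~\ref{fact:web-ac}, and equal graphs have the same maximal cliques and maximal stable sets, whence by Proposition~\ref{prop:cliques-minterms-stable-maxterms} the formulas $A$ and $B$ share all minterms and maxterms; by Proposition~\ref{prop:evaluation-characterisation} the computed function is determined by its minterms (the output on $X$ is $1$ precisely when some minterm is contained in $X$), so $A$ and $B$ compute the same function. Alternatively, one can bypass the webs entirely here and simply note that each $\eqac$-step rewrites, under the deep-inference convention, by an instance of associativity or commutativity of $\max$ or $\min$, each of which preserves the computed function.

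For the converse, suppose $A$ and $B$ compute the same monotone function $f$. Since minterms and maxterms are defined purely in terms of $f$, the formulas $A$ and $B$ have identical minterms and identical maxterms. By Proposition~\ref{prop:cliques-minterms-stable-maxterms} this means $\web(A)$ and $\web(B)$ have the same maximal cliques. The key observation is that a graph is recoverable from its vertex set together with its family of maximal cliques: a pair $\{x,y\}$ is an edge if and only if $x$ and $y$ lie in a common maximal clique, since any edge is itself a clique and hence extends to a maximal one. It then remains to check that the two webs have the same vertices, i.e.\ that $A$ and $B$ have the same variables. This holds because a positive read-once formula depends on each of its variables: for a leaf labelled $v$, setting the sibling subformula at each $\con$ on the root-to-$v$ path to true and each sibling at a $\dis$ to false makes the whole output track $v$, so $f$ genuinely depends on $v$. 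Thus the variable set of each of $A,B$ equals the set of variables on which $f$ depends, so $\Ver(\web(A)) = \Ver(\web(B))$, and with matching maximal cliques we conclude $\web(A) = \web(B)$. Fact~\ref{fact:web-ac} then yields $A \eqac B$.

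The routine parts are the chains of equivalences supplied by Propositions~\ref{prop:evaluation-characterisation} and~\ref{prop:cliques-minterms-stable-maxterms}; the only genuinely load-bearing steps are the reconstruction of a graph from its maximal cliques and the matching of the two variable sets, which is where I would spend care, though both are elementary.
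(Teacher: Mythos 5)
Your proof is correct and follows essentially the same route as the paper: the paper's own (informal) justification is exactly the chain web-equality $\iff$ equal maximal cliques/stable sets $\iff$ equal minterms/maxterms $\iff$ equal Boolean function, combined with Fact~\ref{fact:web-ac}. The only difference is that you spell out two steps the paper leaves implicit --- reconstructing the edge set from the family of maximal cliques and matching the two vertex sets --- and both are handled correctly (indeed the vertex-set argument could be shortened by noting that the vertex set is just the union of the maximal cliques).
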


		\begin{ex}
			Revisiting Example~\ref{ex:formula-web-5},
			notice that the two formulas whose web is \eqref{eqn:fivegraphexampleweb} are equivalent modulo $\eqac$, so let us write $f$ for the Boolean function they compute. 
			Notice that $X = \{w,y,z \}$ contains the maximal clique $\{y,z\}$, and so $f(X)=1$.
			Dually, $Y = \{v,w,y \}$ is disjoint from the stable set $\{z \}$, and so $f(Y)=0$.
			Both of these facts are readily verifiable by computing the output of $f$ using the formulas from Example~\ref{ex:formula-web-5}.
\end{ex}
	
\begin{example}
	[Switch and medial]
	Here are two common rules from deep inference proof theory \cite{BrunTiu:01:A-Local-:mz}:
	\[
\begin{array}{rrcl}
	\switch : & x \con (y \dis z) & \to & (x\con y)  \dis z \\
	\medial: & (w\con x) \dis (y \con z) & \to & (w \dis y ) \con (x \dis z)
\end{array}
	\]
	In terms of relation webs, the two rules above induce the following action on graphs:
	\[
	\switch : \ 
	{\ThreeGraph{x, y,z}rrg} \ \ \to \ \ {\ThreeGraph{x,y,z}rgg }
	\qquad \qquad
	\medial : \ 
	\FourGraph{w,x,y,z}rggggr \ \ \to \ \ \FourGraph{w,x,y,z}rgrrgr 
	\]
	We may verify that these rules are indeed sound by checking that every maximal clique on the LHS has a subset that is a maximal clique on the RHS, under Propositions~\ref{prop:cliques-minterms-stable-maxterms} and \ref{prop:entailment-characterisation}. 
	For $\switch$ the two maximal cliques $\{x,y \}$ and $\{x,z\}$ are sent to $\{x,y\}$ and $\{z\}$ respectively.
	For $\medial$ the two maximal cliques $\{w,x\}$ and $\{y,z\}$ are sent to themselves.
	Notice that $\switch$ removes edges and reduces the size of cliques, whereas $\medial $ adds edges but does not increase the size of any maximal cliques, it only adds new ones, here $\{w,z\}$ and $\{x,y\}$.
	
	We may also verify soundness using the dual condition, that every maximal stable set of the RHS has a subset maximally stable in the LHS.
	The argument is similar and we leave the details to the reader.
\end{example}

Several other examples on viewing evaluation and entailment in the setting of relation webs can be found in \cite{DasStra:16:On-Linea:uk}.

		\section{Boolean graph logic}
		\label{sect:bgl}
		We will now consider arbitrary graphs, not just the cographs, and study the notions of evaluation and entailment induced on them by the characterisations of the previous section.
		We call the resulting framework \emph{Boolean Graph Logic} (BGL).

		We present the basic logic in the next subsection, give some properties of evaluation and entailment in BGL in Subsection~\ref{subsect:det-and-tot}.
		We present a case study of these concepts in action in Subsection~\ref{subsect:case-study}, and we conclude in Subsection~\ref{subsect:det-tot-pfour-free} by characterising the graphs which compute deterministic and total Boolean relations, i.e.\ the Boolean functions.
		
			We will only present the case of `linear' graphs here, where each variable is associated to at most one node.
		This is sufficiently general for all of our theoretical development, but we nonetheless give an extension to the nonlinear case in Section~\ref{sect:proof-system}.
		
		\subsection{Evaluation and entailment}
	\label{subsect:evaluation-entailment}
		
		While formulas compute Boolean functions, our extension to graphs has a natural semantics based on Boolean \emph{relations}.
		
		\begin{definition}
			[Evaluation]
			We construe graphs (with vertices in $\Var$) as binary relations $\pow(\Var) \times \{0,1\}$, defined as follows:
			\begin{itemize}
								\item $G(X,0)$ if $\exists T \in \ms(G) . X \cap T = \emptyset$.
				\item $G(X,1)$ if $\exists S \in \mc(G) . X \supseteq S$.
			\end{itemize}
		\end{definition}
		Notice that, under the results of the previous section, we have that, for any linear formula $A$ and $X\subseteq \Var$:
		\begin{itemize}
			\item $\web(A)(X,0)$ iff $A(X) = 0$; and,
			\item $\web(A)(X,1)$ iff $A(X) = 1$.
		\end{itemize}
		Thus the notion of evaluation above is indeed conservative over usual evaluation on formulas.
		
		One immediate observation is that while evaluation is total and deterministic for cographs, since they are the webs of linear formulas which compute Boolean functions, this is no longer necessarily the case for arbitrary graphs.
				In general they compute Boolean relations which might be nondeterministic, partial, or both.
		Let us consider an example of evaluation being nonfunctional, known already from \cite{DasStra:16:On-Linea:uk}.
		
		\begin{example}
			[Evaluating $\pfour$]
			Let us again consider the following graph $G$ that is isomorphic to the $\pfour$:
			\[
			\FourGraph{w,x,y,z}rrggrg
			\]
			We have that $G(\{w,y \},1)$, since $\{w,y\}\in \mc(G)$, and $G(\{x,y \},0)$, since $\{x,y\}$ is disjoint from $\{w,z\}\in \ms(G)$. It is not difficult to see that $G$ is indeed functional on these assignments, i.e.\ it is not the case that $G(\{w,y\},0)$ or $G(\{x,y\},1)$.
			
			On the other hand, we have that both $G(\{w,x\},0)$ and $G(\{w,x\},1)$, since $\{w,x\} \in \mc(G)$ and $\{w,x\}$ is disjoint from $\{y,z\}\in \ms(G)$.
			So $G$ is not deterministic on this assignment.
			Furthermore, we have that neither $G(\{y,z\},0)$ nor $G(\{y,z\},1)$, by inspection of the maximal cliques and maximal stable sets of $G$. So $G$ is not total on this assignment.
		\end{example}

		
		Entailment in BGL is similarly induced by our previous characterisations:
		
		\begin{definition}
			[Entailment]
			We define binary relations $\entails \con$ and $\entails \dis$ on graphs (with vertices in $\Var$) as follows:
			\begin{itemize}
				\item $G \entails \con H$ if $\forall S \in \mc (G) . \exists S'\in \mc(H) . S'\subseteq S$.
				\item $G \entails \dis H$ if $\forall T \in \ms (H). \exists T'\in \ms(G) . T'\subseteq T$. 
			\end{itemize}
		\end{definition}
		
		Again, the richer setting of arbitrary graphs means that previously equivalent notions of entailment no longer coincide, which is why we distinguish $\entails \dis$ and $\entails \con$.
		Let us consider the following example, which partly appeared in \cite{DasStra:16:On-Linea:uk}, to highlight the difference between the two forms of entailment.
		
		\begin{example}
			[$5$-cycle and $5$-path]
			We have the following relationships between  $P_5$ and $C_5$:
			\[
			\FiveGraphCirc{v,w,x,y,z}rgggrggrgr
			\quad \begin{array}{c} \not \entails \dis \\
			\entails \con \\
			\underset{\vee}{\Leftarrow} \\
			\not\underset{\wedge}{\Leftarrow} \end{array} \quad 
			\FiveGraphCirc{v,w,x,y,z}rggrrggrgr
			\]
			The argugments are as follows:
			\begin{itemize}
				\item $\not \entails \dis$ since $\{v,x\}$ and $\{z,x\}$ are maximally stable on the RHS, but on the LHS only the maximal stable set $\{v,x,z \}$ concerns these three nodes.
				\item $\entails \con$ since every maximal clique of the LHS is just an edge which is preserved in the RHS.
				\item $	\underset{\vee}{\Leftarrow} $ by sending $\{v,x,z\} $ to $\{v,x \}$ or $\{z,x\}$, and every other maximal stable set of the LHS to itself in the RHS.
				\item $\not\underset{\wedge}{\Leftarrow} $ since the maximal clique $\{v,z\}$ on the RHS has no subset that is a maximal clique on the LHS.
			\end{itemize}
		\end{example}
		
		Despite the fact that we now have two notions of evaluation and two notions of entailment, they are still `compatible' in a natural sense.
		Before proving this, we make the following remark to simplify proofs throughout this work.

\begin{remark}
	[Duality]
	\label{rmk:duality}
	Many of the arguments in this work follow by `duality'. By this we mean not only that an argument is similar to a previous one, but further that there is a formal reduction. 
	Such reduction is exhibited by the following facts:
	\begin{itemize}
		\item $G(X,1) $ if and only if $\compl G (\compl X , 0)$.
		\item $G \entails \con H$ if and only if $\compl H \entails \dis \compl G$.
	\end{itemize}
The proofs are routine and left to the reader.
\end{remark}
		
		\begin{theorem}
			[Adequacy]
			\label{thm:sound-complete}
			For graphs $G$ and $H$, we have:
			\begin{equation}
\label{eqn:sound-complete-con}
			G \entails{\con} H \quad \text{iff} \quad \forall X . ( \text{ if } G(X,1) \text{ then } H(X,1) )
			\end{equation}
			\begin{equation}
			\label{eqn:sound-complete-dis}
			G \entails{\dis} H \quad \text{iff} \quad \forall X . (\text{ if }H(X,0) \text{ then }G(X,0))
			\end{equation}
		\end{theorem}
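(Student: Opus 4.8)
The plan is to prove the conjunctive equivalence~\eqref{eqn:sound-complete-con} directly from the definitions, unfolding the evaluation relation, and then to obtain the disjunctive equivalence~\eqref{eqn:sound-complete-dis} for free by the duality of Remark~\ref{rmk:duality}. The whole statement is essentially a repackaging of the definitions once $G(\cdot,1)$ and $H(\cdot,0)$ are spelled out in terms of $\mc$ and $\ms$, so I do not expect a genuine obstacle; the one point demanding care is the choice of test assignment in the two backward directions.

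For~\eqref{eqn:sound-complete-con}, the left-to-right direction is immediate: assuming $G \entails{\con} H$ and $G(X,1)$, I would unfold the latter to obtain some $S \in \mc(G)$ with $S \subseteq X$; the entailment then supplies $S' \in \mc(H)$ with $S' \subseteq S \subseteq X$, whence $H(X,1)$. The right-to-left direction is where the witnessing assignment matters: given an arbitrary $S \in \mc(G)$, I would instantiate the hypothesis at $X := S$. Since $S \in \mc(G)$ and trivially $S \subseteq S$, we have $G(S,1)$, hence $H(S,1)$ by hypothesis, and unfolding this yields exactly some $S' \in \mc(H)$ with $S' \subseteq S$. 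As $S$ was arbitrary, this is precisely $G \entails{\con} H$.

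For~\eqref{eqn:sound-complete-dis} I would avoid repeating the argument and instead reduce to~\eqref{eqn:sound-complete-con} using the two facts of Remark~\ref{rmk:duality}. From $A \entails{\con} B \text{ iff } \compl B \entails{\dis} \compl A$, taking $A = \compl H$ and $B = \compl G$ gives $G \entails{\dis} H \text{ iff } \compl H \entails{\con} \compl G$. Applying the already-proved~\eqref{eqn:sound-complete-con} to the pair $(\compl H, \compl G)$ rewrites the right-hand side as $\forall X.\,(\compl H(X,1) \Rightarrow \compl G(X,1))$. Now invoking $\compl H(X,1) \text{ iff } H(\compl X,0)$ and $\compl G(X,1) \text{ iff } G(\compl X,0)$ (the first duality fact, using $\compl{\compl{(\cdot)}} = (\cdot)$), this becomes $\forall X.\,(H(\compl X,0) \Rightarrow G(\compl X,0))$, and since complementation is a bijection of $\pow(\Var)$, re-indexing $X \mapsto \compl X$ yields exactly $\forall X.\,(H(X,0) \Rightarrow G(X,0))$, as required.

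The only subtlety, as noted, is selecting the correct test assignment in each backward implication so that the existential witness produced by evaluation is precisely the subset the entailment demands. On the conjunctive side this is the clique $S$ itself; on the disjunctive side the duality reduction handles it automatically, but a self-contained argument runs symmetrically by testing at $X := \Var \setminus T$ for each maximal stable set $T \in \ms(H)$: this $X$ is disjoint from $T$, so $H(X,0)$ holds, and the resulting $G(X,0)$ delivers some $T' \in \ms(G)$ disjoint from $X$, i.e.\ $T' \subseteq T$. Either route completes the proof.
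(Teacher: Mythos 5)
Your proposal is correct and takes essentially the same route as the paper: you prove \eqref{eqn:sound-complete-con} directly with exactly the same key move (testing the hypothesis at $X := S$ for an arbitrary $S \in \mc(G)$ in the backward direction), and you dispatch \eqref{eqn:sound-complete-dis} by duality, which is precisely what the paper does, except that you spell out the reduction via Remark~\ref{rmk:duality} (and a direct dual argument testing at the complement of $T$) where the paper simply says the argument is dual.
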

	\begin{proof}
We prove only \eqref{eqn:sound-complete-con}, the argument for \eqref{eqn:sound-complete-dis} being dual.

For the left-right implication, suppose $G \entails \con H$ and $G(X,1)$. Then there is some $S \in \mc (G)$ that is contained in $X$. Since $G \entails \con H$, there is some $S' \in \mc (H)$ s.t.\ $S' \subseteq S$ and so also $S' \subseteq X$, so we indeed have that $H(X,1)$.

For the right-left implication, suppose the RHS and let $S \in \mc (G)$. Clearly we have that $G(S,1)$, since $S\subseteq S$, and so by assumption we have that $H(S,1)$. 
So there is some $S' \in \mc (H)$ with $S'\subseteq S$.
Since the choice of $S$ was arbitrary, we may indeed conclude that $G \entails \con H$, as required.
		\end{proof}
	
	The above result shows that evaluation essentially yields a Tarskian-style semantics for entailment by reduction to classical material implication.

\subsection{Some properties of evaluation and entailment}
\label{subsect:det-and-tot}
As we saw in the previous subsection, there are graphs that compute nonfunctional relations, for instance the $\pfour$.
However we also have examples of graphs that are deterministic but not total (i.e.\ partial Boolean functions), and total but not deterministic (i.e.\ Boolean multifunctions), as we see in the following two examples.


		\begin{example}
			[Determinism of the Bull]
			\label{ex:bull-deterministic}
			While $\pfour$ is not deterministic, it has an extension that is deterministic by adding a `settling' node:
			\[
			\begin{tikzpicture}
			\node[vertex] (v0) at (0.6667,2) {$v$};
			\node[vertex]  (v1) at (0, 1.3333) {$x$};
			\node[vertex]  (v2) at (1.3333, 1.3333) {$y$};
			\node[vertex] (v3) at (0, 0) {$w$};
			\node[vertex]  (v4) at (1.3333, 0) {$z$};
			\draw[r]  (v1) -- (v2) ;
			\draw[r]  (v1) -- (v3);
			\draw[g] (v1) -- (v4);
			\draw[g]  (v2) -- (v3);
			\draw[r] (v2) -- (v4);
			\draw[g]  (v3) -- (v4);
			
			\draw[r] (v0) -- (v1);
			\draw[r] (v0) -- (v2);
			\draw[g] (v0) -- (v3);
			\draw[g] (v0) -- (v4);
			\end{tikzpicture}
			\]
			The new node $v$ is called the `nose of the bull', which is particularly important in so-called `prime graphs', that we discuss in Section~\ref{sect:moddecomp}.  

The Bull computes a deterministic relation, in the sense that it never evaluates to both $0$ and $1$. We leave it to the reader to verify the cases, but this fact is also an immediate consequence of Proposition~\ref{prop:det-iff-cis} later in this section.

The Bull does not compute a total relation since, by taking the assignment $\{v,x,y\}$.
		\end{example}

		\begin{example}
			[Totality of the 5-cycle]
			\label{ex:c5-total}	
			Let us consider the $5$-cycle:
			\[
			\FiveGraphCirc{v,w,x,y,z}rggrrggrgr
			\]
			Notice that $C_5 $ has the special property that it is isomorphic to its dual, i.e.\ $C_5 \cong \compl C_5$.
			Its maximal cliques and maximal stable sets are just pairs.
			
			In fact, the $C_5 $ computes a total Boolean relation.
			To see this, let us consider the possible assignments $X$: if $|X| \leq 2$ then there is always a maximal stable set disjoint from $X$, whereas if $|X|\geq 3$ then there is always a maximal clique contained in $X$.
			
			However, the $C_5$ is not deterministic: for instance, writing $X = \{v,x,z\}$, we have that $C_5 (X,1)$ since $X$ contains the maximal clique $\{v,z\} $, but also $C_5(X,0)$ since $X$ is disjoint from the maximal stable set $\{w,y\}$.
		\end{example}
		
		We point out that the relational properties of totality and determinism behave well under duality:
		\begin{observation}
		 $G$ is deterministic (or total) if and only if $\compl G$ is deterministic (respectively total).
		\end{observation}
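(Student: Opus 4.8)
The plan is to reduce everything to the duality facts recorded in Remark~\ref{rmk:duality}, since both determinism (no assignment $X$ has both $G(X,0)$ and $G(X,1)$) and totality (every $X$ has at least one of $G(X,0)$, $G(X,1)$) are stated purely in terms of the evaluation relation. So it suffices to understand how passing from $G$ to $\compl G$ transforms that relation. First I would write down both halves of the duality. Remark~\ref{rmk:duality} gives $G(X,1)$ iff $\compl G(\compl X,0)$; applying this to $\compl G$ in place of $G$ (and using $\compl{\compl G}=G$, $\compl{\compl X}=X$, then substituting $X$ for $\compl X$) yields the dual statement $G(X,0)$ iff $\compl G(\compl X,1)$. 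Thus the map $X\mapsto \compl X$ swaps the two outputs: it converts a $1$-evaluation of $G$ at $X$ into a $0$-evaluation of $\compl G$ at $\compl X$, and a $0$-evaluation of $G$ at $X$ into a $1$-evaluation of $\compl G$ at $\compl X$. Here complementation of the assignment is taken within $\Ver(G)$, which is harmless since evaluation depends only on $X\cap\Ver(G)$, as the sets in $\mc(G)$ and $\ms(G)$ are subsets of $\Ver(G)$.

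For the determinism equivalence I would show that witnesses of nondeterminism correspond under this complementation. Suppose $\compl G(Y,0)$ and $\compl G(Y,1)$ both hold. Applying the two dualities with $X=\compl Y$ gives $G(\compl Y,1)$ (from $\compl G(Y,0)$) and $G(\compl Y,0)$ (from $\compl G(Y,1)$), so $\compl Y$ witnesses nondeterminism of $G$. Hence if $G$ is deterministic then so is $\compl G$; the converse is the same argument with $G$ and $\compl G$ interchanged, invoking $\compl{\compl G}=G$.

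For totality the argument is entirely analogous. Given any $Y$, totality of $G$ applied to the assignment $\compl Y$ yields $G(\compl Y,0)$ or $G(\compl Y,1)$; the two dualities translate these into $\compl G(Y,1)$ or $\compl G(Y,0)$ respectively, so $\compl G$ is total at $Y$. As $Y$ was arbitrary, $\compl G$ is total whenever $G$ is, and the converse follows by symmetry. I do not expect any genuine obstacle: once the dual of the Remark~\ref{rmk:duality} identity is written out, both equivalences are immediate bookkeeping. The only point requiring a moment's care is the meaning of $\compl X$ for the assignment, which is settled by observing that evaluation is insensitive to vertices outside $\Ver(G)$.
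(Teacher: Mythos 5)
Your proof is correct and is essentially the paper's own argument: the paper dispatches this observation in one line (``This follows immediately from Remark~\ref{rmk:duality} by dualising assignments''), and your proposal is exactly that argument written out in full, including the dual identity $G(X,0) \iff \compl G(\compl X,1)$ and the transfer of nondeterminism/totality witnesses via $X \mapsto \compl X$.
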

		This follows immediately from Remark~\ref{rmk:duality} by dualising assignments.

		One interesting observation is that the class of deterministic graphs coincides with a well-studied class in graph theory, namely the \emph{CIS} graphs, where every maximal clique and maximal stable set intersect \cite{CIS-report}.
		
		\begin{definition}
			[CIS graphs]
			A graph $G$ is \emph{CIS} if $\forall S \in \mc(G). \forall S \in \ms(G). S \cap T \neq \emptyset$
		\end{definition}
		
		Notice that cliques and stable sets may intersect at most once, so the above definition is equivalent to requiring that every maximal clique and maximal stable set have singleton intersection.
		
		It is not hard to see the following:
		\begin{proposition}
			\label{prop:det-iff-cis}
			A graph $G$ is deterministic if and only if it is CIS.
		\end{proposition}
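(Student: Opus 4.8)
The plan is to prove both directions by contraposition, exploiting the fact that the witnesses for nondeterminism and for failure of the CIS property are essentially the same object: a maximal clique and a maximal stable set that are disjoint. Recall that $G$ being \emph{deterministic} means there is no assignment $X$ with both $G(X,0)$ and $G(X,1)$, and that by definition $G(X,1)$ holds exactly when some $S \in \mc(G)$ satisfies $S \subseteq X$, while $G(X,0)$ holds exactly when some $T \in \ms(G)$ satisfies $X \cap T = \emptyset$. So both failure conditions reduce directly to statements about the interaction of $\mc(G)$ and $\ms(G)$.

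For the direction that CIS implies deterministic, I would argue contrapositively. Suppose $G$ is not deterministic, witnessed by some $X$ with $G(X,1)$ and $G(X,0)$. Unpacking the definitions gives a maximal clique $S \subseteq X$ and a maximal stable set $T$ disjoint from $X$. Since $S \subseteq X$ and $T \cap X = \emptyset$, we immediately obtain $S \cap T = \emptyset$, so $G$ fails the CIS condition.

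For the converse, that deterministic implies CIS, the key step is a good choice of assignment. Suppose $G$ is not CIS, so there are $S \in \mc(G)$ and $T \in \ms(G)$ with $S \cap T = \emptyset$. Taking the assignment $X := S$ itself, we have $G(S,1)$ trivially (as $S \subseteq S$), and $G(S,0)$ precisely because $S \cap T = \emptyset$. Hence $G$ evaluates $S$ to both $0$ and $1$, so $G$ is not deterministic.

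Both directions are short, so I do not anticipate a genuine obstacle; the only mild subtlety is recognising, in the second direction, that the single assignment $X = S$ simultaneously witnesses both outputs. One could alternatively derive one implication from the other via the self-duality recorded in Remark~\ref{rmk:duality}, since determinism and CIS-ness are both preserved under passing to $\compl G$ (the roles of $\mc$ and $\ms$ simply swap), but the direct argument above is cleaner and needs no appeal to duality.
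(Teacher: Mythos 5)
Your proof is correct and follows essentially the same route as the paper: both directions by contraposition, with the key observation that taking the assignment $X = S$ for a maximal clique $S$ disjoint from some maximal stable set $T$ yields both $G(S,1)$ (as $S \subseteq S$) and $G(S,0)$ (as $S \cap T = \emptyset$). In fact your write-up attaches the justifications to the correct evaluations, whereas the paper's own proof inadvertently swaps them ($G(S,0)$ ``since $S \subseteq S$''), so your version is the cleaner statement of the same argument.
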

	\begin{proof}
		For the left-right implication we prove the contrapositive. Suppose $S\in \mc (G) $ and $T \in \ms(G)$ such that $S\cap T = \emptyset$. Then $G(S,0)$ since $S\subseteq S$ and also $G(S,1)$ since $S\cap T = \emptyset$.
		
		For the right-left implication we also prove the contrapositive. Suppose $G(X,0)$ and $G(X,1)$ for some assignment $X$. Thus there is some $S \in \mc (G)$ and $T \in \ms(G)$ with $S \subseteq X$ and $ T \cap X = \emptyset$, and so $S \cap T = \emptyset$.
	\end{proof}

		As we saw in Example~\ref{ex:bull-deterministic}, adding a `settling node' was a way to force the CIS property for the $P_4$. 
		However, every $P_{4}$ configuration in a graph being `settled' does not suffice to conclude that the graph is CIS. Indeed, consider the following example from \cite{ABG:CIStechreport06}:
		\begin{align*}
		\begin{minipage}{.4\textwidth}
		\begin{tikzpicture}
		\tikzstyle{g} = [draw,thick,dotted,-,green]
		\tikzstyle{r} = [draw,thick,-,red]
		\node (v0) at (-.25,-.25) {$x'$};
		\node (v1) at (0.5,0.5) {$x$};
		\node (v2) at (1.5,0.5) {$y$};
		\node (v3) at (2.25,-.25) {$y'$};
		\node (v4) at (1,1.5) {$z$};
		\node (v5) at (1,2.5) {$z'$};
		\draw[r] (v0) -- (v1);
		\draw[r] (v1) -- (v4);
		\draw[r] (v2) -- (v1);
		\draw[r] (v2) -- (v3);
		\draw[r] (v4) -- (v5);
		\draw[r] (v2) -- (v4);
		\draw[g] (v1) edge (v3);
		\draw[g] (v0) edge (v2);
		\draw[g] (v0) edge (v3);
		\draw[g, bend left = 30] (v0) edge (v4); 
		\draw[g, bend left = 60] (v0) edge (v5);
		\draw[g, bend right = 30] (v3) edge (v4);
		\draw[g, bend right = 60] (v3) edge (v5);
		\draw[g, bend right = 30] (v2) edge (v5);
		\draw[g, bend left = 30] (v1) edge (v5);
		\end{tikzpicture}
		\end{minipage}
		\end{align*}
		This graph contains three $P_{4}$ configurations, all of which are `settled', but $\set{x,y,z}$ is a maximal clique which is disjoint from the maximal stable set $\set{x',y', z'}$.
		Indeed, characterising CIS graphs is not easy; it is an open problem whether they can be recognised in polynomial time \cite{ABG:CIStechreport06} pp.~2 (though obviously CIS is in $\coNP$). Therefore, deciding whether a graph computes a deterministic relation is {a priori} computationally difficult.

		It seems more difficult to establish a characterisation of the total graphs by structural graph theoretic properties. 
		This is because the definition of totality a priori is more logically complex that that of determinism, a priori a $\Pi^p_2$-property.
		We do not know of any better upper bound for recognising totality.

%
%

Determinism and totality of a graph also has an effect on when the two notions of entailment, $\entails \con$ and $\entails \dis$ coincide:
		\begin{proposition}
		\label{prop:det-tot-entailment}
			If $G$ is deterministic and $H$ is total, then:
			\begin{equation}
			\label{eqn:dis-imp-con}
			\text{if} \quad
			G \entails{\dis} H \quad \text{then}\quad G \entails \con H
			\end{equation}
			\begin{equation}
			\label{eqn:con-imp-dis}
			\text{if} \quad
			H \entails{\con} G \quad \text{then} \quad H \entails \dis G
			\end{equation}
		\end{proposition}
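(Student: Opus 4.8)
The plan is to route everything through the adequacy theorem (Theorem~\ref{thm:sound-complete}), which turns the two entailments into material implications between evaluation statements, and then let totality of $H$ and determinism of $G$ interact via a simple case split. Recall that adequacy gives $G \entails \con H$ iff for all $X$, $G(X,1)$ implies $H(X,1)$, and $G \entails \dis H$ iff for all $X$, $H(X,0)$ implies $G(X,0)$. With this dictionary in hand, the proposition becomes a short piece of propositional reasoning about the relation computed by $G$ and $H$.

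First I would prove \eqref{eqn:dis-imp-con}. Assume $G \entails \dis H$ and take an arbitrary assignment $X$ with $G(X,1)$; by adequacy it suffices to show $H(X,1)$. Since $H$ is total, either $H(X,1)$, in which case we are done, or $H(X,0)$. In the latter case, adequacy applied to $G \entails \dis H$ yields $G(X,0)$, so that $G$ evaluates $X$ to both $0$ and $1$; this contradicts determinism of $G$ (equivalently the CIS property, by Proposition~\ref{prop:det-iff-cis}). Hence $H(X,1)$, and since $X$ was arbitrary, adequacy gives $G \entails \con H$.

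For \eqref{eqn:con-imp-dis} I would invoke duality rather than repeat the argument by hand. By Remark~\ref{rmk:duality} we have $H \entails \con G$ iff $\compl G \entails \dis \compl H$, and $H \entails \dis G$ iff $\compl G \entails \con \compl H$. Moreover, as observed above, determinism and totality are preserved under complementation, so $\compl G$ is deterministic and $\compl H$ is total. Thus \eqref{eqn:dis-imp-con}, applied with $\compl G$ in the role of the deterministic graph and $\compl H$ in the role of the total graph, says exactly that $\compl G \entails \dis \compl H$ implies $\compl G \entails \con \compl H$, which translates back to \eqref{eqn:con-imp-dis}. (Alternatively one can give the direct case split: from $G(X,0)$ and totality of $H$, the branch $H(X,1)$ together with $H\entails\con G$ forces $G(X,1)$, again contradicting determinism.)

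There is no serious obstacle here beyond bookkeeping: the entire content is carried by matching each form of entailment to the correct evaluation-direction under adequacy. The one point demanding care is precisely this matching — confirming that $\entails \dis$ governs the $0$-outputs while $\entails \con$ governs the $1$-outputs — and checking that the self-duality of determinism and totality legitimately lets the second implication be obtained from the first via Remark~\ref{rmk:duality}, so that no independent case analysis is actually needed.
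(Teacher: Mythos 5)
Your proof is correct and follows essentially the same route as the paper: both pass through adequacy (Theorem~\ref{thm:sound-complete}) and combine determinism of $G$ with totality of $H$, your case split on $H(X,1)$ versus $H(X,0)$ being just the contradiction form of the paper's chain of contrapositives. Where the paper dismisses \eqref{eqn:con-imp-dis} as ``similar'', you derive it formally from \eqref{eqn:dis-imp-con} via Remark~\ref{rmk:duality} together with the closure of determinism and totality under complementation, which is a legitimate (and slightly tidier) way to discharge that case.
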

		\begin{proof}
			We rely on the adequacy of our relational semantics, Theorem~\ref{thm:sound-complete}.
			For \eqref{eqn:dis-imp-con} we proceed as follows:
			\[
			\begin{array}{rcll}
			G \entails \dis H \quad &\therefore & \quad \text{if } H(X,0) \text{ then }G(X,0) \quad & \text{by adequacy}\\
			\quad & \therefore & \quad \text{if }\neg G(X,0) \text{ then } \neg H(X,0) \quad & \text{by contraposition}\\
			\quad & \therefore & \quad \text{if } G(X,1) \text{ then } \neg H(X,0) \quad & \text{by determinism of $G$}\\
			\quad & \therefore & \quad \text{if }G(X,1) \text{ then } H(X,1) \quad & \text{by totality of $H$} \\
			\quad & \therefore & \quad G \entails \con H \quad & \text{by adequacy}
			\end{array}
			\]
			
			The proof of \eqref{eqn:con-imp-dis} is similar.
		\end{proof}

		Of course, an immediate consequence of the above result is that, on the class of deterministic and total graphs, $\entails{\con} = \entails \dis$.
		This is subsumed by our later characterisation of the graphs computing Boolean functions as just the $\pfour$-free ones, i.e.\ the relation webs.	
		
		It would be interesting to establish some sort of converses to the above result, determining whether a graph is total or deterministic based on how it interacts with respect to the two notions of entailment.
		We leave this for future work.

	\subsection{Case study: finer interpolation of linear inferences}
	\label{subsect:case-study}
	At least one advantage of Boolean Graph Logic is that we have a finer notion of entailment that in turn admits a richer notion of `proof'. We discuss a particular example in this subsection.
	
From \cite{Das:13:Rewritin:uq} we have that the following is a valid implication:
\begin{equation}
\label{eqn:10varinf}
\begin{array}{rl}
& (u \dis (v \con v')) \con (w \dis x) \con (w'\dis x') \con ((y\con y')\dis z) \\
\Rightarrow &
(u \con (w \dis y)) \dis (w' \con y') \dis (v' \con x') \dis ((v \dis x)\con z)
\end{array}
\end{equation}
In terms of relation webs, the implication above corresponds to the following entailments on graphs,
\begin{equation}
\label{eqn:10var-inf-webs}
\raisebox{-0.5\height}{\begin{tikzpicture}
	\node[vertex] (u) at (0.5,0) {$u$};
	\node[vertex] (w) at (1,0.5) {$v$};
	\node[vertex] (y) at (1,-0.5) {$v'$};
	
	\draw[g] (u) -- (w);
	\draw[g] (u) -- (y);
	\draw[r] (w) -- (y);
	\node[draw,fit=(u) (w) (y) ] (M1) {};
	
	\node[vertex] (w') at (2.3,0.5) {$w$};
	\node[vertex] (y') at (2.3,-0.5) {$x$};
	\draw[g] (w') -- (y');
	
	\node[draw,fit=(w') (y') ] (M2) {};
	
	\node[vertex] (v') at (3.6,0.5) {$w'$};
	\node[vertex] (x') at (3.6,-0.5) {$x'$};
	\draw[g] (v') -- (x');
	
	\node[draw,fit=(v') (x') ] (M3) {};
	
	\node[vertex] (v) at (4.9,0.5) {$y$};
	\node[vertex] (x) at (4.9,-0.5) {$y'$};
	\node[vertex] (z) at (5.4,0) {$z$};
	
	\draw[r] (v) -- (x);
	\draw[g] (v) --(z);
	\draw[g] (x) --(z);
	
	\node[draw,fit=(v) (x) (z) ] (M4) {};
	
	\draw[r] (M1) -- (M2);
	\draw[r] (M2) -- (M3);
	\draw[r] (M1) to[bend left=55] (M3);
	\draw[r] (M3) -- (M4);
	\draw[r,bend right=55] (M2) edge (M4);
	\end{tikzpicture}}
\quad \entails{\star} \quad 
\raisebox{-0.5\height}{\begin{tikzpicture}
	\node[vertex] (u) at (0.5,0) {$u$};
	\node[vertex] (w) at (1,0.5) {$w$};
	\node[vertex] (y) at (1,-0.5) {$y$};
	
	\draw[r] (u) -- (w);
	\draw[r] (u) -- (y);
	\draw[g] (w) -- (y);
	\node[draw,fit=(u) (w) (y) ] (M1) {};
	
	\node[vertex] (w') at (2.3,0.5) {$w'$};
	\node[vertex] (y') at (2.3,-0.5) {$y'$};
	\draw[r] (w') -- (y');
	
	\node[draw,fit=(w') (y') ] (M2) {};
	
	\node[vertex] (v') at (3.6,0.5) {$v'$};
	\node[vertex] (x') at (3.6,-0.5) {$x'$};
	\draw[r] (v') -- (x');
	
	\node[draw,fit=(v') (x') ] (M3) {};
	
	\node[vertex] (v) at (4.9,0.5) {$v$};
	\node[vertex] (x) at (4.9,-0.5) {$x$};
	\node[vertex] (z) at (5.4,0) {$z$};
	
	\draw[g] (v) -- (x);
	\draw[r] (v) --(z);
	\draw[r] (x) --(z);
	
	\node[draw,fit=(v) (x) (z) ] (M4) {};
	
	\draw[g] (M1) -- (M2);
	\draw[g] (M2) -- (M3);
	\draw[g] (M1) to[bend left=55] (M3);
	\draw[g] (M3) -- (M4);
	\draw[g,bend right=55] (M2) edge (M4);
	\end{tikzpicture}}
\end{equation}
for $\star \in \{\dis,\con \}$.

\todo{Is the next sentence true?}
This is a minimal linear inference in the sense that no linear formula (over the same variables) simultaneously implies the RHS and is implied by the LHS. However, in our graph-theoretic setting, we may indeed `interpolate' this inference thanks to the following intermediate graph:
\begin{equation}
\label{eqn:interp-graph}
\raisebox{-0.5\height}{\begin{tikzpicture}
	\node[vertex] (w) at (0,0) {$w$};
	\node[vertex] (u) at (1,0) {$u$};
	\node[vertex] (y) at (2,0) {$y$};
	\node[vertex] (w') at (2.5,1) {$w'$};
	\node[vertex] (y') at (3,0) {$y'$};
	\node[vertex] (v') at (3.5,1) {$v'$};
	\node[vertex] (x') at (4,0) {$x'$};
	\node[vertex] (v) at (4.5,1) {$v$};
	\node[vertex] (z) at (5.5,1) {$z$};
	\node[vertex] (x) at (6.5,1) {$x$};
	
	\draw[b] (w) -- (u);
	\draw[b] (u) -- (y);
	\draw[b] (y) -- (y');
	\draw[b] (y) -- (w');
	\draw[b] (w') -- (y');
	\draw[b] (y') -- (v');
	\draw[b] (w') -- (v');
	\draw[b] (y') -- (x');
	\draw[b] (v') -- (x');
	\draw[b] (v') -- (v);
	\draw[b] (x') -- (v);
	\draw[b] (v) -- (z);
	\draw[b] (z) -- (x);
	\end{tikzpicture}}
\end{equation}
Note that we have indicated only edges, not non-edges, with black lines, for clarity.

\eqref{eqn:interp-graph} contains several $\pfour$s, e.g.\ $\{w,u,y,y'\} $ and $\{y,y',v',v\}$. 
It does not compute a deterministic relation since $\{w,y,v',x \} \in \ms \eqref{eqn:interp-graph}$ is disjoint from $\{v,z\} \in \mc \eqref{eqn:interp-graph}$.
On the other hand:
\begin{claim}
	\label{claim:interpolant-total}
\eqref{eqn:interp-graph} computes a total relation.
\end{claim}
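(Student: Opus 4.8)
The plan is to work directly from the definition of evaluation, rewriting the two conditions as containment statements: $G(X,1)$ holds exactly when $X$ contains a maximal clique of $G$, and $G(X,0)$ holds exactly when the complementary assignment $\compl X := \Ver(G)\setminus X$ contains a maximal stable set. Writing $G$ for \eqref{eqn:interp-graph}, totality is then the assertion that no assignment $X$ is simultaneously \emph{clique-free} (contains no member of $\mc(G)$) while $\compl X$ is \emph{stable-free} (contains no member of $\ms(G)$). I would phrase this as a two-colouring problem: colour the vertices of $X$ white and those of $\compl X$ black, so that a ``bad'' assignment is precisely a colouring with no all-white maximal clique and no all-black maximal stable set. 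The goal is to show no bad colouring exists, by assuming one and deriving a contradiction.

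First I would record the data the argument needs. The maximal cliques of \eqref{eqn:interp-graph} form an overlapping chain $\{w,u\}$, $\{u,y\}$, $\{y,w',y'\}$, $\{w',y',v'\}$, $\{y',v',x'\}$, $\{v',x',v\}$, $\{v,z\}$, $\{z,x\}$; and I would single out the handful of maximal stable sets actually used, for instance $\{w,x,y,v'\}$, $\{w,z,y,v'\}$, $\{u,v',x\}$, $\{u,z,v'\}$, $\{u,z,y'\}$, $\{w,x,w',x'\}$, $\{u,w',x',x\}$, $\{u,v,y',x\}$ and $\{w,x,y',v\}$, checking for each that it is independent and dominating (hence genuinely in $\ms(G)$). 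Crucially I do \emph{not} need a complete enumeration of $\ms(G)$: each contradiction is obtained by exhibiting a single all-black set that is genuinely maximal, so only the sets I invoke need be verified. I would also note the reflection automorphism $\sigma$ of \eqref{eqn:interp-graph} given by $w\leftrightarrow x$, $u\leftrightarrow z$, $y\leftrightarrow v$, $y'\leftrightarrow v'$, $w'\leftrightarrow x'$, which permutes $\mc(G)$ and $\ms(G)$ and so maps bad colourings to bad colourings, halving the casework.

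The argument then splits on the colours of the two central degree-four vertices $y'$ and $v'$. If both are white, the cliques $\{w',y',v'\}$ and $\{y',v',x'\}$ force $w',x'$ black, while the tail cliques $\{w,u\}$ and $\{z,x\}$ each contribute a black vertex; combining $\{w',x'\}$ with one black tail-vertex on each side yields an all-black maximal stable set, a contradiction. If both are black, I branch on the inner tail vertices $u,z$: the stable set $\{u,z,y'\}$ handles the case both black, and otherwise the tail cliques together with stable sets such as $\{u,v,y',x\}$ and $\{w,z,y,v'\}$ force an all-black maximal stable set in every branch. The remaining case, exactly one of $y',v'$ white, reduces by $\sigma$ to $y'$ white and $v'$ black; there I branch on $u$. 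If $u$ is white, the cliques $\{w,u\}$ and $\{u,y\}$ blacken $w,y$, then $\{w,x,y,v'\}$ forces $x$ white, then $\{z,x\}$ blackens $z$, and now $\{w,z,y,v'\}$ is all black. If $u$ is black, then $\{u,v',x\}$ and $\{u,z,v'\}$ force both $x$ and $z$ white, contradicting the tail clique $\{z,x\}$.

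The main obstacle is bookkeeping rather than any single hard idea: one must correctly identify the maximal cliques, verify that each invoked set is a genuine maximal stable set (independent and dominating), and then propagate the forced colours carefully through each branch. The insight that keeps the case analysis short is to split on $y'$ and $v'$ first and to exploit $\sigma$, which collapses the genuinely delicate one-of-each situation to a single short chain of forced colourings terminating in the all-black set $\{w,z,y,v'\}$ or a contradiction at the right-hand tail.
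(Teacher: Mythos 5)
Your proof is correct, but it takes a genuinely different route from the paper's. The paper argues uniformly: given a clique-free assignment $X$, it selects one vertex outside $X$ from each of the four ``spine'' cliques $\{w,u\}$, $\{y,w',y'\}$, $\{v',x',v\}$, $\{z,x\}$ and asserts that, after ruling out the adjacent pairs $\{u,y\}$ and $\{z,v\}$, every such quadruple is a maximal stable set disjoint from $X$. You instead run a finite case analysis: colour vertices by membership in $X$, split on the colours of $y'$ and $v'$, halve the work with the reflection automorphism $\sigma$, and close each branch with an explicitly verified all-black maximal stable set. The trade-off is instructive. The paper's recipe is shorter, but as stated it overlooks the adjacencies between the two middle triangles ($w'$--$v'$, $y'$--$v'$, $y'$--$x'$): for the clique-free assignment $X=\{y,w',x',v\}$ its choices are forced to $b=y'$ and $b'=v'$, which are adjacent, so no set of the prescribed ``one vertex per spine clique'' shape exists; the witness must be a stable set such as $\{u,z,y'\}$ that misses $\{v',x',v\}$ entirely --- which is exactly what your Case~B (both of $y',v'$ and both of $u,z$ black) produces. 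Your longer analysis therefore buys robustness: every invoked set is checked to be independent and dominating, and the branches cover configurations the paper's construction cannot reach. Two small points to discharge: your Case~A also uses the combinations $\{w,w',x',z\}$ and $\{u,w',x',z\}$, which are absent from your explicit list but are indeed maximal stable sets, so your ``for instance'' must be extended to cover them; and your symmetry reduction is sound precisely because $\sigma$ permutes $\mc(G)$ and $\ms(G)$ and hence maps bad colourings to bad colourings, as you note.
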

\begin{proof}
	Let $X$ be an assignment such that $\neg \eqref{eqn:interp-graph}(X,1)$, and let us try to construct a maximal stable set $T$ disjoint from it.
	Note that $X$ cannot contain all of $S = \{y,w',y'\} \in \mc\eqref{eqn:interp-graph}$ and all of $S' = \{v',x,v\}\in \mc\eqref{eqn:interp-graph}$, so let $b \in S$ and $b' \in S'$ be distinct nodes with $b,b' \notin X$.
	
	Similarly, we must have that $X$ does not contain one of $w$ or $u$, say $a$, and that $X$ does not contain one of $z $ or $x$, say $c$.
	
	In fact, all possible values of $a,b,b',c$ yield a maximal stable set, which by construction is disjoint from $X$. Note in particular that, if $a=u$ then $b\neq y$, otherwise we would have $\eqref{eqn:interp-graph}(X,1)$, and if $c=z$ then $b' \neq v$, for the same reason. 
	Thus $\eqref{eqn:interp-graph}(X,0)$ as required.
	\end{proof}

As we mentioned, \eqref{eqn:interp-graph} indeed interpolates the inference from \eqref{eqn:10varinf}, in fact for both versions of entailment, as we will now show.
The entailments $\lhs \eqref{eqn:10var-inf-webs} \entails \con \eqref{eqn:interp-graph}$ and $\eqref{eqn:interp-graph} \entails \dis \rhs\eqref{eqn:10var-inf-webs}$ require quite a large case analysis, due to the high number of maximal cliques in $\lhs \eqref{eqn:10var-inf-webs}$ and the high number of maximal stable sets in $ \rhs\eqref{eqn:10var-inf-webs}$.
However the other two entailments are much easier to prove directly:

\begin{proposition}
We have that $\lhs \eqref{eqn:10var-inf-webs} \entails \dis \eqref{eqn:interp-graph}$ and $\eqref{eqn:interp-graph} \entails \con \rhs\eqref{eqn:10var-inf-webs}$.
\end{proposition}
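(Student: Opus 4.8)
The plan is to verify both entailments directly from the definitions of $\entails{\con}$ and $\entails{\dis}$, exploiting that the relevant cliques (respectively stable sets) of the interpolant \eqref{eqn:interp-graph} are small and easy to enumerate. For $\eqref{eqn:interp-graph}\entails{\con}\rhs\eqref{eqn:10var-inf-webs}$ I would first read off the maximal cliques of \eqref{eqn:interp-graph} from its edge set: these are exactly the four `boundary' edges $\{w,u\},\{u,y\},\{v,z\},\{z,x\}$ together with the four triangles $\{y,w',y'\},\{w',y',v'\},\{y',v',x'\},\{v',x',v\}$ formed by the overlapping triangles in the middle (one checks no $K_4$ arises, so nothing larger occurs). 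By Proposition~\ref{prop:cliques-minterms-stable-maxterms} the maximal cliques of $\rhs\eqref{eqn:10var-inf-webs}$ are the minterms of the right-hand formula of \eqref{eqn:10varinf}, namely $\{u,w\},\{u,y\},\{w',y'\},\{v',x'\},\{v,z\},\{x,z\}$. It then suffices to exhibit, for each maximal clique $S$ of \eqref{eqn:interp-graph}, one of these minterms contained in $S$: each boundary edge already is a minterm or contains one (e.g.\ $\{u,w\}\subseteq\{w,u\}$), and each middle triangle contains $\{w',y'\}$ or $\{v',x'\}$. This requires no case analysis.

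For $\lhs\eqref{eqn:10var-inf-webs}\entails{\dis}\eqref{eqn:interp-graph}$ I must show that every maximal stable set $T$ of \eqref{eqn:interp-graph} contains a maximal stable set of $\lhs\eqref{eqn:10var-inf-webs}$. Again by Proposition~\ref{prop:cliques-minterms-stable-maxterms} the latter are the maxterms of the left-hand formula; since $\lhs\eqref{eqn:10var-inf-webs}$ has all edges present between its four modules, every stable set lies inside a single module, so these maxterms are exactly the six `module-internal' pairs $\{u,v\},\{u,v'\},\{w,x\},\{w',x'\},\{y,z\},\{y',z\}$, each of which is indeed a non-edge of \eqref{eqn:interp-graph}. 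To enumerate the maximal stable sets of the interpolant efficiently, I would use that $w$ is adjacent only to $u$ and $x$ only to $z$: hence any maximal $T$ contains $w$ or $u$, and contains $x$ or $z$, giving four cases. When $u,z\notin T$ we get $\{w,x\}\subseteq T$ at once; in each of the three remaining cases, fixing the chosen vertices from $\{w,u\}$ and $\{x,z\}$ reduces the problem to listing the (few) maximal stable sets of the small middle cluster on $\{y,w',y',v',x',v\}$ or a subset thereof, and one checks directly that every resulting $T$ contains one of the six pairs.

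The main obstacle is the \emph{completeness} of this enumeration: one must be certain that no maximal stable set of \eqref{eqn:interp-graph} avoids all six module-internal pairs. The four-way split on the pendant vertices $w,x$ contains the difficulty, but in the mixed cases (e.g.\ $u\in T$ and $x\in T$) the witnessing pair is not uniform---some maximal stable sets use $\{u,v\}$ or $\{u,v'\}$, while others use $\{w',x'\}$---so one must genuinely cover every maximal independent set of the reduced cluster. As a sanity check, or as an alternative route, one could instead derive this entailment from the first via duality (Remark~\ref{rmk:duality}), using that $\lhs\eqref{eqn:10var-inf-webs}\entails{\dis}\eqref{eqn:interp-graph}$ holds iff $\compl{\eqref{eqn:interp-graph}}\entails{\con}\compl{\lhs\eqref{eqn:10var-inf-webs}}$; this trades the stable-set enumeration for the analogous clique analysis on the complement graph, but does not eliminate the bookkeeping.
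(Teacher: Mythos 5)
Your proposal is correct. The $\entails{\con}$ half coincides with the paper's proof: the same eight maximal cliques of \eqref{eqn:interp-graph} (four boundary edges plus four triangles), mapped to the same six maximal cliques of $\rhs\eqref{eqn:10var-inf-webs}$. You diverge on the $\entails{\dis}$ half, and there your heavier bookkeeping is not merely an alternative but a necessary correction: the paper disposes of this half with three cases, sending any maximal stable set $T \ni u$ to $\{u\}$, any $T \ni z$ to $\{z\}$, and the remainder to $\{w,x\}$; but $\{u\}$ and $\{z\}$ are \emph{not} maximal stable sets of $\lhs\eqref{eqn:10var-inf-webs}$ (in the web of the left-hand formula, $u$ is non-adjacent to $v,v'$ and $z$ is non-adjacent to $y,y'$), so the paper's first two cases are unsound as written. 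The maximal stable sets of $\lhs\eqref{eqn:10var-inf-webs}$ are exactly your six module-internal pairs, and the witness inside $T$ must indeed be chosen non-uniformly, as you anticipated. Your worry about completeness of the enumeration resolves positively: with your four-way split on the pendant vertices, \eqref{eqn:interp-graph} has seventeen maximal stable sets (three containing $u,z$; four containing $u,x$; four containing $w,z$; six containing $w,x$), and each contains one of the six pairs, e.g.\ $\{u,z,y'\} \supseteq \{y',z\}$, $\{u,x,v'\} \supseteq \{u,v'\}$, $\{u,z,w',x'\} \supseteq \{w',x'\}$. One caveat worth making explicit: your reading of $\lhs\eqref{eqn:10var-inf-webs}$ as the full join of its four modules (the relation web of the formula) is the right one; the drawn figure omits the edges between the first and fourth modules, and on that literal graph the entailment would actually fail (again via $\{u,z,y'\}$), so the web reading is essential. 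Your duality fallback via Remark~\ref{rmk:duality} is correctly stated but, as you note, relocates rather than removes the case analysis.
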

\begin{proof}
	For $\lhs \eqref{eqn:10var-inf-webs} \entails \dis \eqref{eqn:interp-graph}$, we describe how to map each maximal stable set $T$ of \eqref{eqn:interp-graph} to a subset that is maximally stable in $\lhs\eqref{eqn:10var-inf-webs}$.
	We have the following cases:
	\begin{itemize}
		\item If $T\ni u$ then we map it to $\{u\}$.
		\item if $T\ni z$ then we map it to $\{z\}$.
		\item Otherwise $T$ must contain $w$ and $x$, so we may map it to $\{w,x\}$.
	\end{itemize}
	
For $\eqref{eqn:interp-graph} \entails \con \rhs\eqref{eqn:10var-inf-webs}$, we describe how to map each maximal clique of \eqref{eqn:interp-graph} to a subset that is a maximal clique of $\rhs\eqref{eqn:10var-inf-webs}$:
\begin{itemize}
	\item $\{w,u\}$, $\{u,y\}$, $\{v,z\}$ and $\{z,x\}$ are all mapped to themselves.
	\item $\{y,w',y \}$ and $\{w', y', v' \}$ are mapped to $\{w',y'\}$.
	\item $\{y',v',x'\}$ and $\{v',x',v\}$ are mapped to $\{v',x'\}$. \qedhere
\end{itemize}
\end{proof}

Now, since \eqref{eqn:interp-graph} computes a total relation, by Claim~\ref{claim:interpolant-total} above, we may conveniently appeal to Proposition~\ref{prop:det-tot-entailment} to immediately recover the other two entailments.

\begin{corollary}
	We have both of the following:
	\begin{itemize}
		\item $\lhs \eqref{eqn:10var-inf-webs} \entails \con \eqref{eqn:interp-graph} \entails \con \rhs\eqref{eqn:10var-inf-webs}$
		\item $\lhs \eqref{eqn:10var-inf-webs} \entails \dis \eqref{eqn:interp-graph} \entails \dis \rhs\eqref{eqn:10var-inf-webs}$
	\end{itemize}
\end{corollary}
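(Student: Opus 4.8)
The plan is to derive the two still-missing entailments from the Proposition immediately preceding, by feeding its conclusions into Proposition~\ref{prop:det-tot-entailment}, exploiting that the two endpoints of the inference are relation webs while the interpolant is total. Write $L$, $M$, $R$ for $\lhs\eqref{eqn:10var-inf-webs}$, \eqref{eqn:interp-graph}, and $\rhs\eqref{eqn:10var-inf-webs}$ respectively. First I would record the relevant relational properties of the three graphs: $L$ and $R$ are relation webs, hence cographs, and so compute Boolean functions; in particular they are deterministic (indeed also total). The interpolant $M$ is total by Claim~\ref{claim:interpolant-total}. The Proposition above already supplies $L \entails \dis M$ and $M \entails \con R$, which are exactly the first link of the $\dis$-chain and the second link of the $\con$-chain. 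It therefore only remains to produce $L \entails \con M$ and $M \entails \dis R$.

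For $L \entails \con M$ I would apply \eqref{eqn:dis-imp-con} of Proposition~\ref{prop:det-tot-entailment} with $G := L$ and $H := M$: the hypotheses hold since $L$ is deterministic and $M$ is total, and the premise $G \entails \dis H$ is precisely the already-established $L \entails \dis M$, yielding $L \entails \con M$. Dually, for $M \entails \dis R$ I would apply \eqref{eqn:con-imp-dis} with $G := R$ and $H := M$: again $R$ is deterministic and $M$ is total, and the premise $H \entails \con G$ is the already-established $M \entails \con R$, yielding $M \entails \dis R$. Concatenating the links then gives the first chain $L \entails \con M \entails \con R$ and the second chain $L \entails \dis M \entails \dis R$, as required.

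The argument is essentially bookkeeping, so I do not expect a genuine mathematical obstacle; the one point that demands care is keeping the roles of $G$ and $H$ straight across the two equations of Proposition~\ref{prop:det-tot-entailment}, so that the determinism hypothesis always lands on a cograph endpoint ($L$ or $R$) and the totality hypothesis always lands on the interpolant $M$. It is also worth noting that only determinism of $L$ and $R$, together with totality of $M$, is actually invoked — totality of the two webs is never needed, though it holds automatically since $L$ and $R$ compute Boolean functions.
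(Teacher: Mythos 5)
Your proposal is correct and follows exactly the paper's own route: the missing entailments $\lhs\eqref{eqn:10var-inf-webs} \entails{\con} \eqref{eqn:interp-graph}$ and $\eqref{eqn:interp-graph} \entails{\dis} \rhs\eqref{eqn:10var-inf-webs}$ are obtained from the preceding Proposition via Proposition~\ref{prop:det-tot-entailment}, using determinism of the two cograph endpoints and totality of the interpolant (Claim~\ref{claim:interpolant-total}). Your instantiations of \eqref{eqn:dis-imp-con} and \eqref{eqn:con-imp-dis} are the right ones, and your closing observation that only determinism of the endpoints is needed matches the remark the paper makes immediately after the corollary.
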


We point out that we only used the fact that $\lhs\eqref{eqn:10varinf}$ and $\rhs\eqref{eqn:10varinf}$ are deterministic to obtain the other two entailments, not that they are total. 
It would be interesting to develop this case study further, establishing a maximal sequence of graphs interpolating \eqref{eqn:10var-inf-webs}.
This is also related to the question of finding the `minimal linear inference', cf.~\cite{sipraga}.
Such development, however, is beyond the scope of this work.

\subsection{Deterministic and total graphs are $\pfour$-free}
\label{subsect:det-tot-pfour-free}

We finish this section with a somewhat surprising result: the only graphs that are both deterministic and total (i.e.\ Boolean functions) are already $\pfour$-free, i.e.\ they are the relation webs of formulas.

\begin{thm}
	\label{thm:function-p4free}
	A graph is deterministic and total if and only if it is $ P_4 $-free.
\end{thm}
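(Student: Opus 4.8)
My plan is to prove the two directions separately; the implication $\pfour$-free $\Rightarrow$ deterministic and total is immediate from the results already developed, while the converse is where the real work lies, resting on Gurvich's theorem quoted in the introduction.

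For the forward direction, suppose $G$ is $\pfour$-free. By Fact~\ref{prop:cograph-iff-p4free} it is a cograph, hence $G = \web(A)$ for some linear formula $A$. Since $A$ computes a Boolean \emph{function}, for every $X$ exactly one of $A(X)=0$, $A(X)=1$ holds, and by the conservativity of evaluation noted just after its definition, $\web(A)(X,i)$ holds iff $A(X)=i$. Hence exactly one of $G(X,0)$, $G(X,1)$ holds for each $X$, i.e.\ $G$ is total and deterministic.

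For the converse, suppose $G$ is deterministic and total. Then for each $X$ exactly one of $G(X,0)$, $G(X,1)$ holds, so $G$ computes a genuine Boolean function $f_G$ with $f_G(X)=1 \iff G(X,1)$. This $f_G$ is monotone: if $X \subseteq Y$ and $f_G(X)=1$ then some $S \in \mc(G)$ has $S \subseteq X \subseteq Y$, so $f_G(Y)=1$. Reading off the definition of evaluation, $f_G(X)=1$ iff some maximal clique is contained in $X$, so the minimal sets on which $f_G=1$ are exactly the elements of $\mc(G)$; dually the maxterms of $f_G$ are exactly the elements of $\ms(G)$. By determinism and Proposition~\ref{prop:det-iff-cis}, $G$ is CIS, so each maximal clique and maximal stable set intersect; as a clique and a stable set share at most one vertex, this intersection is a singleton. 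Thus every minterm and maxterm of $f_G$ meet in a singleton, and Gurvich's theorem~\cite{gurvich1977repetition} yields a read-once formula $A$ with $f_G = A$.

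It remains to transfer this back to $G$. The web $\web(A)$ is $\pfour$-free and computes $f_G$, so by Proposition~\ref{prop:cliques-minterms-stable-maxterms} its maximal cliques are exactly the minterms of $f_G$, namely the elements of $\mc(G)$. Since in any graph $\{x,y\}$ is an edge precisely when $x$ and $y$ lie in a common maximal clique, $G$ and $\web(A)$ have identical edge sets (and vertex sets), so $G = \web(A)$ is $\pfour$-free. I expect this last transfer to be the main obstacle: Gurvich only returns a formula computing the same \emph{function}, and one must argue that $G$ is literally that formula's web, not merely semantically equal to it. The key that resolves it is that the family of maximal cliques of a graph already determines its edges, so agreement of $\mc(G)$ with the maximal cliques of $\web(A)$ upgrades to graph equality.
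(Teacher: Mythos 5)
Your proof is correct and follows essentially the same route as the paper's main-text proof: identifying the minterms and maxterms of the computed function with $\mc(G)$ and $\ms(G)$, invoking the CIS property to get singleton intersections, and reducing to Gurvich's theorem (the paper also gives an alternative self-contained argument in its appendix, but that is not the approach you needed). Your final transfer step, that $\mc(G)$ determines the edge set and hence forces $G = \web(A)$, is in fact a more careful rendering of the paper's brief remark that otherwise the two graphs ``would have different maximal cliques and stable sets, and so compute different relations.''
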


One proof of this result is by a reduction to a classical result of Gurvich:

\begin{theorem}
	[\cite{gurvich1977repetition}]
	\label{thm:gurvich}
	Suppose $f: \pow (\Var) \to \{0,1\}$ is monotone and depends on variables $V_n = x_1, \dots, x_n$.  
	$f$ is computed by a read-once formula (over $V_n$) if and only if, for every minterm $S$ of $f$ and every maxterm $T$ of $f$, $|S \cap T| = 1$.
\end{theorem}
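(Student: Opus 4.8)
The plan is to prove the two directions separately: the forward implication is a short consequence of the characterisations already in place, while the converse is the combinatorial core.

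For the forward direction, suppose $f$ is computed by a read-once formula $A$. By Proposition~\ref{prop:cliques-minterms-stable-maxterms} the minterms of $f$ are exactly the maximal cliques of $\web(A)$ and the maxterms exactly the maximal stable sets. Given a minterm $S$ and a maxterm $T$, at most one vertex can lie in $S\cap T$, since two vertices of a clique are adjacent whereas two vertices of a stable set are not; hence $|S\cap T|\le 1$. For $|S\cap T|\ge 1$ I use monotonicity directly: if $S\cap T=\emptyset$ then $S\subseteq \compl T$, so $1=f(S)\le f(\compl T)=0$ by the defining property of the maxterm $T$, a contradiction. Thus $|S\cap T|=1$. (This half uses only that $\web(A)$ is a cograph.)

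For the converse, assume the singleton-intersection property; I argue that $f$ is read-once by induction on $n=|V_n|$, the base case $n=1$ giving $f=x_1$. Associate to $f$ two graphs on $V_n$: the minterm graph $M$, with $\{x,y\}\in\Edg(M)$ iff some minterm contains both $x$ and $y$, and the maxterm graph $N$, defined analogously. The hypothesis is \emph{equivalent} to $\Edg(M)\cap\Edg(N)=\emptyset$, since a pair $\{x,y\}$ lying in a common minterm $S$ and a common maxterm $T$ is precisely a witness to $\{x,y\}\subseteq S\cap T$, i.e.\ to $|S\cap T|\ge 2$. In particular each minterm is a clique of $M$ and each maxterm is a clique of $N$, hence connected in the respective graph. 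Consequently, if $M$ is disconnected then the minterms split across a bipartition $V_n=A\sqcup B$, and using Proposition~\ref{prop:evaluation-characterisation}(1) one gets $f=f_A\vee f_B$ on disjoint supports; dually, if $N$ is disconnected then the maxterms split and, using Proposition~\ref{prop:evaluation-characterisation}(2), $f=f_A\wedge f_B$. A routine check (any full maxterm of $f$ restricts on $A$ to a maxterm of $f_A$, disjoint from the $B$-part) shows each factor inherits the singleton-intersection property, so the induction hypothesis applies and $f$ is read-once.

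Everything thus reduces to the dichotomy that, for $n\ge 2$, \emph{at least one of $M$, $N$ is disconnected}, and this is the main obstacle. It is genuinely global: edge-disjointness alone is far too weak, since $K_n$ admits two edge-disjoint connected spanning subgraphs for $n\ge 4$. Even the stronger CIS property of $M$ does not suffice — one can show that every minterm is in fact a \emph{maximal} clique of $M$ (if some $v\notin S$ were $M$-adjacent to all of a minterm $S$, then for any maxterm $T$ we have $S\cap T=\{c\}$, and $v\in T$ would place $\{v,c\}$ in both $M$ and $N$, so $v$ lies in no maxterm and $f$ is independent of $v$), and dually maxterms are maximal stable sets, making $M$ a CIS graph; but CIS graphs need not be cographs, as the $C_5$ and the Bull above already show. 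The missing ingredient is the blocker duality of a monotone function, namely that the maxterms are exactly the minimal transversals of the minterms; this is what excludes configurations like $C_5$, whose maximal cliques and stable sets are not a blocker pair. I therefore expect to prove the contrapositive — that a prime $M$, equivalently an induced $\pfour$ in $M$, forces a minterm and a maxterm sharing two vertices — by propagating the local $\pfour$ through the transversal structure rather than by a case analysis on the four $\pfour$-vertices alone, which (as the CIS counterexamples make clear) cannot by itself produce the required double intersection. This step is the combinatorial heart of Gurvich's theorem.
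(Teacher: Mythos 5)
Your forward direction is correct, and the scaffolding of the converse (the minterm graph $M$, the maxterm graph $N$, the equivalence of the hypothesis with $\Edg(M)\cap\Edg(N)=\emptyset$, the decomposition $f=f_A\vee f_B$ or $f=f_A\wedge f_B$ when $M$ or $N$ is disconnected, and the inheritance of the hypothesis by the factors) is sound. But the proposal is not a proof: everything hinges on the dichotomy that for $n\geq 2$ at least one of $M$, $N$ is disconnected, and you explicitly defer exactly this step (``I therefore expect to prove the contrapositive\dots''). That dichotomy \emph{is} Gurvich's theorem; asserting it with a strategy sketch leaves the combinatorial core unestablished. Moreover the sketch itself has a hidden prerequisite you never address: to ``propagate a $\pfour$ in $M$ through the transversal structure'' you need non-edges of $M$ to be realised by maxterms, i.e.\ that under the hypothesis every pair of variables lies in a common minterm or a common maxterm (so that $N$ is the complement of $M$). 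This is false for general monotone $f$ and must be derived from the singleton-intersection hypothesis; without it the local $\pfour$ configuration cannot even be equipped with the maximal stable sets the argument would manipulate. (Your aside that minterms being maximal cliques and maxterms maximal stable sets ``makes $M$ a CIS graph'' is also unjustified as stated, since $M$ may have maximal cliques that are not minterms, but that remark is not load-bearing.)

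For comparison: the paper does not prove Gurvich's theorem in the body (it is cited), but the appendix, Section~\ref{app:det-tot-pfourfree}, gives a self-contained proof of the equivalent graph-theoretic statement, Theorem~\ref{thm:function-p4free}, and its engine is precisely the step you are missing. There, totality of the graph relation --- the graph-level avatar of the blocker duality you correctly name (maxterms are the minimal transversals of the minterms) --- yields the selection/covering lemma: for any non-clique $Y$ in a total graph, every selection of maximal stable sets with respect to $Y$ is covering. Applied to an induced path $a$--$b$--$c$--$d$ with maximal cliques $S_1\supseteq\{a,b\}$, $S_2\supseteq\{c,d\}$ and maximal stable sets $T_1\supseteq\{a,c\}$, $T_2\supseteq\{b,d\}$, taking $Y=\{a,d\}$ produces $D\in\ms(G)$ with $D\subseteq T_1\cup T_2$ and $D\cap\{a,d\}=\emptyset$, whereupon the CIS property forces $b,c\in D$, contradicting the edge $\{b,c\}$. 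Note this is a global argument through an auxiliary stable set $D$, consistent with your (correct) observation that case analysis on the four $\pfour$-vertices alone cannot work. If you want to finish along your lines, you would transplant that lemma to the clutter of minterms and its blocker --- where the totality and singleton-intersection hypotheses hold automatically because $f$ is a function --- but you must first prove the complementation fact above so that the non-edges $\{a,c\}$, $\{b,d\}$ of $M$ extend to maxterms. As submitted, the proof stops where the theorem begins.
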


We give the reduction of Theorem~\ref{thm:function-p4free} to Theorem~\ref{thm:gurvich} here, but a self-contained proof can be found in the appendix, Section~\ref{app:det-tot-pfourfree}.

\begin{proof}
	[Proof of Theorem~\ref{thm:function-p4free}]
	Let $G$ be a deterministic and total graph, 
	and define the following two Boolean functions:
	\[
	T_G (X) := \begin{cases}
	1 & G(X,1) \\
	0 & \text{otherwise}
	\end{cases}
	\qquad
	F_G(X) := \begin{cases}
	0 & G(X,0) \\
	1 & \text{otherwise}
	\end{cases}
	\]
	Since $G$ is deterministic and total, we have that $T_G$ and $F_G$ are actually the same Boolean function, say $g$. 
	What is more, $g$ may be written simultaneously as the following (irredundant) DNF and CNF:
	\[
	\bigvee\limits_{S \in \mc (G)} \bigwedge S \quad = \quad  T_G \quad =\quad  g \quad =  \quad F_G = \bigwedge\limits_{T \in \ms (G)} \bigvee T
	\]
	Thus the minterms of $g$ are just the maximal cliques of $G$ and the maxterms of $g$ are the maximal stable sets of $G$. 
	Since cliques and stable sets may only intersect at most once (by simplicity of the graph), and minterms and maxterms must intersect at least once (by determinism of functions), we have from Gurvich's theorem above that $g$ is computed by some read-once formula, say $A$.
	But now we must have that, indeed $\web (A) = G$, since otherwise they would have different maximal cliques and stable sets, and so compute different relations. Thus $G$ is the web of some formula and, indeed, $\pfour$-free.
\end{proof}

%
%
%
%
%
%
%

\section{Computational Complexity of Boolean Graph Logic}
\label{sect:complexity}

In this section we study the computational complexity of evaluation and entailment in BGL.
In particular we show that evaluation (to either $0$ or to $1$) is $\NP$-complete and entailment (either disjunctive or conjunctive) is $\Pi^p_2$-complete.
In contrast, for Boolean formulas evluation is $\mathbf{ALOGTIME}$-complete and entailment is $\coNP$-complete, suggesting that BGL is much more computationally rich than Boolean logic.

\subsection{Preliminaries on computational complexity}
We will assume prior knowledge of deterministic and (co-)nondeterministic Turing and oracle machines.
For a language $L$ we write $\NP(L)$ for the class of languages accepted by a nondeterministic Turing machine in polynomial time with access to an oracle for $L$.
For a class of languages $\mathcal C$ we write $\NP(\mathcal C)$ for $\bigcup\limits_{L \in \mathcal C} \NP(L)$.
From here recall that the levels of the polynomial hierarchy are defined as follows:
\begin{itemize}
	\item $\ndhier{p}{0} = \cohier{p}{0} =  \PP$.
	\item $\ndhier{p}{i+1} = \NP({\ndhier{p}{i}})$.
	\item $\cohier{p}{i+1} = \mathbf{co}\ndhier{p}{i+1}$.
\end{itemize}
Of course, $\ndhier p 1$ is just $\NP$ and $\cohier p 1 $ is just $\coNP$.

%

Let us write $\vec \exists \CNF$ for the class of true (closed) quantified Boolean formulas (QBFs) of the form $\exists \vec x . \phi$, where $\phi$ is a CNF.
Similarly we write $\vec \forall \vec \exists \CNF$ for the class of true (closed) quantified Boolean formulas (QBFs) of the form $\forall \vec x . \exists \vec y . \phi$, where $\phi$ is a CNF.
It is well-known that $\vec \exists \CNF$ is $\NP$-complete (being the same as $\SAT$), and that $\vec \forall \vec \exists \CNF$ is $\cohier p 2 $-complete (e.g.\ see \cite{schaefer2002completeness}).

%
%

\subsection{Complexity of entailment}
We show that the relations $\entails{\con}$ and $\entails{\dis}$ are complete for $\cohier{p}{2}$, i.e.\ $\coNP(\NP)$.

\begin{theorem}
	\label{thm:entailment-pip2-complete}
	$\entails{\con}$ and $\entails{\dis}$ are $\cohier{p}{2}$-complete.
\end{theorem}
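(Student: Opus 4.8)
I would first observe that both relations sit in $\cohier{p}{2}$ essentially by unwinding their definitions. Indeed $G \entails{\con} H$ unfolds to
\[
\forall S \subseteq \Ver(G).\ \big(\, S \in \mc(G) \ \to\ \exists S' \subseteq \Ver(H).\ (S' \in \mc(H) \ \wedge\ S' \subseteq S)\,\big),
\]
and the predicates ``$S \in \mc(G)$'' and ``$S' \in \mc(H) \wedge S' \subseteq S$'' are decidable in polynomial time (checking that a set is a clique, or a maximal clique, and testing inclusion are all polynomial), while the quantified witnesses $S, S'$ are vertex subsets and hence of polynomial size. This is exactly a $\forall\exists$ sentence with polynomial-time matrix, so $\entails{\con} \in \cohier{p}{2}$; the same unwinding of the definition of $\entails{\dis}$ (now over $\ms$ rather than $\mc$) gives $\entails{\dis} \in \cohier{p}{2}$. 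Alternatively, Remark~\ref{rmk:duality} shows $G \entails{\con} H$ iff $\compl H \entails{\dis} \compl G$, which is a polynomial-time reduction, so it suffices to analyse just one of the two relations throughout.

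\textbf{Hardness, overall strategy.} For $\cohier{p}{2}$-hardness I would reduce from $\vec\forall\vec\exists\CNF$, which is $\cohier{p}{2}$-complete. Given a true-or-false QBF $\forall \vec x\, \exists \vec y.\ \phi(\vec x, \vec y)$ with $\phi = C_1 \wedge \dots \wedge C_k$ a CNF over universal variables $x_1,\dots,x_m$ and existential variables $y_1,\dots,y_n$, I would build, in polynomial time, graphs $G$ and $H$ (sharing a common set of literal-nodes $x_i^0, x_i^1, y_j^0, y_j^1$) such that $G \entails{\con} H$ iff the QBF holds. The design is guided by the definition $G\entails{\con}H \iff \forall S\in\mc(G)\,\exists S'\in\mc(H).\, S'\subseteq S$: maximal cliques of $G$ will encode the universal assignments $\alpha$ to $\vec x$, and maximal cliques of $H$ will encode full satisfying assignments $(\alpha',\beta)$ of $\phi$, so that the inclusion $S'\subseteq S$ holds exactly when $\alpha'=\alpha$, turning $\exists S' \subseteq S$ into $\exists \vec y.\ \phi(\alpha,\vec y)$.

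\textbf{The two gadgets.} In the \emph{scaffold} $G$ I would put a green edge between $x_i^0$ and $x_i^1$, red edges between all literals of distinct variables, and red edges making \emph{all} $y$-literals (both polarities) mutually adjacent and adjacent to every $x$-literal. Then a maximal clique of $G$ must choose exactly one $x$-literal per variable and must contain every $y$-literal, so $\mc(G) = \{\,S_\alpha : \alpha \in \{0,1\}^m\,\}$ where $S_\alpha$ is the chosen $\vec x$-block together with all $y$-literals. In the \emph{checker} $H$ I would instead put green edges within each variable pair for \emph{both} the $x$- and the $y$-variables (red across distinct variables), so that cliques of literal-nodes are precisely the consistent partial assignments; to enforce $\phi$ I would add one node $d_l$ per clause $C_l$, adjacent to every literal \emph{except} those satisfying $C_l$. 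A consistent full assignment $(\alpha',\beta)$ is then a maximal clique of $H$ precisely when it satisfies every clause (if it falsifies $C_l$ it contains no satisfying literal of $C_l$, so $d_l$ may be adjoined, destroying maximality), and the only other maximal cliques of $H$ contain some $d_l$. Since $d_l \notin S_\alpha$ and $S_\alpha$ contains just one $x$-literal per variable but all $y$-literals, a maximal clique $S'$ of $H$ satisfies $S' \subseteq S_\alpha$ iff $S'$ is a literal-only satisfying assignment whose $\vec x$-part equals $\alpha$. Hence $\exists S' \in \mc(H).\, S' \subseteq S_\alpha$ iff $\exists \vec y.\ \phi(\alpha,\vec y)$, and quantifying over all $S_\alpha \in \mc(G)$ recovers the QBF.

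\textbf{Main obstacle.} The routine part is membership and the poly-time bound on the reduction; the delicate part is verifying that the two gadgets really have \emph{exactly} the intended maximal cliques, with no spurious maximal cliques of $H$ slipping below some $S_\alpha$. The argument above isolates why this works (every extra maximal clique of $H$ contains a clause node $d_l$, which no $S_\alpha$ contains), but turning this into a clean case analysis of cliques and maximality in $G$ and $H$ is where the care is needed. Once $\cohier{p}{2}$-hardness of $\entails{\con}$ is established, the corresponding result for $\entails{\dis}$ follows immediately from the duality reduction of Remark~\ref{rmk:duality}, completing the proof. (One can also phrase membership and soundness of the reduction semantically through the adequacy Theorem~\ref{thm:sound-complete}, reading $G \entails{\con} H$ as $\forall X.\,(G(X,1) \to H(X,1))$, which makes the outer universal and inner existential quantifiers transparent.)
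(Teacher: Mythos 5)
Your proposal is correct, and while it shares the paper's overall strategy --- $\cohier{p}{2}$-hardness by a polynomial reduction from $\vec\forall\vec\exists\CNF$ for one entailment relation, transferred to the other via the duality of Remark~\ref{rmk:duality}, with membership read off the $\forall\exists$ shape of the definition --- the reduction gadget itself is genuinely different. The paper targets $\entails{\dis}$ and works with literal \emph{occurrences}: each clause becomes a clique of $G$, dual existential occurrences are joined in $G$ and dual universal occurrences in $H$, so that universal assignments appear as maximal stable sets of $H$ and the witness $T'\in\ms(G)$, $T'\subseteq T$, is a clause-indexed set hitting each clause exactly once; this forces the normalisation conventions of Remark~\ref{rmk:cnf-convention} and the ``least true literal'' tie-break, because a satisfying assignment can hit a clause twice. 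You target $\entails{\con}$ with one node per \emph{literal}: universal assignments are the maximal cliques $S_\alpha$ of your scaffold (all $y$-literals plus one $x$-literal per variable), the witness is a \emph{full} satisfying assignment, and clause satisfaction is enforced not by clause cliques but by checker nodes $d_l$ in $H$ whose sole job is to destroy maximality of exactly the falsifying assignments. This eliminates the occurrence-level bookkeeping and the tie-breaking entirely, at the price of $H$ having nodes outside $\Ver(G)$ --- which is harmless, since the paper's definition of $\entails{\con}$ only demands $S'\subseteq S$ and never a common vertex set, and any $S'$ containing some $d_l$ is automatically disqualified, which is precisely what your argument exploits. The verifications you defer to the ``main obstacle'' do go through: a literal-only clique of $H$ missing some variable extends by either polarity of that variable, so literal-only maximal cliques are exactly the full satisfying assignments, and the adjacencies among the $d_l$'s themselves, which you leave unspecified, are irrelevant and may be fixed arbitrarily. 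A small bonus of your write-up is that the $\cohier{p}{2}$ membership half is spelled out, which the paper leaves implicit.
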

\begin{proof}
	We reduce $\vec \forall \vec \exists \CNF$ to $\entails{\dis}$, whence the case for $\entails{\con}$ follows by duality.
	Fix an instance $\phi$,
	\[
	\forall \vec x . \exists \vec y . \bigwedge\limits^N_{n=1} \bigvee C_n
	\]
	where each $C_n$ is a set of literals over the variables $\vec x , \vec y$.
	Write $\phi_0$ for the matrix of $\phi$, i.e.\ $\bigwedge\limits^N_{n=1} \bigvee C_n$.
	\begin{remark}
		\label{rmk:cnf-convention}
		Without loss of generality, we assume $\vec x$ and $\vec y $ are disjoint and that each $x_i$ and $y_i$ occurs both positively and negatively in $\phi$ (otherwise replace it by $0$ or $1$ appropriately).
		Furthermore, we suppose that that each $C_n$ contains some $x_i$ and some $y_j$ (i.e.\ a universally bound variable and an existentially bound variable), either positively or negatively.
		This is because if this were not the case for some clause $C$, then it could equivalently be replaced by clauses $C\cup \{z\}$ and $C\cup \{ \neg z \}$, for any variable $z$. Such a procedure at most multiplies the size of $\phi_0$ by $2$.
	\end{remark}
	
	\todo{draw graphs?}
	
	Now we define the graphs $G = (V, E_G)$ and $H = (V, E_H )$ as follows:
	\begin{itemize}
		\item The set $V$ of vertices of both $G$ and $H$ is the set of literal occurrences in $\phi_0$.
		Formally, we write $V = \{ x_i^j, \compl x_i^k, y_i^j, \compl y_i^j \}_{i,j} $, where $j$ identifies the specific occurrence of each literal and $i,j$ range appropriately.
		\item The set $E_G$ of edges of $G$ consists of:
		\begin{itemize}
			\item an edge between any two nodes in the same clause (so that each $C_n$ is a clique); and,
			\item an edge between any two nodes of the form $y_i^j$ and $\compl y_i^k$ (i.e.\ any dual literal occurrences that are existentially bound).
		\end{itemize}
		\item $E_H$ consists only of edges between nodes of the form $x_i^j$ and $\compl x_i^k$ (i.e.\ any dual literal occurrences that are universally bound).
	\end{itemize}

	\smallskip 
	
	We claim that $G \entails{\dis} H $ if and only if $\phi$ is true, as required. 
	First suppose that $G\entails\dis H$ and let $X \subseteq \vec x$ be an assignment to the $x_i$'s.
	Let $X_1 \subseteq V$ identify the true literal occurrences under $X$, i.e.:
	\begin{equation}
	\label{eqn:true-literals-of-X}
	X_1 \quad := \quad \{ x_i^j : x_i \in X \} \cup \{  \compl x_i^k : x_i \notin X \} 
	\end{equation}
	Now define $T$ to be the set of all $\vec y$-literal occurrences and the true $\vec x$-literal occurrences under $X$, i.e.:
	\[
	T = \{ y_i^j \}_{i,j} \cup \{ y_i^k \}_{i,k} \cup X_1
	\]
	
	Notice that $T$ is a stable set in $H$ by definition of $E_H$ and the fact that $X_1$ does not contain dual literals.
	Furthermore it is maximally stable since the only remaining nodes are $\vec x$-literals that are false under $X$, and so have an edge to some true literal occurrence in $X_1$ (cf.~Remark~\ref{rmk:cnf-convention}).
	
	Thus, since $G \entails{\dis} H$, there is a set $T' \subseteq T$ that is maximally stable in $G$.
	Now, notice that,
	\begin{enumerate}
		\item\label{item:ms-intersects-clauses} By maximality, $T'$ must intersect every $C_n$, since there is some $x^j_i$ or $\compl x^k_i$ in each $C_n$, by Remark~\ref{rmk:cnf-convention} and by the definition of $E_G$; and,
		\item\label{item:ms-consistent} $T'$ cannot contain any dual pair $y^l_j$ and $\compl y^m_j$, by the definition of $E_G$.
	\end{enumerate} 
	Thus $T'$ induces a consistent assignment $Y \subseteq \vec y$ to the $y_i$s (just take the set of $\vec y$-literals in $T'$), by \ref{item:ms-consistent},  that further ensures that there is a true literal in each $C_n$ under $X,Y$, by \ref{item:ms-intersects-clauses}, as required.
	
	Conversely, suppose that $\phi $ is true and let $T \in \ms(H)$.
	By the definition of $E_H$,
	$T$ induces a consistent assignment $X \subseteq \vec x$ to the $x_i$s in the natural way, so let $Y \subseteq \vec y $ be an assignment to the $y_i$s obtained by the truth of $\phi$, i.e.\ such that $\phi_0 (X , Y)$ is true.
	Let $X_1$ be defined as above in \eqref{eqn:true-literals-of-X} and $Y_1$ be defined similarly, i.e.\ $X_1 \subseteq V$ and $Y_1 \subseteq V$ consist of the true $\vec x$-literal occurrences and $\vec y$-liter occurrences, respectively, under $X$ and $Y$, respectively. Note that $X_1 \cup Y_1$ must intersect each $C_n$, since $\phi_0(X,Y)$ is true.
	
	Now we are almost able to define an appropriate subset of $T $ that is maximally stable in $G$, but for one technicality: $ X_1 \cup Y_1$ could intersect some clause $C_n$ \emph{twice}, i.e.\ there could be two true literals in $C_n$.\footnote{We could have avoided this issue by working with generalisations of exactly-one-in-three-$\SAT$, but this is not standard and is beyond the scope of this work.}
	For this reason, we rather set $T'$ to be an arbitrary subset of $X_1 \cup Y_1$ that intersects each $C_n$ \emph{exactly} once:\footnote{We assume there is some fixed established ordering of the literals.}
	\[
	T' \quad : = \quad \{ \text{least literal in $(X_1 \cup Y_1) \cap C_n$ }  : 0< n\leq N    \}
	\]
	
	Now we have the following:
	\begin{itemize}
		\item $T'$ is stable in $G$, since it only contains one node in each $C_n$ and is consistent with the truth assignment $Y$ to the $y_i$s, cf.~the definition of $E_G$; and,
		\item Furthermore $T'$ is maximally stable in $G$, since it already intersects each $C_n$.
	\end{itemize}
	We have $T'\subseteq T$ (since $T$ contains, in particular, all $\vec y$-literal occurrences), so we indeed have that $G \entails \dis H$, as required.
\end{proof}

\begin{remark}
	In \cite{Stra:08:Extensio:kk} Stra\ss burger showed that the linear fragment of Boolean logic is $\coNP$-complete, by rewriting every Boolean tautology $\forall \vec x . \phi$ as a linear inference $\forall \vec x . (\phi \imp \psi)$, where $\phi$ and $\psi$ are monotone and linear.
	It would be natural to try use this approach here to show $\cohier{p}{2}$-completeness of entailment, but we point out that such an approach could not work, unless polynomial hierarchy collapses to $\coNP$. 
	This is because the set of true $\vec \forall \vec \exists$-formulas whose matrices are even monotone implications, let alone linear, is already in $\coNP$.
	To see this, suppose otherwise, and consider a closed formula:
	\begin{equation}
	\label{eqn:mon-imp}
	\forall \vec x . \exists \vec y . ( \phi(\vec x, \vec y) \imp \psi (\vec x , \vec y)  )
	\end{equation}
	where $\phi$ and $\psi$ are monotone.
	We have the following:
	\[
	\begin{array}{rcll}
	\exists \vec y . (\phi (\vec x , \vec y) \imp \psi (\vec x, \vec y)) & \iff & \forall \vec y \phi (\vec x , \vec y) \imp \exists \vec y \phi(\vec x , \vec y) & \text{by De Morgan equivalences} \\
	& \iff & \phi(\vec x , \vec 0 ) \imp \psi (\vec x , \vec 1) & \text{by monotonicity}
	\end{array}
	\]
	Thus we have reduced the truth of \eqref{eqn:mon-imp} to the truth of $\forall \vec x . (\phi(\vec x , \vec 0 ) \imp \psi (\vec x , \vec 1))$, which is, of course, in $\coNP$.
\end{remark}

\subsection{Complexity of evaluation}
By adequacy, Theorem~\ref{thm:completeness}, and unwinding the definition of evaluation, we already have as a corollary of Theorem~\ref{thm:entailment-pip2-complete} that there cannot be a polynomial-time algorithm for evaluation, unless the polynomial-hierarchy collapses to $\coNP$. In fact we even have that there cannot be such an algorithm in $\coNP\cap\NP$, for the same reason.

In this section we go further and show that evaluation is in fact $\NP$-complete. 
This suggests that there is no `local' vertex-contraction style procedure for evaluation, unless $\PP=\NP$.

\begin{theorem}
	The graph evaluation problems, i.e.\ $G(X,0)$ and $G(X,1)$, are $\NP$-complete.
\end{theorem}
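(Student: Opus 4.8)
The plan is to show both evaluation problems lie in $\NP$ and are $\NP$-hard, deducing the two directions from one another via the duality of Remark~\ref{rmk:duality}. For membership, observe that a maximal clique (respectively maximal stable set) of $G$ is a polynomial-size object all of whose defining properties are polynomial-time checkable: being a clique (stable set) is obvious, and $S$ is maximal exactly when every $v \in \Ver(G)\setminus S$ has a non-neighbour (respectively neighbour) in $S$. Hence a nondeterministic machine witnessing $G(X,1)$ can guess a set $S\subseteq X$ and verify that it is a maximal clique of $G$; symmetrically for $G(X,0)$ it guesses a $T$ disjoint from $X$ and verifies it is a maximal stable set. Both problems are therefore in $\NP$.

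For hardness I would reduce from $\SAT=\vec\exists\CNF$, which is $\NP$-complete, targeting $G(X,1)$; the case of $G(X,0)$ then follows immediately since $G(X,1)$ iff $\compl G(\compl X,0)$, so $\phi \mapsto (\compl G,\compl X)$ reduces $\SAT$ to the $0$-evaluation problem. Given a CNF $\bigwedge_{j=1}^N \bigvee C_j$ over variables $x_1,\dots,x_n$ (with each clause nonempty), I build $G$ as follows:
\begin{itemize}
	\item two \emph{literal-vertices} $p_i,q_i$ for each $x_i$ (encoding $x_i\mapsto 1$ and $x_i \mapsto 0$), with edges making the literal-vertices a complete multipartite graph with parts $\{p_i,q_i\}$, so that $p_i,q_i$ are the only non-adjacent literal-vertices;
	\item one \emph{clause-vertex} $c_j$ per clause $C_j$, with no edges among the $c_j$, and with $c_j$ adjacent to exactly those literal-vertices consistent with \emph{falsifying} $C_j$: both $p_i$ and $q_i$ whenever $x_i$ does not occur in $C_j$, only $q_i$ when $x_i\in C_j$, and only $p_i$ when $\compl x_i \in C_j$.
\end{itemize}
Finally set $X$ to be the set of all literal-vertices.

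The correctness hinges on two observations. First, any clique $S\subseteq X$ that is maximal in $G$ must contain exactly one of $p_i,q_i$ for every $i$: if some variable were omitted, the corresponding literal-vertex would be adjacent to all of $S$ (as $S$ contains only literal-vertices) and hence appendable, contradicting maximality. Thus the maximal cliques contained in $X$ are exactly the \emph{total assignments}. Second, for such a total assignment $S$, the clause-vertex $c_j$ is adjacent to all of $S$ precisely when $S$ lies in the neighbourhood $N(c_j)$, which by construction happens iff $S$ falsifies $C_j$. Consequently $S$ is maximal in $G$ iff no $c_j$ is appendable iff $S$ satisfies every clause, giving $G(X,1)$ iff $\phi$ is satisfiable; the reduction is clearly polynomial since $|\Ver(G)| = 2n+N$.

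The main obstacle is arranging the clause gadget so that the \emph{universal} character of maximality (no vertex may be appended) matches the \emph{existential} character of clause-satisfaction (some literal is true). This forces the complementary encoding above: $c_j$ must be appendable exactly when $C_j$ is falsified, which dictates connecting $c_j$ to the falsifying literals and to \emph{all} literals of the variables outside $C_j$. Verifying both directions of this biconditional—in particular that maximality among literal-vertices is unaffected by the presence of the clause-vertices and forces a total assignment, and that $S\subseteq N(c_j)$ faithfully captures falsification for total $S$—is the only genuinely delicate part; the remaining verifications and the duality step are routine.
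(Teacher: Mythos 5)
Your proposal is correct, but it takes a genuinely different route from the paper's. The paper only gives the hardness argument (membership in $\NP$ is left implicit), and it reduces $\SAT$ to the \emph{$0$-evaluation} problem $G(X,0)$: its gadget has one vertex per \emph{literal occurrence}, with each clause forming a clique, edges between dual literal occurrences, and a fresh marker vertex $c_n$ attached to clause $C_n$; the assignment is the set of markers, and maximal stable sets disjoint from the markers correspond to satisfying assignments. You instead reduce $\SAT$ to the \emph{$1$-evaluation} problem $G(X,1)$ with a per-variable gadget: a complete multipartite graph on pairs $\{p_i,q_i\}$ whose maximal cliques inside $X$ are exactly total assignments, plus clause vertices wired to the \emph{falsifying} literals so that $c_j$ is appendable precisely when $C_j$ is falsified. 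Both proofs then transfer to the other evaluation problem by the duality of Remark~\ref{rmk:duality}, just in opposite directions. Your encoding buys a cleaner correctness argument: because maximal cliques in the multipartite gadget are automatically total and consistent, you avoid the technicality the paper must handle where a satisfying assignment can hit a clause twice (forcing its arbitrary ``least literal'' choice to extract a maximal stable set); the paper's occurrence-based gadget, in turn, avoids any global structure on the literal vertices. One small omission on your side: like the paper (cf.\ Remark~\ref{rmk:no-dual-variables}), you should assume clauses are not tautological --- if both $x_i$ and $\compl x_i$ occur in $C_j$, your case-based description of the neighbourhood of $c_j$ is contradictory --- but such clauses can be deleted in polynomial-time preprocessing, so this is cosmetic. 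Your explicit $\NP$-membership argument (guess a clique/stable set, check maximality by the ``every outside vertex has a non-neighbour/neighbour in $S$'' criterion) is also a welcome addition that the paper's proof omits.
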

\begin{proof}
	We reduce $\vec{\exists}\CNF$ (i.e.\ $\SAT$) to graph evaluation, namely the problem $G(X,0)$.
	(Again, the case of $G(X,1)$ is obtained by duality).

	Let $ \phi$ be a formula $  \exists \vec x . \bigwedge\limits^N_{n=1} \bigvee C_n$ where each $C_n$ is a set of literals over the variables $\vec x$. 
	Write $\phi_0$ for the matrix of $\phi$, i.e.\ $\bigwedge\limits^N_{n=1} \bigvee C_n$.
	
	\begin{remark}
		\label{rmk:no-dual-variables}
		We assume without loss of generality that each $C_n$ contains at most one of $x$ and $\compl x$, for any variable $x$ (otherwise just delete the clause).
	\end{remark}
	
	Let $L$ denote the set of literal occurrences in $\phi_0$ and define a set of fresh nodes $C=\{ c_1, \dots, c_N \}$.
	We define the graph $G=(V,E)$ as follows:
	\begin{itemize}
		\item The set $V$ of nodes of $G$ is $L \cup C$.
		\item The set $E$ of edges of $G$ consists of:
		\begin{itemize}
			\item an edge between any two literal occurrences in the same clause; and,
			\item an edge between any occurrence of $x_i$ and any occurrence of $\compl x_i$; and,
			\item an edge between any literal occurrence in a clause $C_n$ and $c_n$.
		\end{itemize}
	\end{itemize}

	We claim that $G(C,0)$ if and only if $\phi$ is true, as required.
	First, suppose that $G(C,0)$ and let $T \in\ms(G)$ be disjoint from $C$, by definition of $G(C,0)$.
	We have the following:
	\begin{itemize}
		\item $T$ intersects each $C_n$. Otherwise, it would contain $c_n$ and thus intersect $C$.
		\item $T$ is \emph{consistent}, i.e.\ if it contains an occurrence of $x$ it does not contain an occurrence of $\bar x$, by definition of $E$.
	\end{itemize}
	Thus $T$ induces an assignment $X \subseteq \vec x$ in the natural way such that $\phi_0(X)$ is true.
	
	Conversely, suppose that $\phi$ is true and let $X \subseteq \vec x$ be a satisfying assignment for $\phi_0$.
	The set of true literal occurrences in $G$ under $X$ (defined just like $X_1$ was defined in \eqref{eqn:true-literals-of-X})
	almost serves as an appropriate maximal stable set but for the technicality, as before, that it may intersect some clause more than once.
	Again, we avoid this issue by making arbitrary choices.
	Define:
	\[
	T \quad := \quad \{ \text{least true literal under $X$ in $C_n$} : 0<n\leq N \}
	\]
	We have the following:
	\begin{itemize}
		\item $T$ is stable, by Remark~\ref{rmk:no-dual-variables} and consistency of $X$;
		\item $T$ intersects each $C_n$, since $X$ was a satisfying assignment for $\phi_0$, and so $T$ is furthermore maximally stable by definition of $E$; and,
		\item $T$ does not intersect $C$, by construction.
	\end{itemize}
	Thus $T \in \ms(G)$ such that $T\cap C = \emptyset$, as required.
\end{proof}

%
%
%
		
		\section{Modular decomposition and an algorithm for evaluation}
		\label{sect:moddecomp}

		Despite evaluation being $\NP$-complete, we can still define a ``recursive'' algorithm for it based on known graph decompositions.
		While this algorithm does not operate in polynomial time, it does allow us to reduce evaluation, as well as determining whether a graph is deterministic or total, to the so-called ``prime'' graphs.

		\subsection{Modules: `zooming' out of graphs}
The notion of a module generalises the notion of a formula context to arbitrary graphs.		

		\begin{definition}
			[Modules]
			For a graph $G = (V,E)$ a \emph{module} is a set $M \subseteq V$ such that every element of $M$ has the same neighbourhood outside $M$ in $G$.
			I.e.\ 
			$\forall x,y \in M . \forall z \in V \setminus M . (\{x,z\} \in E \text{ iff } \{y,z \} \in E)$.
			
			The sets $ V, \emptyset  $ and $ \set{x} $, for $ x \in V $, are always modules and are known as the \e{trivial modules}.  
			Any module $ M \subsetneq V $ is a \e{proper} module.

		\end{definition}
	
	Rephrasing the above definition visually, $M$ being a module means that, for any $z \notin M$, either:
	\begin{itemize}
		\item $\forall m \in M. \redge[]{m}{y}$; or	
		\item $\forall m \in M. \gedge{m}{y} $.
	\end{itemize}
As a notational convention, given a graph $G$, for two sets of nodes $ X,Y $ we write $ \redge{X}{Y}$ (or $ \gedge{X}{Y}  $) to express that for all $x \in X $ and all $ y \in Y $, $ \redge{x}{y} $ (respectively, $ \gedge{x}{y} $) in $G$.

\begin{observation}\label{obs:modulesdisjoint}
	For disjoint modules $ M,N $, we have either $ \redge{M,N} $ or $ \gedge{M,N} $
\end{observation}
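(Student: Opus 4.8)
The plan is to pin down one crossing pair between the two modules and then propagate its status (edge or non-edge) to every crossing pair, using the defining property of each module exactly once. First I would record the only consequence of disjointness that is needed: since $M \cap N = \emptyset$, every node of $N$ lies in $V \setminus M$ and every node of $M$ lies in $V \setminus N$, so the module condition for $M$ may be applied to any node of $N$, and symmetrically the module condition for $N$ may be applied to any node of $M$.

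The argument then splits on whether any edge crosses between the two modules. Suppose there exist $x_0 \in M$ and $z_0 \in N$ with $\{x_0, z_0\} \in E$. Because $z_0 \in V \setminus M$, the module property of $M$ yields, for every $x \in M$, that $\{x, z_0\} \in E$ iff $\{x_0, z_0\} \in E$; hence $\{x, z_0\} \in E$ for all $x \in M$. Now fixing any such $x \in M \subseteq V \setminus N$ and applying the module property of $N$, I get, for every $z \in N$, that $\{x, z\} \in E$ iff $\{x, z_0\} \in E$; hence $\{x, z\} \in E$. Since $x \in M$ and $z \in N$ were arbitrary, every crossing pair is an edge, i.e. $\redge{M}{N}$ in $G$. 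In the complementary case there is simply no crossing edge, so $\{x,z\} \notin E$ for all $x \in M$ and $z \in N$, which is precisely $\gedge{M}{N}$ in $G$. The degenerate cases $M = \emptyset$ or $N = \emptyset$ need no separate treatment, since then both conclusions hold vacuously.

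I do not expect any real obstacle: the statement is essentially a two-step diagram chase. The only point that needs care is the order in which the two module conditions are invoked — first equalising the adjacency of all of $M$ to a single fixed node $z_0 \in N$, and only then sweeping across all of $N$ from a fixed node of $M$ — together with the bookkeeping that disjointness is exactly what licenses each application. No separate reasoning is required for the green conclusion, as it is just the negation of the red one.
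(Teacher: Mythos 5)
Your proof is correct: the two-step propagation (first using the module property of $M$ to equalise adjacency to a fixed $z_0 \in N$, then the module property of $N$ to sweep across all of $N$, with disjointness licensing each application) is exactly the routine verification intended, and the paper itself states this as an Observation without giving any proof. Nothing is missing; the edge/non-edge case split and the vacuous degenerate cases are handled properly.
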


In this way, modules allow us to `zoom out' and see graphs as compositions of smaller graphs.
This is similar to how we displayed the graphs in the case study of Section~\ref{subsect:case-study}.
Let us elaborate on this idea more formally.

\begin{definition}
	[Quotients]
	Fix a graph $G=(V,E)$ and let $P\subseteq \pow (V)$ be a partition of $V$ into (nonempty) modules (called a \emph{modular partition}).
	The \emph{quotient graph} $G/P$ is defined as follows:
\begin{itemize}
	\item The set of vertices of $G/P$ is just $P$.
	\item $\{M,N\}$ is an edge of $G/P$ just if $\redge M N $ in $G$.
\end{itemize}
\end{definition}

Notice that we deliberately use modules themselves as nodes in a quotient graph. This allows us to freely switch between consideration of the entire graph and just its quotient when seeing graphs visually.
For the same reason, we will often only speak about quotients up to isomorphism.

\begin{example}
	\label{ex:modular-partition-non-maximal}
Consider the following graph $G$, written with only edges indicated:
\[
{\begin{tikzpicture}
	\node[vertex] (v') at (0,0) {$v'$};
	\node[vertex] (x') at (0,-1) {$x'$};

	\node[draw,fit=(v') (x')] (M1) {};
	
	\node[vertex] (v) at (1.3,0) {$v$};
	\node[vertex] (x) at (1.3,-1) {$x$};

	\node[draw,fit=(v) (x) ] (M2) {};
	
	\node[vertex] (u) at (2.4,0) {$u$};
		\node[vertex] (w') at (2.4,-1) {$w'$};
	\node[vertex] (w) at (3.3,0) {$w$};
	
	\draw[b] (M1) -- (M2);
	\draw[b] (M2) -- (u);
		\draw[b] (M2) -- (w');
	\draw[b] (u) -- (w);

	\end{tikzpicture}}
\]
Here we have identified two nontrivial modules, $M = \{v',x'\}$ and $N= \{v,x\}$ whose elements have the same adjacencies.
From this presentation, we may isolate a particular partition of the vertices into modules:
\[
P = \{ M, N, \{u\}, \{w'\}, \{w\} \}
\]
The graph $G/P$ is thus the following:
\[
{\begin{tikzpicture}

	\node[vertex] (M1) at (0,-0.5) {$M$};
	

	\node[vertex] (M2) at (1.3,-0.5) {$N$};
	
	\node[vertex] (u) at (2.4,0) {$u$};
	\node[vertex] (w') at (2.4,-1) {$w'$};
	\node[vertex] (w) at (3.3,0) {$w$};
	
	\draw[b] (M1) -- (M2);
	\draw[b] (M2) -- (u);
	\draw[b] (M2) -- (w');
	\draw[b] (u) -- (w);
	
	\end{tikzpicture}}
\]
Notice that, as an overloading of notation, we may use the same diagram above as a representation of $G$ itself, identifying $M$ and $N$ with $G\restr M$ and $G\restr N$ respectively.
\end{example}

Notice that the module $M$ in the previous example is not maximally proper, as it may be extended by $w'$.
The point of modular decompositions we later define is to take quotients as finely as possible, recursively expressing it as a `graph of graphs'.
However, some graphs cannot be simplified in this way, and these form the critical points of modular decomposition.

\begin{definition}
	[Prime graphs]
		Let $G$ be a graph of size at least $3$. If every module in $ G $ is trivial, we say that $ G $ is a \e{prime} graph. 
\end{definition}
Prime graphs have been studied extensively, see for example \cite{COURNIER199861} \cite{ILLE201476} \cite{habib2010survey}. 

\subsection{Modular decomposition of a graph}
		
We have the following natural algebraic properties of modules:
		\begin{proposition}
			[Algebra of modules]
				Let $ G $ be a graph with modules $M$ and $N$.
			\begin{enumerate}
				\item $ M \cap N $ is a module.
				\item If $ M\cap N \neq \emptyset$ then $ M \cup N $ is a module.
			\end{enumerate}	
			
		\end{proposition}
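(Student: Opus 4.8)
The plan is to verify the defining property of a module directly in each case, exploiting the fact that the module condition constrains \emph{only} the adjacencies between a vertex inside the module and a vertex outside it. In both parts the strategy is the same: fix two candidate vertices $x,y$ of the putative module and an external vertex $z$, and reduce the required equivalence $\{x,z\}\in E \iff \{y,z\}\in E$ to an application of the module property of $M$ or of $N$ (or both).

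For part (1), I would fix $x,y \in M\cap N$ and $z \in V\setminus(M\cap N)$, and aim to show $\{x,z\}\in E$ iff $\{y,z\}\in E$. The key observation is that $z \notin M\cap N$ means $z\notin M$ or $z\notin N$. In the first case, since $x,y\in M$ and $z\notin M$, the module property of $M$ immediately yields the equivalence; in the second case the module property of $N$ does the same. Either way the equivalence holds, so $M\cap N$ is a module. Note that this argument is uniform across the trivial cases (e.g.\ $M\cap N=\emptyset$, where the condition is vacuous).

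For part (2), I would fix a witness $w\in M\cap N$, which exists by the hypothesis $M\cap N\neq\emptyset$, and use it as a pivot. Given $x,y\in M\cup N$ and $z\in V\setminus(M\cup N)$, observe that now $z\notin M$ \emph{and} $z\notin N$. For any vertex $v\in M\cup N$ I claim that $\{v,z\}\in E$ iff $\{w,z\}\in E$: if $v\in M$ then $v,w\in M$ and $z\notin M$, so the module property of $M$ applies; if $v\in N$ then $v,w\in N$ and $z\notin N$, so the module property of $N$ applies. Applying this claim to $x$ and to $y$ and chaining the two equivalences through $w$ gives $\{x,z\}\in E$ iff $\{y,z\}\in E$, as required.

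The only real subtlety lies in part (2): the nonemptiness hypothesis is exactly what furnishes the shared pivot $w$ needed to compare adjacencies of an $M$-only vertex with those of an $N$-only vertex, and it is essential that $z$ lie outside \emph{both} modules so that both module properties are simultaneously available. Without a common element there would be no way to link the two sides, and indeed the union of two disjoint modules need not be a module (cf.\ Observation~\ref{obs:modulesdisjoint}, where disjoint modules may sit in opposite adjacency relations to an external vertex). Beyond this, both arguments are routine case analyses with no computation.
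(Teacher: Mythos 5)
Your proof is correct; in fact the paper states this proposition without any proof at all, treating it as routine, and your argument is exactly the standard one that fills that gap: for the intersection, $z\notin M\cap N$ forces $z$ outside at least one of the two modules, whose module property then settles the equivalence, and for the union, the hypothesis $M\cap N\neq\emptyset$ supplies the pivot $w$ through which adjacencies of $M$-vertices and $N$-vertices to the external vertex $z$ are chained. Your closing remark that the nonemptiness hypothesis is essential is also well placed, since two disjoint modules can indeed have opposite adjacency relations to an outside vertex, so their union fails to be a module.
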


As a result of this algebraic structure, we have the following well-known decomposition result for graphs:

\begin{proposition}
	[Modular decomposition of a graph, \cite{gallai1967transitiv}]
	\label{prop:mod-decomp}
	For every nonempty graph $G$, we have exactly one of the following:
	\begin{enumerate}
		\item\label{item:singleton-graph} $G$ is a singleton graph $\{x\}$.
		\item\label{item:parallel-comp} $G$ is disconnected.
		\item\label{item:series-comp} $G$ is co-disconnected (i.e.\ $\compl G$ is disconnected).
		\item\label{item:biconnected} $V(G)$ is partitioned by its maximal proper modules.
	\end{enumerate}
\end{proposition}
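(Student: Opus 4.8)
The plan is to read the four alternatives as a classification by two binary dichotomies — connected versus disconnected, and co-connected versus co-disconnected — once the singleton case is set aside, and to locate essentially all the work in the last alternative. Exhaustiveness and mutual exclusivity are bookkeeping; the substantive content is the \emph{description} of case~\ref{item:biconnected}, namely that for a connected and co-connected graph the maximal proper modules genuinely form a partition, and that is where I would concentrate.

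For exhaustiveness and exclusivity, if $|\Ver(G)| = 1$ we are in case~\ref{item:singleton-graph}, so assume $|\Ver(G)| \ge 2$. Such a $G$ is disconnected (case~\ref{item:parallel-comp}) or connected; if connected, $\compl G$ is disconnected (case~\ref{item:series-comp}) or connected. The only point needing justification is that $G$ and $\compl G$ are never both disconnected: if $G$ breaks into at least two components, then two vertices of $\compl G$ in different components are directly adjacent there, and two in the same component are joined by a length-two path through any vertex of another component, so $\compl G$ is connected. Reading case~\ref{item:biconnected} as ``$G$ and $\compl G$ are both connected'', the four alternatives are then manifestly exhaustive and pairwise exclusive, and it remains only to verify the description attached to case~\ref{item:biconnected}.

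So suppose $G$ is connected and co-connected with $|\Ver(G)| \ge 2$; I must show its maximal proper modules partition $\Ver(G)$. Covering is immediate: each singleton $\{x\}$ is a proper module (since $|\Ver(G)| \ge 2$), hence by finiteness extends to a maximal proper module, so every vertex lies in one. The heart is disjointness. Suppose $M \neq N$ are maximal proper modules with $M \cap N \neq \emptyset$. By the algebra of modules stated above, $M \cup N$ is a module; being distinct maximal modules, $M$ and $N$ each sit strictly inside it, so maximality forces $M \cup N = \Ver(G)$. Writing $A = M \setminus N$, $B = M \cap N$, $C = N \setminus M$, these are nonempty and partition $\Ver(G)$. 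The crux is that the adjacency between $A$ and $C$ is \emph{uniform}, meaning either every $A$--$C$ pair is an edge or none is: since $C$ is disjoint from the module $M \supseteq A \cup B$, each $c \in C$ is adjacent to all of $A \cup B$ or to none of it; symmetrically, since $A$ is disjoint from the module $N \supseteq B \cup C$, each $a \in A$ is adjacent to all of $C$ or to none of it, and a short combination of these two one-sided uniformities excludes any mixed pattern. Now I split on the two cases. If no $A$--$C$ edge exists, each $c \in C$ is non-adjacent to $A$, hence by the module property of $M$ non-adjacent to all of $M = A \cup B$, so there is no edge between the nonempty sets $M$ and $C$, contradicting connectedness of $G$. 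If instead every $A$--$C$ pair is an edge, I would invoke duality (Remark~\ref{rmk:duality}): the modules of $G$ and of $\compl G$ coincide and $\compl G$ is again connected and co-connected, so this case is literally the previous one read in $\compl G$, now contradicting co-connectedness. Either way overlapping maximal proper modules are impossible, and with covering this gives the partition.

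I expect the genuine obstacle to be precisely this uniformity claim for the $A$--$C$ adjacency together with its conversion into a disconnection of $G$ (or of $\compl G$): keeping straight which module's defining property licenses which one-sided uniformity, and in which direction, is exactly where a hasty argument slips. Everything else — the elementary fact that $G$ and $\compl G$ are never both disconnected, and the algebra of modules — is available off the shelf, and the $\redge{}{}$ case is best dispatched by duality rather than repeated by hand.
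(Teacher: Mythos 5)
Your proposal cannot be compared against a proof in the paper for a simple reason: the paper does not prove Proposition~\ref{prop:mod-decomp} at all, but states it as a known theorem of Gallai with a citation. What you have written is therefore a self-contained proof where the paper has none, and it is correct. Two remarks. First, your interpretive move is actually necessary for the statement to be literally true: read na\"ively, the four alternatives are not mutually exclusive (for two isolated vertices, $G$ is disconnected \emph{and} its maximal proper modules, the two singletons, partition $\Ver(G)$), so one must read case~\ref{item:biconnected} as ``$G$ and $\compl G$ are both connected, and then the maximal proper modules partition $\Ver(G)$''; this is exactly what the label ``biconnected'' and the ``otherwise'' clause of Definition~\ref{dfn:dec-tree} intend, and your proof of that reading is the standard Gallai argument: overlapping maximal proper modules $M,N$ force $M\cup N=\Ver(G)$ via the algebra of modules, the $A$--$C$ adjacency between $A=M\setminus N$ and $C=N\setminus M$ is uniform by combining the two one-sided module conditions, and either uniform option disconnects $G$ or $\compl G$. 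I checked the uniformity step you flagged as the crux: if $a_0c_0$ is an edge and $a_1c_1$ is not, then $c_0\notin M$ sees $a_0\in M$, hence all of $M$, hence $a_1$; then $a_1\notin N$ sees $c_0\in N$, hence all of $N$, hence $c_1$ --- contradiction, as you claimed. Second, a small attribution slip: Remark~\ref{rmk:duality} is about evaluation and entailment under complementation, not about modules; the fact you actually need for your second case, that $M$ is a module of $G$ if and only if it is a module of $\compl G$ (so that the ``all edges'' case is the ``no edges'' case read in $\compl G$), is immediate from the definition of module but should be stated as such rather than cited from that remark.
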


This motivates the following definition:
\begin{definition}
	[Prime quotient]
	Given a biconnected graph $G = (V,E)$, its \emph{prime quotient}, written $P_G$, is the set of its maximal proper modules.
\end{definition}

Under Proposition~\ref{prop:mod-decomp}, we have that $G/P_G$ (often simply written $P_G$, as abuse) is a graph with edges $\{M,N\}$ just if $\redge M N $ in $G$.
As the name suggests, the prime quotient is indeed a prime graph:

%
%
%
%
%
%
%

	\begin{fact}
		The prime quotient of a bi-connected graph is a prime graph.
	\end{fact}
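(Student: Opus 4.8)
The plan is to verify directly the two requirements in the definition of a prime graph for the quotient $G/P_G$: that it has size at least $3$, and that every one of its modules is trivial. Throughout I would use that, since $G$ is biconnected (case~\ref{item:biconnected} of Proposition~\ref{prop:mod-decomp}), $G$ is neither a singleton, nor disconnected, nor co-disconnected, and its maximal proper modules partition $\Ver(G)$; these blocks are precisely the vertices of $G/P_G$.

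First I would settle the size. A single block would have to equal $\Ver(G)$ and so fail to be proper, so there are at least two blocks. If there were exactly two blocks $M$ and $N$, then by Observation~\ref{obs:modulesdisjoint} either $\redge{M}{N}$ or $\gedge{M}{N}$ in $G$; in the first case $\compl{G}$ is disconnected, so $G$ is co-disconnected, and in the second case $G$ is itself disconnected. Either conclusion contradicts biconnectedness, so $G/P_G$ has at least three vertices.

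The heart of the argument is a lemma lifting modules of the quotient back to $G$: if $\mathcal M \subseteq P_G$ is a module of $G/P_G$, then its union $U := \bigcup \mathcal M \subseteq \Ver(G)$ is a module of $G$. To prove this I would take any $z \in \Ver(G) \setminus U$; as the blocks partition $\Ver(G)$, $z$ lies in some block $N \notin \mathcal M$. Since $\mathcal M$ is a module of the quotient, all blocks in $\mathcal M$ have the same adjacency to $N$ in $G/P_G$; unwinding the definition of the quotient together with Observation~\ref{obs:modulesdisjoint}, this says that in $G$ either every block of $\mathcal M$ has all edges present to $N$, or every block has no edge to $N$. In both cases any two nodes of $U$ have the same relation to $z$, so $U$ is a module of $G$, as claimed.

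It then remains to deduce triviality of $\mathcal M$ from the structure of $U$ as a union of partition blocks, and this is where I expect the only real subtlety. If $U = \emptyset$ then $\mathcal M = \emptyset$, and if $U = \Ver(G)$ then $\mathcal M = P_G$, so $\mathcal M$ is trivial in these extreme cases. Otherwise $U$ is a nonempty proper module of $G$, and here the key move is to invoke maximality directly: were $\mathcal M$ to contain two distinct blocks $M$ and $M'$, the proper module $U$ would strictly contain the maximal proper module $M$, a contradiction; hence $\mathcal M$ is a singleton, which is again trivial. The point that needs care is that adjacency in $G/P_G$ faithfully encodes the red/green relation between whole blocks of $G$ --- exactly the content of Observation~\ref{obs:modulesdisjoint} and the definition of the quotient --- so that ``same adjacency in the quotient'' really does transfer to ``same neighbourhood in $G$''; once this is pinned down, the remaining steps are routine.
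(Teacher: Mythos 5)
Your proof is correct and complete. Note that the paper states this result as a bare \emph{Fact} with no accompanying proof at all (it is a classical consequence of Gallai's modular decomposition theory), so there is no in-paper argument to compare against; your write-up supplies the missing standard argument. Both of your ingredients check out. For the size bound, you correctly combine Observation~\ref{obs:modulesdisjoint} with the mutual exclusivity of the cases in Proposition~\ref{prop:mod-decomp}: a quotient with one block contradicts properness, and with exactly two blocks the uniform red or green relation between them would make $G$ co-disconnected or disconnected respectively, contradicting biconnectedness. For triviality, your lifting lemma is the right key step: since non-adjacency of two disjoint blocks in $G/P_G$ means a uniformly green relation between them in $G$ (again Observation~\ref{obs:modulesdisjoint}), a module $\mathcal{M}$ of the quotient pulls back to the module $\bigcup\mathcal{M}$ of $G$, and then maximality of the blocks does the rest --- any $\mathcal{M}$ containing two distinct blocks, other than all of $P_G$, would produce a proper module of $G$ strictly containing a maximal proper module, a contradiction. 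The case split $\bigcup\mathcal{M} = \emptyset$, $\bigcup\mathcal{M} = \Ver(G)$, and the intermediate case is handled correctly, using that the blocks are nonempty and pairwise disjoint.
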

	
	\begin{example}
		\label{ex:modular-partition-revisited}
Revisiting Example~\ref{ex:modular-partition-non-maximal}, notice that the module $M$ is not maximally proper, since it may be extended by $w'$. Let us call the resulting module $M'$.
The prime quotient of $G$ is $P_G = \{M',N,\{u\},\{w\} \}$, so that $G/P_G$ is actually a $\pfour$:
\[
\FourGraphPath{M',N,u,w}bwwbwb
\]
The $\pfour$ is the smallest prime graph.
	\end{example}
%
%
%
%
%
%

%
%
%

			We are ready to define the modular decomposition tree of a graph.
			
		\begin{definition}
			[Decomposition tree]
			\label{dfn:dec-tree}
			We define the \emph{decomposition tree} of a graph $G$, written $\tree(G)$ by induction on its size, under the classification of Proposition~\ref{prop:mod-decomp}:
			\begin{enumerate}
				\item\label{item:md-singleton} If $G$ has just one node, i.e.\ is $(\{x\}, \emptyset)$, then $\tree(G) := G$.
				\item\label{item:md-dis} If $G$ is disconnected with connected components $G_1, \dots , G_n$, then $\tree (G) $ is:
				\[
				\begin{tikzpicture}
				\node{$\dis$}
				child {
					node (a) {$\tree(G_1) \quad$}
				}
				child {
					node (b) {$\quad \tree(G_n)$}
				};
				\path (a) -- (b) node [ midway] {$\cdots$};
				\end{tikzpicture}
				\]
				We write $\tree(G) = \vee (\tree (G_1), \dots , \tree(G_n))$ as a more compact notation.
				\item\label{item:md-con} If $\compl{G}$ is disconnected with connected components $\compl G_1 , \dots \compl G_n$, then $\tree(G)$ is:
				\[
				\begin{tikzpicture}
				\node{$\con$}
				child {
					node (a) {$\tree(G_1) \quad$}
				}
				child {
					node (b) {$\quad \tree(G_n)$}
				};
				\path (a) -- (b) node [ midway] {$\cdots$};
				\end{tikzpicture}
				\]
				We write $\tree(G) = \wedge (\tree (G_1), \dots , \tree(G_n))$ as a more compact notation.
				\item\label{item:md-prime} Otherwise let $P_G = \{M_1, \dots, M_n\}$ and define $\tree(G)$ as,
				\[
				\begin{tikzpicture}
				\node{$G/P$}
				child {
					node (a) {$\tree(M_1) \quad$}
				}
				child {
					node (b) {$\quad \tree(M_n)$}
				};
				\path (a) -- (b) node [ midway] {$\cdots$};
				\end{tikzpicture}
				\]
				where we identify each $M_i$ with the corresponding induced subgraph $G\restr M_i$.
				We write $\tree(G) = (G/P) (\tree(M_1),\dots, \tree(M_n) )$ as more compact notation.
			\end{enumerate}
		\end{definition}

\begin{remark}
	Notice that, identifying formulas with their formula trees, the mapping $\tree (\cdot)$ on cographs is precisely the inverse of the mapping $\web (\cdot)$, mapping a formula to its relation web, up to associativity and commutativity of $\vee$ and $\wedge$.
\end{remark}

Let us see some examples of decomposition trees in action.

\begin{example}
Revisiting Examples~\ref{ex:modular-partition-non-maximal} and \ref{ex:modular-partition-revisited}, we have that $T(G)$ is as follows:
\[
\begin{forest}
[
$\pfour$ [
$\vee$ [ $v'$] [$x'$] [$w'$ ]
]
[
$\vee$ [$v$] [$x$]
]
[$u$]
[$w$]
]
\end{forest}
\]
Notice that we have simply written $\pfour$ as the root of $\tree(G)$ rather than the proper isomorphic quotient graph, but this causes no ambiguity here.

The compact notation for $\tree(G)$ is $\pfour ( \vee(v',x',w') , \vee (v,x), u , w )$.
\end{example}

	\begin{ex}
		Let $ G $ be the following graph:
		\[ \SevenMod{c,d,e,f,g,a,b}rggggrgggrrgrgggggggr \]
		Notice that we may equivalently write $G$ in the following way,
		\[
		\begin{tikzpicture}
		\node[vertex] (a) at (-0.5,0) {$a$};
		\node[vertex] (b) at (-0.5,-1) {$b$};
		
		\draw[r] (a) -- (b);
		
		\node[draw, fit=(a) (b) ] (M1) {};
		
		\node[vertex] (c) at (1,-0.5) {$c$};
			\node[vertex] (d) at (2,-0.5) {$d$};
			\node[vertex] (e) at (3,-0.5) {$e$};
			
			\node[vertex] (f) at (4,-1) {$f$};
			\node[vertex] (g) at (4,0) {$g$};
			\draw[r] (f) -- (g);
			
			\node[draw, fit=(f) (g) ] (M3) {};
			
			\draw[r] (c) -- (d);
			\draw[r] (d) -- (e);
			\draw[r] (e) -- (M3);
			
			\draw[g, bend left] (c) edge (e);
			\draw[g, bend right] (d) edge (M3);
			\draw[g, bend left] (c) edge (M3);
			
			\node[draw,fit=(c) (d) (e) (M3)] (M2) {};
			
			\draw[g] (M1) -- (M2);
		\end{tikzpicture}
		\]
%
%
%
and thus $\tree(G)$ has the following form:
		\newsavebox{\GtwooverP}
		\savebox{\GtwooverP}{$\faktor{G_2}{P}$}
		\begin{center}	
			
			\begin{forest}
				[ $\vee$ [ $\wedge$  [$a$, l = 12.2mm] [$b$, l = 12.2mm]   ]   
				[ $\pfour$  [$c$] [$d$] [$e$]  [$\wedge$ [$f$] [$g$]] ]] 
			\end{forest} 
			
		\end{center}
		
	\end{ex}	
	
	
%
		\subsection{Maximal cliques and stable sets via modular decomposition}
%

In this subsection we outline an algorithm for evaluation that operates recursively on the modular decomposition tree. 
This essentially reduces the problem of evaluation to the prime graphs.

			\begin{observation}
				Let $G= (V,E)$ be a graph and $P$ a modular partition.
			The map $\phi : V \to V/P$ by $\phi (x) = M$ unique such that $M\ni x$ induces a homomorphism from $G$ to $G/P$.
			Therefore the images of cliques under $\phi$ are again cliques.
		\end{observation}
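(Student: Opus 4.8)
The plan is to verify the defining property of a homomorphism directly and then read off the clique statement from Observation~\ref{prop:homo-pres-refl-cli-stab}. The only nontrivial input is the dichotomy for disjoint modules recorded in Observation~\ref{obs:modulesdisjoint}.

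First I would fix an edge $\{x,y\} \in E$ and write $M := \phi(x)$ and $N := \phi(y)$ for the unique blocks of $P$ containing $x$ and $y$ respectively. By the definition of homomorphism, it suffices to show that either $\{M,N\} \in \Edg(G/P)$ or $M = N$. If $x$ and $y$ lie in the same block then $M = N$ and there is nothing further to check, so assume $M \neq N$; since $P$ is a partition, $M$ and $N$ are then disjoint modules.

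The key step is to exclude a non-edge between the two blocks. By Observation~\ref{obs:modulesdisjoint}, for disjoint modules we have either $\redge M N$ or $\gedge M N$ in $G$. The green alternative would assert that \emph{every} pair with one endpoint in $M$ and one in $N$ is a non-edge, directly contradicting $\{x,y\} \in E$ with $x \in M$ and $y \in N$. Hence $\redge M N$ holds, which is exactly the condition under which $\{M,N\}$ is an edge of the quotient, by the definition of $G/P$. Thus $\{\phi(x),\phi(y)\} \in \Edg(G/P)$, and $\phi$ is a homomorphism.

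The closing assertion is then immediate: applying Observation~\ref{prop:homo-pres-refl-cli-stab} to the homomorphism $\phi$ shows that $\phi(S)$ is a clique of $G/P$ whenever $S$ is a clique of $G$. I do not expect any genuine obstacle here. The only point worth flagging is that the homomorphism condition is stated \emph{disjunctively} (permitting $\phi(x) = \phi(y)$), which is precisely what allows edges internal to a single module to be collapsed; for distinct blocks the appeal to Observation~\ref{obs:modulesdisjoint} is what cleanly rules out the green case, so the verification stays entirely case-based and routine.
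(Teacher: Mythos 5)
Your proof is correct and follows exactly the route the paper intends: the paper states this as an unproved observation, and your verification via the disjoint-module dichotomy (Observation~\ref{obs:modulesdisjoint}), the definition of the quotient's edges, and the appeal to Observation~\ref{prop:homo-pres-refl-cli-stab} for clique preservation is precisely the routine argument being left implicit. No gaps.
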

		
		In fact the particular homomorphisms induced by quotients preserve a lot more structure. For instance stable sets are also preserved, as well as maximality, as we will now show.
		The point of this is that,
		in order to evaluate graphs recursively on their modular decomposition, we need to first classify their maximal cliques and stable sets in this way.

		\begin{lemma}
						\label{lem:mc-ms-ser-par-comp}
			Let $G=(V,E)$ be a graph and $P\subseteq\pow(V)$ be a modular partition. 
			For $X\subseteq V$ we have:
			\begin{itemize}
				\item $X \in \mc(G)$ iff 
				there exists $S \in \mc(G/P)$ and some $S_M \in \mc(M)$, for $M\in S$, s.t.\ $X = \bigcup\limits_{M \in S} S_M$.
				\item $X \in \ms(G)$ iff there exists $T \in \ms(G/P)$ and some $T_M \in \mc(M)$, for $M\in S$, s.t.\ $X = \bigcup\limits_{M \in T} T_M$.
			\end{itemize}
		\end{lemma}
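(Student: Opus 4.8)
The plan is to prove the characterisation of maximal cliques directly and then obtain the statement for maximal stable sets by duality, appealing to Remark~\ref{rmk:duality} (reading the second bullet, as it must be, with $T_M \in \ms(M)$). I would first record the easy clique-level correspondence. Since $P$ is a partition, for any clique $X$ of $G$ the sets $X \cap M$ are pairwise disjoint, and by the preceding observation (the quotient map sending each vertex to its module is a homomorphism, so preserves cliques) the set $S := \{ M \in P : X \cap M \neq \emptyset \}$ is a clique of $G/P$. Conversely, gluing cliques across a clique $S$ of the quotient yields a clique of $G$: within each $M \in S$ the adjacencies are witnessed by $S_M \in \mc(M)$, while across distinct $M,N \in S$ the edge $\{M,N\}$ of $G/P$ forces, via Observation~\ref{obs:modulesdisjoint}, every vertex of $M$ to be adjacent to every vertex of $N$. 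The real content is upgrading this to \emph{maximal} cliques.

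For the forward direction, given $X \in \mc(G)$, I would set $S_M := X \cap M$ and establish two maximality facts. First, each $S_M$ is maximal in $M$: were some $v \in M \setminus S_M$ to extend $S_M$ within $M$, then since $M$ is red-connected to every other $N \in S$ this $v$ would be adjacent to all of $X$, contradicting maximality of $X$. Second, $S$ is maximal in $G/P$: were some $N \notin S$ to extend $S$ in $G/P$, then any $v \in N$ (nonempty, as modules of a partition are nonempty) would, by the module property, be adjacent to all of each $M \in S$, hence to all of $X$, again contradicting maximality of $X$.

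For the backward direction, suppose $S \in \mc(G/P)$ and $S_M \in \mc(M)$ for each $M \in S$, and put $X = \bigcup_{M \in S} S_M$. Having noted $X$ is a clique, I verify maximality by taking any $v \in V \setminus X$, lying in a module $N$, and splitting into two cases. If $N \in S$, then $v \in N \setminus S_N$, and $v$ cannot be adjacent to all of $S_N$ (else $S_N$ would not be maximal in $N$), so $v$ does not extend $X$. If $N \notin S$, then $v$ adjacent to all of $X$ would, since each $S_M \neq \emptyset$, force every $\{N,M\}$ with $M \in S$ to be an edge of $G/P$ by Observation~\ref{obs:modulesdisjoint}, contradicting maximality of $S$. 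Hence $X \in \mc(G)$, completing the clique case.

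Finally, the maximal stable set statement follows by duality: a modular partition $P$ of $G$ is also a modular partition of $\compl G$, with $\compl{G}/P \cong \compl{(G/P)}$ and $\ms(\,\cdot\,) = \mc(\compl{\,\cdot\,})$ (both at the level of $G$ and of each induced $G \restr M$). Applying the clique characterisation to $\compl G$ and translating back yields exactly the stated equivalence. I expect the main obstacle to be the careful bookkeeping of the four maximality arguments; the single point that makes them all work is that the module property (Observation~\ref{obs:modulesdisjoint}) lets one witnessing vertex of a module "see" an entire adjacent module, which is precisely what drives each contradiction.
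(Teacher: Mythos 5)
Your proposal is correct and follows essentially the same route as the paper's proof: both directions of the clique case are argued by taking $S = \{M : X \cap M \neq \emptyset\}$ and $S_M = X \cap M$ (forward) and by the same two-case maximality analyses using the module property (backward), with the stable-set case obtained by duality. Your write-up is in fact slightly cleaner on the maximality of each $S_M$ (the paper's phrasing there garbles $X$ and $S$), and you correctly read the statement's typo as $T_M \in \ms(M)$ for $M \in T$.
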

			
		\begin{proof}
			[Proof sketch]
			We prove only the statements regarding maximal cliques, the ones for maximal stable sets following by duality.
			
			For the left-right implication, let $X \in \mc(G)$ and set
			$S = \{M \in P : X\cap M \neq \emptyset \}$.
			\begin{itemize}
				\item $S$ is a clique of $G/P$: for any distinct $M,M' \in S$ there are some $x\in M$ and $y \in M'$ by nonemptiness, and we have $\redge x y $ in $G$ since $x,y \in X \in \mc(G)$.
				Thus $\redge M {M'}$ in $G/P$ by modularity.
				\item $S$ is maximal: suppose there is some $M'\notin S$ such that $\redge {M'} S$ in $G/P$. Any $x\in M'$ is not in $ X$, by definition of $S$, and we have $\redge x M $ in $G$ for all $M \in S$. Therefore $\redge x X$ in $G$ (since $P$ partitions $V$) contradicting maximality of $X$.
			\end{itemize}
		Now we may define $S_M = X \cap M$ for each $M\in S$.
			\begin{itemize}
				\item Each $S_M$ is a clique of $M$, since $S_M\subseteq X\in \mc(G)$.
				\item Each $S_M $ is maximal: suppose there is some $x \in M\setminus S_M$ with $\redge x {S_M} $ in $M$. 
				By modularity we must have that $\redge x  S$ in $G$ and hence already $x \in S$ by maximality of $S$.
			\end{itemize}
		Since $P$ is a partition of $V$, we also have that $\{S_M : M\in S \}$ partitions $X$, so $X = \bigcup\limits_{M\in S} S_M$, as required.
			
			For the right-left implication, suppose $S\in \mc(G/P)$ and $S_M\in \mc(M)$, for $M\in S$, s.t.\ $X = \bigcup\limits_{M\in S} S_M$.
			To show that $X$ is a clique, let $x,y \in X$ be distinct. We have two cases:
			\begin{itemize}
				\item there is $M$ s.t.\ $x,y \in M$, in which case $x,y \in S_M$ so $\redge x y $ in $G$; or,
				\item there are disjoint $M,M'$ with $x\in M$ and $y\in M'$, in which case $\redge M {M'} $ in $S$ and so also $\redge x y $ in $G$. 
			\end{itemize}
		For maximality, suppose $x \notin X$ s.t.\ $\redge x X$ in $G$. Again we have two cases:
		\begin{itemize}
			\item if there is $M\in S$ with $x\in M$, then $S_M$ can be extended by $x$;
			\item if $x \in M \notin S$, then we must have $\redge M S$ by modularity, and so $S$ can be extended by $M$.
		\end{itemize}
	In either case we have a contradiction, concluding the proof.
		\end{proof}

		Since the disjunctive and conjunctive nodes in the decomposition tree are special cases of a modular partition, the following characterisation is now immediate from the preceding lemma:
		\begin{proposition}
			[Maximal cliques and stable sets via modular decomposition]
			\label{prop:mc-ms-via-decomposition}
Let $G$ be a graph and $X\subseteq V(G)$. We have:
			\begin{enumerate}
				\item If $G$ is the singleton graph $\{x\}$ then $\mc(G) = \ms(G) =  \{\{x\}\}$.
				\item 
				If $\moddec (G) = \dis (\tree(G_1), \dots, \tree(G_n) )$
then:
\begin{itemize}
	\item $X \in \mc(G)$ if and only if, for some $i$, $X \in \mc(G_i)$.
	\item $X \in \ms(G)$ if and only if, for every $i$, $X\cap V(G_i) \in \ms(G_i)$.
\end{itemize} 
				\item 
If $\moddec (G) = \con (\tree(G_1), \dots, \tree(G_n))$
				then:
				\begin{itemize}
					\item $X \in \mc (G)$ if and only if, for every $i$, $X\cap V(G_i) \in \mc(G_i)$.
					\item $X \in \ms(G)$ if and only if, for some $i$, $X \in \ms(G_i)$.
				\end{itemize}
				\item 
If $\moddec(G) = (G/P)(\tree(M_1), \dots , \tree(M_n))$
%
				then,
				\begin{itemize}
					\item $X \in \mc(G)$ iff 
					there exists $S \in \mc(G/P)$ and some $S_i \in \mc(M_i)$, for $M_i \in S$, s.t.\ $X = \bigcup\limits_{M_i \in S} S_i$.
					\item $X \in \ms(G)$ iff there exists $T \in \ms(G/P)$ and some $T_i \in \ms (M_i)$, for $M_i \in T$, s.t.\ $X = \bigcup\limits_{M_i \in T} T_i$ for some $T_i \in \mc(M_i)$.
				\end{itemize}
			\end{enumerate}
		\end{proposition}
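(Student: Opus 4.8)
The plan is to observe that each of the four cases of the decomposition tree is, in fact, a quotient by a modular partition, so that the preceding Lemma~\ref{lem:mc-ms-ser-par-comp} applies essentially verbatim; the only work is to compute $\mc(G/P)$ and $\ms(G/P)$ for the relevant partition $P$ in each case. The singleton case (item~\ref{item:singleton-graph}) is immediate from the definitions, and the prime case (item~\ref{item:md-prime}) is a direct instance of Lemma~\ref{lem:mc-ms-ser-par-comp} applied to the modular partition $P = P_G$, so nothing further is required there.

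For the disjunctive case (item~\ref{item:md-dis}), where $\moddec(G) = \dis(\tree(G_1),\dots,\tree(G_n))$, I would first verify that the partition $P = \{V(G_1),\dots,V(G_n)\}$ into connected components is a modular partition: any node $z$ outside a component $V(G_i)$ lies in a different component, and so has no edge to any node of $G_i$, whence all nodes of $G_i$ share the same (empty) external neighbourhood. Since distinct components are joined only by green edges, the quotient $G/P$ has no edges at all, i.e.\ $G/P \cong \compl{K}_n$. Now I invoke Lemma~\ref{lem:mc-ms-ser-par-comp}: the maximal cliques of $\compl{K}_n$ are exactly the singletons $\{V(G_i)\}$, so $X\in\mc(G)$ iff $X = S_{G_i} \in \mc(G_i)$ for some $i$, i.e.\ $X\in\mc(G_i)$ for some $i$; dually, the unique maximal stable set of $\compl{K}_n$ is all of $P$, so $X\in\ms(G)$ iff $X = \bigcup_i T_{G_i}$ with $T_{G_i}\in\ms(G_i)$, which is exactly the statement that $X\cap V(G_i)\in\ms(G_i)$ for every $i$.

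The conjunctive case (item~\ref{item:md-con}) follows by the dual argument, which I would carry out via Remark~\ref{rmk:duality}. Here $P = \{V(G_1),\dots,V(G_n)\}$ is again a modular partition (modules of $G$ are exactly the modules of $\compl G$, and the $V(G_i)$ are the components of $\compl G$), and since distinct pieces are joined by red edges the quotient $G/P \cong K_n$ is complete. Applying Lemma~\ref{lem:mc-ms-ser-par-comp}, the unique maximal clique of $K_n$ is all of $P$, giving $X\in\mc(G)$ iff $X\cap V(G_i)\in\mc(G_i)$ for every $i$, while the maximal stable sets of $K_n$ are the singletons, giving $X\in\ms(G)$ iff $X\in\ms(G_i)$ for some $i$.

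I do not expect any genuine obstacle: the result is explicitly a specialisation of the lemma. The only points requiring care are the two bookkeeping checks, namely that connected components (respectively co-connected components) really are modules and that the associated quotient is edgeless (respectively complete), and the translation of the lemma's union form $\bigcup_{M\in T} T_M$ into the componentwise conditions ``$X\cap V(G_i)\in\ms(G_i)$ for every $i$'' used in the statement. Both are routine once the shape of $G/P$ is identified.
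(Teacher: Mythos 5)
Your proposal is correct and takes essentially the same route as the paper: the paper's own proof is a one-line observation that the disjunctive and conjunctive nodes are special cases of modular partitions, so all four cases are immediate specialisations of Lemma~\ref{lem:mc-ms-ser-par-comp}. You simply make explicit the bookkeeping the paper leaves implicit, namely that the (co-)connected components form modular partitions with quotients $\compl{K}_n$ and $K_n$, and that the lemma's union form translates into the componentwise conditions of the statement.
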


%
		
		\subsection{Evaluation via modular decomposition}
	We give a characterisation of evaluation by recursion on the decomposition tree of graphs.
	In effect, this yields an algorithm for evaluation by reduction to prime graphs.
	
	Since our semantics is multivalued, we have to be a little careful with how we construct assignments during recursion on the decomposition tree.
	\begin{definition}
		[Positive and negative quotient assignments]
		Let $G=(V,E)$ be a graph and $P\subseteq \pow(V)$ be a modular partition.
		For an assignment $X\subseteq V$, we define the following subsets of $P$, relative to $G$:
		\[
		\begin{array}{rcl}
		\posquot_P ( X) & := & \{ M\in P : M(X,1) \} \\
			\negquot_P (X) & := & \{ M\in P : \neg M(X,0) \}
		\end{array}
		\]
		
	\end{definition}
	
	\todo{CHECK THE NEGATIVE PART!}
	
	\begin{lemma}
		Let $G=(V,E)$ be a graph and $P\subseteq \pow(V)$ be a modular partition.
		For an assignment $X\subseteq V$ we have:
		\begin{enumerate}
			\item $G(X,1)$ if and only if $(G/P) (\posquot_P (X),1 )$
			\item $G(X,0)$ if and only if $(G/P) (\negquot_P(X),0)$.
		\end{enumerate}
	\end{lemma}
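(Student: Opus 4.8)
The plan is to prove part (1) directly by unfolding the definitions of evaluation and of $\posquot_P$, using the clique-decomposition of Lemma~\ref{lem:mc-ms-ser-par-comp} as the bridge between the cliques of $G$ and those of $G/P$; part (2) I would then obtain by duality via Remark~\ref{rmk:duality} (or, equally cheaply, by repeating the argument mutatis mutandis).

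For the forward direction of (1), suppose $G(X,1)$, so there is $S \in \mc(G)$ with $S \subseteq X$. By Lemma~\ref{lem:mc-ms-ser-par-comp} we may write $S = \bigcup_{M \in \mathcal S} S_M$ for some $\mathcal S \in \mc(G/P)$ and $S_M \in \mc(M)$. The key observation is that each such $M$ lies in $\posquot_P(X)$: indeed $S_M \subseteq S \subseteq X$ and $S_M \in \mc(M)$, so $M(X,1)$. Hence $\mathcal S$ is a maximal clique of $G/P$ contained in $\posquot_P(X)$, giving $(G/P)(\posquot_P(X),1)$. For the converse, suppose $(G/P)(\posquot_P(X),1)$, witnessed by $\mathcal S \in \mc(G/P)$ with $\mathcal S \subseteq \posquot_P(X)$. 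For each $M \in \mathcal S$ the membership $M \in \posquot_P(X)$ means $M(X,1)$, so we may choose $S_M \in \mc(M)$ with $S_M \subseteq X$ (note $S_M \subseteq M$, so in fact $S_M \subseteq X \cap M$). Reassembling $S := \bigcup_{M \in \mathcal S} S_M$, the right-to-left direction of Lemma~\ref{lem:mc-ms-ser-par-comp} gives $S \in \mc(G)$, and $S \subseteq X$ since each $S_M \subseteq X$; thus $G(X,1)$.

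For part (2) I would invoke duality. The facts needed are: $P$ is equally a modular partition of $\compl G$ (modularity is symmetric under complementation), $\compl G / P = \overline{G/P}$, and --- the one point deserving care --- that $\posquot^{\compl G}_P(\compl X) = P \setminus \negquot^{G}_P(X)$. This last identity unwinds cleanly: $M \in \posquot^{\compl G}_P(\compl X)$ iff $\overline{G\restr M}(\compl X,1)$ iff (by duality, since $\mc(\overline{G\restr M}) = \ms(G\restr M)$) $G\restr M(X,0)$, which is exactly the negation of the condition "$\neg M(X,0)$" defining $\negquot^{G}_P(X)$. Applying part (1) to $\compl G$ with assignment $\compl X$ and then re-dualising the quotient turns $G(X,0)$ into $\compl G(\compl X,1)$ and $(G/P)(\negquot_P(X),0)$ into $(\overline{G/P})(\overline{\negquot_P(X)},1)$, so the two sides coincide. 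Alternatively I could repeat the proof of (1) with $\mc$ replaced by $\ms$, "$\subseteq X$" replaced by "disjoint from $X$", and the stable-set clause of Lemma~\ref{lem:mc-ms-ser-par-comp}, which is equally short.

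The computation is essentially mechanical; the only place demanding attention is the bookkeeping around $\negquot_P$. Its defining condition "$\neg M(X,0)$" means that $\negquot_P(X)$ is the set-complement within $P$ of $\{M \in P : M(X,0)\}$, so that requiring a quotient stable set to be \emph{disjoint} from $\negquot_P(X)$ is the same as requiring every module in it to evaluate $X$ to $0$. This polarity flip is exactly what makes both the direct argument and the duality reduction line up, and I expect it to be the main (minor) obstacle rather than any genuine difficulty.
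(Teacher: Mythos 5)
Your proposal is correct and follows essentially the same route as the paper: part (1) is proved exactly as in the paper, in both directions, by passing through Lemma~\ref{lem:mc-ms-ser-par-comp} to decompose/reassemble maximal cliques, noting that each constituent $S_M \subseteq X$ witnesses $M \in \posquot_P(X)$ and conversely. Part (2) is likewise obtained by duality, just as the paper states (the paper leaves the duality reduction implicit, whereas you spell out the identities $\compl{G}/P = \overline{G/P}$ and $\posquot^{\compl G}_P(\compl X) = P \setminus \negquot^{G}_P(X)$ correctly, including the polarity subtlety in the definition of $\negquot_P$).
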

\begin{proof}
	We prove only (1), the case of (2) following by duality.
	
	For the left-right implication, let $S\in \mc(G)$ with $S\subseteq X$.
	By Lemma~\ref{lem:mc-ms-ser-par-comp}, there is $S_P \in \mc(G/P)$ s.t.\ $S = \bigcup\limits_{M \in S_P} S_M$ for some $S_M \in \mc (M)$, for $M \in S_P$.
	Now, for each $M\in S_P$, we have that $S_M \subseteq S \subseteq X$, so $M(X,1)$ and $S_M \in \posquot_P (X)$. 
	Thus we have that $(G/P) (\posquot_P (X),1)$, as required.
	
	For the right-left implication, let $S \in \mc(G/P)$ with $S\subseteq \posquot_P(X)$.
	By definition of $\posquot_P(X)$ we have $\forall M \in S. M(X,1)$.
	so for each $M\in S$ fix $S_M \in \mc(M)$ s.t.\ $S_M \subseteq X$.
	Now, by Lemma~\ref{lem:mc-ms-ser-par-comp} we have $S'\in \mc(G)$ with $S' = \bigcup\limits_{M\in S} S_M$.
	Since $S_M\subseteq X$ for each $M\in S$, we also have $S'\subseteq X$, so $G(X,1) $ as required.	
	\end{proof}
	
	The following result, computing evaluation by recursion on a decomposition tree, is now immediate from the Lemma above and Proposition~\ref{prop:mc-ms-via-decomposition}:
	\begin{theorem}
		[Evaluation by recursion on decomposition trees]
		Let $G$ be a graph and $X \subseteq V(G)$.
		We have:
		\begin{enumerate}
			\item If $G$ is a singleton graph with $V(G) = \{x\}$ then: 
			\begin{itemize}
				\item $G(X,1)$ if and only if $x \in X$.
				\item $G(X,0)$ if and only if $x \notin X$.
			\end{itemize}
		\item If $\tree(G) = \dis(\tree(G_1), \dots , \tree(G_n))$ then:
		\begin{itemize}
			\item $G(X,1)$ if and only if, for some $i$, $G_i(X,1)$.
			\item $G(X,0)$ if and only if, for every $i$, $G_i (X,0)$.
		\end{itemize}
		\item If $\tree(G) = \con(\tree(G_1), \dots , \tree(G_n))$ then:
	\begin{itemize}
		\item $G(X,1)$ if and only if, for every $i$, $G_i(X,1)$.
		\item $G(X,0)$ if and only if, for some $i$, $G_i (X,0)$.
	\end{itemize}
\item If $\tree(G) = (G/P)(\tree(G_1), \dots, \tree(G_n))$ then:
\begin{itemize}
	\item $G(X,1)$ if and only if $(G/P) (\posquot_P (X),1 )$
	\item $G(X,0)$ if and only if $(G/P) (\negquot_P(X),0)$.
\end{itemize}
		\end{enumerate}
%
	\end{theorem}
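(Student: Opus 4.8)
The plan is to verify each of the four clauses separately, in every case first unwinding the definition of evaluation, namely $G(X,1) \iff \exists S \in \mc(G).\, S \subseteq X$ and $G(X,0) \iff \exists T \in \ms(G).\, T \cap X = \emptyset$, and then invoking the matching clause of Proposition~\ref{prop:mc-ms-via-decomposition}. For the singleton graph $G = (\{x\}, \emptyset)$ one has $\mc(G) = \ms(G) = \{\{x\}\}$, so $G(X,1)$ reduces to $\{x\} \subseteq X$, i.e.\ $x \in X$, and $G(X,0)$ reduces to $\{x\} \cap X = \emptyset$, i.e.\ $x \notin X$; both equivalences are then immediate.

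For the disjunctive case $\tree(G) = \dis(\tree(G_1), \dots, \tree(G_n))$, I would argue as follows. By Proposition~\ref{prop:mc-ms-via-decomposition}(2), a set $S$ lies in $\mc(G)$ iff $S \in \mc(G_i)$ for some $i$; chaining this with the definition of evaluation gives $G(X,1) \iff \exists i.\, \exists S \in \mc(G_i).\, S \subseteq X \iff \exists i.\, G_i(X,1)$. For the $0$-case the same Proposition gives $T \in \ms(G)$ iff $T \cap V(G_i) \in \ms(G_i)$ for every $i$. The forward direction then follows by intersecting a witness $T$ disjoint from $X$ with each $V(G_i)$, while the converse proceeds by gluing: given witnesses $T_i \in \ms(G_i)$ with $T_i \cap X = \emptyset$, set $T := \bigcup_i T_i$, and use that the $V(G_i)$ are disjoint (being connected components) so that $T \cap V(G_i) = T_i \in \ms(G_i)$, whence $T \in \ms(G)$, and $T \cap X = \bigcup_i (T_i \cap X) = \emptyset$.

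The conjunctive case $\tree(G) = \con(\tree(G_1), \dots, \tree(G_n))$ is handled symmetrically, swapping the roles of cliques and stable sets and invoking Proposition~\ref{prop:mc-ms-via-decomposition}(3) in place of (2); alternatively it can be obtained from the disjunctive case applied to $\compl G$ via the duality of Remark~\ref{rmk:duality}. Finally, the prime-quotient case $\tree(G) = (G/P)(\tree(G_1), \dots, \tree(G_n))$ requires no new work at all: it is exactly the statement of the preceding Lemma, whose $\posquot_P$ and $\negquot_P$ constructions already encapsulate the recursion through the maximal cliques and stable sets of the modules.

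I do not expect a genuine obstacle, since the theorem is in essence an assembly of the preceding Lemma and Proposition~\ref{prop:mc-ms-via-decomposition}. The only point demanding a little care is the bookkeeping in the disjunctive and conjunctive branches: one must observe that evaluating a component $G_i$ against the global assignment $X$ coincides with evaluating it against $X \cap V(G_i)$, and that the witnessing maximal stable sets (respectively maximal cliques, in the conjunctive branch) for the several components can be glued together across the disjoint vertex sets $V(G_i)$ to yield a single maximal stable set (respectively clique) of $G$.
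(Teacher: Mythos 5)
Your proof is correct and follows essentially the same route as the paper: the paper simply remarks that the theorem is immediate from the quotient-evaluation Lemma together with Proposition~\ref{prop:mc-ms-via-decomposition}, which is precisely the assembly you carry out (Proposition for the singleton, disjunctive and conjunctive clauses; the Lemma verbatim for the prime-quotient clause). Your extra bookkeeping—gluing component-wise maximal stable sets across the disjoint $V(G_i)$ and noting that evaluation against $X$ agrees with evaluation against $X \cap V(G_i)$—is exactly the detail the paper leaves implicit in the word ``immediate.''
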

		
%
%
%
		
%
%
%
%
%

\section{Positional games beyond cographs}
\label{sect:games}
A refined view of evaluation can be given in a game theoretic setting;
in particular, \quot{extensive game forms} forms have been studied in connection with read-once formulas, leading to the notion of \emph{positional games}, e.g.\ \cite{gurvich1982normal,golumbic_gurvich_crama_hammer_2011}.
In this section, we will see how to extend the `evaluation game' on formulas to arbitrary graphs. Along the way, we recover a distinction between the static and sequential versions of this game, a distinction which does not exist at the level of formulas.

The evaluation game of a formula is played as follows. Two players construct a path through the formula tree, with one player  (`Eloise') choosing directions at disjunction nodes and the other (`Abelard') choosing directions at conjunction nodes. The possible `outcomes' of the play are then the leaves of the tree, i.e.\ the variables of the formula, and a notion of winning can be imposed via some Boolean payoff set (i.e.\ an assignment). 
In this way, the strategies of the Eloise and Abelard are determined precisely by the minterms and maxterms, respectively, of the formula they are playing on.
One interesting pursuit is to establish which game forms correspond to evaluation games, a question resolved for formulas in \cite{gurvich1982normal}, though extending this result to arbitrary graphs is beyond the scope of this work.

Let us consider an example of the evaluation game on formulas.
\begin{ex}
		[Evaluation games]
	\label{Ex:GameGurvich}
Let $A$ be the following formula:
\[
w_0 \dis ( x \con (y \dis (z_0 \con z_1 \con z_2 \con z_3 )) ) \dis w_1
\]
The formula tree $\tree(A)$ is as follows:\footnote{Notice that there is some mismatch here in the sense that formulas have binary connectives but our decomposition trees allow arbitrary fan-in. We gloss over this mismatch in this example, and wherever it is not ambiguous.}
	\begin{center}	
		\begin{forest}
			[ $\vee$ 
			[$w_0$]  [$\wedge$ [$x$] [$ \vee $ [$y$] [$ \wedge$ [$z_0$] [$z_1$] [$z_2$] [$z_3$] ] ]
			] [$w_1$] ]
		\end{forest} 
	\end{center}
	A possible play of the corresponding game would be as follows:
	\begin{itemize}
		\item The root of the tree is labelled with a $ \vee $, so Eloise plays one of its children.
		If she chooses $w_0$ or $w_1$ then the play ends with outcome $w_0$ or $w_1$, respectively. Suppose she picks the $\con$ child.
		\item It is now Abelard to choose, and he has two choices: either $x$, again terminating the play with outcome $x$, or $\vee$. Suppose he chooses the latter.
		\item At this point let us suppose that Eloise ends the play by choosing $y$.
	\end{itemize} 

The strategy employed by Eloise in this play corresponds to the minterm $\{x,y \}$, in the sense that she always makes a choice that admits the possibility of an outcome in that set. 
Abelard's strategy corresponds to several maxterms, namely $\{w_0,w_1,y, z_i  \}$ for any $i<4$, for the same reason. The play terminated before Abelard had to choose a $z_i$, which is why it is consistent with multiple strategies.

Equipped with an assignment $X \subseteq\{w_0,w_1,x,y,z_0, z_1,z_2,z_3 \}$, we may say that the play above is \emph{winning} for Eloise if the outcome $y$ is in $X$, and otherwise winning for Abelard.
\end{ex}


\subsection{Games on arbitrary graphs}

In order to extend the evaluation game to arbitrary graphs, it thus suffices to establish rules of play at prime nodes in the decomposition tree.
As hinted at in the previous example, one way to see the play is that Eloise is playing according to some maximal clique whereas Abelard plays according some maximal stable set.
We directly import this idea into the following definition of games on arbitrary graphs, though we will see that the games induced become more sensitive to the format of gameplay than before.
%
%
%
%
%

\begin{definition}
	[Games on graphs]
	\label{dfn:games-on-graphs}
	We define a two-player game, with players Eloise and Abelard by recursion on the decomposition tree of a graph. \emph{Plays} of the game have \emph{outcome} that is either a variable $x\in \Var$ or $\emptyset$.
	
	Let $G$ be a graph, and let us define a {play} of $G$ and the {outcome} of a play as follows: 
	\begin{enumerate}
		\item If $V(G)$ is the singleton $\{x\}$ then the play is empty and the outcome is $x$.
		\item if $\moddec(G)$ has the form $\vee (\tree(G_1), \dots, \tree(G_n))$,
		then Eloise chooses one of the $G_i$s and the play continues on $G_i$.
		
\item If $\moddec(G)$ has the form
$\wedge (G_1, \dots, G_n)$,
then Abelard chooses one of the $G_i$s and the play continues on $G_i$.

\item If $\moddec(G)$ has the form 
$(G/P) (M_1, \dots, M_n)$,
then: 
\begin{itemize}
	\item Eloise chooses some $S \in \mc(G/P)$; and,
	\item Abelard chooses some $T \in \ms(G/P)$,
\end{itemize}
and the play continues on the unique module $M_i$ in the intersection $S\cap T$, if it exists. 
If there is a deadlock, i.e.\ $S\cap T = \emptyset$, then the play ends with outcome $\emptyset$.
	\end{enumerate}
\end{definition}

We construe plays as the sequence of graphs induced by the choices of Eloise and Abelard following the rules above. This sequence always determines a path through the decomposition tree of $G$ from the root to either a leaf, if the outcome is the corresponding variable, or an internal prime node, if the play hits a deadlock.

\begin{remark}
	[Deadlocks and determinism]
	Note that on deterministic graphs maximal cliques and stable sets always intersect, since they are CIS, and therefore a play always has a non-empty outcome.
\end{remark}

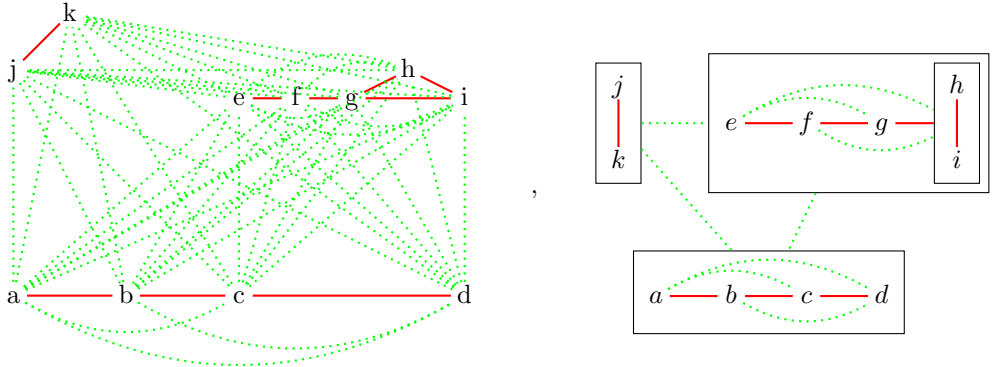
\begin{figure}
	\label{fig:big-graph-and-modular-presentation}
	\[
	\raisebox{-0.5\height}{\begin{tikzpicture}[scale=1.5]
		\node[vertex] (v1) at (0,0) {a};
		\node[vertex] (v2) at (1,0) {b};
		\node[vertex] (v3) at (2,0) {c};
		\node[vertex] (v4) at (4,0) {d};
		\node[vertex] (v5) at (2,1.75) {e};
		\node[vertex] (v6) at (2.5,1.75) {f};
		\node[vertex] (v7) at (3,1.75) {g};
		\node[vertex] (v8) at (3.5,2) {h};
		\node[vertex] (v9) at (4,1.75) {i};
		\node[vertex] (v10) at (0,2) {j};
		\node[vertex] (v11) at (0.5,2.5) {k};

		\draw[r] (v1) edge (v2) ;
		\draw[lg, bend right] (v1) edge (v3) ; 
		\draw[lg, bend right] (v1) edge (v4) ; 
		\draw[lg] (v1) edge (v5) ;
		\draw[lg] (v1) edge (v6) ;
		\draw[lg] (v1) edge (v7) ;
		\draw[lg] (v1) edge (v8) ;
		\draw[lg] (v1) edge (v9) ;
		\draw[lg] (v1) edge (v10);
		\draw[lg] (v1) edge (v11) ;
		
		\draw[r] (v2) edge (v3) ;
		\draw[lg, bend right] (v2) edge (v4) ;		
		\draw[lg] (v2) edge (v5) ;
		\draw[lg] (v2) edge (v6) ;
		\draw[lg] (v2) edge (v7) ;
		\draw[lg] (v2) edge (v8) ;
		\draw[lg] (v2) edge (v9) ;
		\draw[lg] (v2) edge (v10);
		\draw[lg] (v2) edge (v11) ;

		\draw[r] (v3) edge (v4) ;
		\draw[lg] (v3) edge (v5) ;
		\draw[lg] (v3) edge (v6) ;
		\draw[lg] (v3) edge (v7) ;
		\draw[lg] (v3) edge (v8) ;
		\draw[lg] (v3) edge (v9) ;
		\draw[lg] (v3) edge (v10);
		\draw[lg] (v3) edge (v11) ;
		
		\draw[lg] (v4) edge (v5) ;
		\draw[lg] (v4) edge (v6) ;
		\draw[lg] (v4) edge (v7) ;
		\draw[lg] (v4) edge (v8) ;
		\draw[lg] (v4) edge (v9) ;
		\draw[lg] (v4) edge (v10);
		\draw[lg] (v4) edge (v11) ;
		
		\draw[r] (v5) edge (v6) ;
		\draw[lg, bend right] (v5) edge (v7) ;
		\draw[lg, bend left] (v5) edge (v8);
		\draw[lg, bend right] (v5) edge (v9) ; 
		\draw[lg] (v5) edge (v10);		
		\draw[lg] (v5) edge (v11);
		
		\draw[r] (v6) edge (v7) ;	
		\draw[lg, bend left] (v6) edge (v8);
		\draw[lg, bend right] (v6) edge (v9);
		\draw[lg] (v6) edge (v10);	
		\draw[lg] (v6) edge (v11);
		
		\draw[r] (v7) edge (v8) ;
		\draw[r] (v7) edge (v9) ;
		\draw[lg] (v7) edge (v10);
		\draw[lg] (v7) edge (v11);
		
		\draw[r] (v8) edge (v9) ;
		\draw[lg] (v8) edge (v10);	
		\draw[lg] (v8) edge (v11);
		
		\draw[lg] (v9) edge (v10);
		\draw[lg] (v9) edge (v11);
		
		\draw[r] (v10) edge (v11) ;
		\end{tikzpicture}}
	\qquad , \qquad
	\raisebox{-0.5\height}{\begin{tikzpicture}
		\node[vertex] (j) at (1.5,0.5) {$j$};
		\node[vertex] (k) at (1.5,-0.5) {$k$};
		
		\draw[r] (j) -- (k);
		
		\node[draw, fit=(j) (k) ] (M1) {} ;
		
		\node[vertex] (e) at (3,0) {$e$};
		\node[vertex] (f) at (4,0) {$f$};
		\node[vertex] (g) at (5,0) {$g$};
		
		\node[vertex] (h) at (6,0.5) {$h$};
		\node[vertex] (i) at (6,-0.5) {$i$};
		\draw[r] (h) -- (i);
		
		\node[draw,fit=(h) (i) ] (M3) {};
		
		\draw[r] (e) -- (f);
		\draw[r] (f) -- (g);
		\draw[r] (g) -- (M3);
		\draw[g,bend left] (e) edge (g);
		\draw[g,bend left] (e) edge (M3);
		\draw[g,bend right] (f) edge (M3);
		
		\node[draw, fit=(e) (f) (g) (M3)] (M2) {};
		
		\node[vertex] (a) at (2,-2.3) {$a$};
		\node[vertex] (b) at (3,-2.3) {$b$};
		\node[vertex] (c) at (4,-2.3) {$c$};
		\node[vertex] (d) at (5,-2.3) {$d$};
		
		\draw[r] (a) -- (b);
		\draw[r] (b) -- (c);
		\draw[r] (c) -- (d);
		\draw[g,bend left] (a) edge (c);
		\draw[g,bend left] (a) edge (d);
		\draw[g,bend right] (b) edge (d);
		\node[vertex] (vpoint) at (3.5,-2) {};
		\node[vertex] (upoint) at (3.5,-2.5) {};
		
		\node[draw,fit=(a) (b) (c) (d) (upoint) (vpoint) ] (M4) {};

		\draw[g] (M1) -- (M2);
		\draw[g] (M1) -- (M4);
		\draw[g] (M2) -- (M4);
		\end{tikzpicture}}
	\]
	\caption{An example of a graph and its modular presentation, obtained by recursively partitioning the graph into maximal proper modules.}
\end{figure}

\begin{ex}
	[A play of a graph]
	\label{Ex:GraphGame}
	Let $ G $ be the graph in Figure~\ref{fig:big-graph-and-modular-presentation}, given in both the usual red-green presentation, left, and its `modular' presentation, right.
The decomposition tree $\tree(G)$ is:
\[
\begin{forest}
[
$\vee$ 
[
$\wedge$ [$j$] [$k$]
]
[
$\pfour$ [$a$] [$b$] [$c$] [$d$]
]
[
$\pfour$ [$e$] [$f$] [$g$] [
$\wedge$ [$h$] [$i$]
]
]
]
\end{forest}
\]
	We will now run through a possible play on $G$:
	\begin{enumerate}
		\item The root of $\tree(G)$ is $\dis$, so Eloise plays first and chooses the rightmost child, $\pfour$, corresponding to the following maximal proper module, left, and subtree, right:
			\[
%
%
%
%
			\raisebox{-0.5\height}{\begin{tikzpicture}
			\node[vertex] (e) at (3,0) {$e$};
			\node[vertex] (f) at (4,0) {$f$};
			\node[vertex] (g) at (5,0) {$g$};
			
			\node[vertex] (h) at (6,0.5) {$h$};
			\node[vertex] (i) at (6,-0.5) {$i$};
			\draw[r] (h) -- (i);
			
			\node[draw,fit=(h) (i) ] (M3) {};
			
			\draw[r] (e) -- (f);
			\draw[r] (f) -- (g);
			\draw[r] (g) -- (M3);
			\draw[g,bend left] (e) edge (g);
			\draw[g,bend left] (e) edge (M3);
			\draw[g,bend right] (f) edge (M3);
			\end{tikzpicture}}
			\qquad,\qquad
			\raisebox{-0.5\height}{\begin{forest}
				[
				$\pfour$ [$e$] [$f$] [$g$] [
				$\wedge$ [$h$] [$i$]
				]
				]
				\end{forest}}
			\]	
		\item The root is now $\pfour$, which is prime. 
		\begin{itemize}
			\item Eloise chooses the maximal clique $\{g,\{h,i\} \}$, consisting of the third and fourth child.
			\item Abelard chooses the maximal stable set $\{e, \{h,i\} \}$, consisting of the first and fourth child.
		\end{itemize}
	\item The intersection is nonempty, corresponding to the following module, left, and subtree, right:
	\[
\raisebox{-0.5\height}{
\begin{tikzpicture}
	\node[vertex] (h) at (6,0.5) {$h$};
\node[vertex] (i) at (6,-0.5) {$i$};
\draw[r] (h) -- (i);
\end{tikzpicture}
}
	\qquad,\qquad
	\raisebox{-0.5\height}{\begin{forest}
		[
		$\wedge$ [$h$] [$i$]
		]
		\end{forest}}
	\]
	Since the root is $\con$, it is Abelard to play, and he chooses $h$.
	\item  $h$ is a leaf of the decomposition tree, so the play ends with outcome $h$.
	\end{enumerate}
If at step (2) Eloise had chosen $\{f,g \}$ and Abelard had chosen $\{e,\{h,i\} \}$ then the play would end with outcome $\emptyset$, since the intersection of these sets is empty.
\end{ex}



\subsection{Boolean payoffs and strategies}

In order to talk about the `winner' of a game, we introduce standard Boolean payoff sets, playing the role of assignments earlier.

\begin{defi}[Winning under an assignment]
	Let $G= (V,E)$ be a graph and $X\subseteq V$ be an assignment. 
	Given a play of $G$, we say that:
\begin{itemize}
	\item Eloise wins if the outcome of the play is some $x\in X$.
	\item Abelard wins if the outcome of the play is some $x \notin X$.
	\item The play is a draw if the outcome of the play is $\emptyset$.
\end{itemize}
\end{defi}
%

As mentioned earlier, evaluation games with Boolean payoff gave us a game-theoretic characterisation of evaluation in the case of formulas, equivalently cographs. There the situation was simple: evaluating to $1$ corresponds to a winning strategy for Eloise, and evaluating to $0$ corresponds to a winning strategy for Abelard.
Here it is not so simple, not only because our semantics is relational but also since determinacy is sensitive to the `mode' of play.

Recalling Definition~\ref{dfn:games-on-graphs}, notice that we did not specify at a prime node whether Eloise and Abelard make their choices independently of each other, or whether one player is able to make their choice once the other has already declared theirs.
In the formula setting this makes no difference: the player with a winning strategy has a uniform strategy no matter the moves of the opponent, - just play according to some minterm or maxterm.
For arbitrary graphs it turns out that there is an advantage for the second player, since they may react to the choices of the first.

\begin{defi}[Sequential and static strategies]
	Let $G=(V,E)$ be a graph.
	A \emph{strategy} on $G$ is a specification of choices for a player at each relevant node of $\tree(G)$.\footnote{Note that we only consider `positional' strategies here, where the choice of a player does not depend on the history of the play. It is not hard to see that this is sufficiently general for determinacy.} 
	Formally, a strategy for Eloise is a map that associates:
	\begin{itemize}
		\item to each $\dis$ node of $\tree(G)$ a child of that node; and,
		\item to each prime node $G/P$ of $\tree(G)$ some $S \in \mc(G/P)$.
	\end{itemize}
Dually, a strategy for Abelard is a map that associates:
\begin{itemize}
	\item to each $\con$ node of $\tree(G)$ a child of that node; and,
	\item to each prime node $G/P$ of $\tree(G)$ some $T \in \ms(G/P)$.
\end{itemize}

Furthermore, we may distinguish different \emph{modes} of strategy:
\begin{itemize}
	\item If a strategy $\sigma$ for a player $p$ at prime nodes may depend on the other player's choice, then we call $\sigma$ a \emph{reactionary} strategy for $p$.
	\item Otherwise, if a strategy $\sigma$ for a player $p$ does not depend on the choice of the other player at prime nodes, then we call $\sigma$ a \emph{static} strategy for $p$.
\end{itemize}
	

We say that a $p$-strategy $\sigma$ is \emph{winning} with respect to an assignment $X$ if every play according to $\sigma$ results in a win for $p$.
We say that $\sigma$ is \emph{drawing} with respect to $X$ if every play according to $\sigma$ results in a win or draw for $p$.
\end{defi}

%
%



\subsection{Characterisation of evaluation via static strategies}
Let us first consider the somewhat simpler case of static games, since the results and arguments are similar to the formula setting, only accounting for nonfunctionality and the possibility of draws.
Table~\ref{table:offline-games} summarises the circumstances in which each of the players have a winning or drawing strategy, given an assignment $ X $. 
This of course depends on what $G$ may evaluate to on $X$, but we also distinguish whether $G$ is deterministic or not, since in the former case draws are impossible, by the CIS property.
\begin{table}[t]
	\begin{small}
		\begin{center}
			\textbf{Static games:}
			\begin{tabularx}{\textwidth}{l|>{\centering\arraybackslash}m{.95in}|>{\centering\arraybackslash}m{.95in}|>{\centering\arraybackslash}m{.95in}|>{\centering\arraybackslash}m{1in}}
				\hlineB{4}
				&$ G(X) = \{1\} $ & $ G(X)=\{0\} $ & $G(x)=\emptyset$ & $G(X)=\{0,1\}  $\\ \hline
				
				Det. & 
				Eloise has a winning strategy & 
				Abelard has a winning strategy & 
				- & 
				N/A \\ \hline
				
				Non-det. & 
				Eloise has a drawing strategy & 
				Abelard has a drawing strategy &
				- &  
				Eloise and Abelard have a drawing strategy\\ 	 \hlineB{4}
			\end{tabularx}
			\vspace{1mm}
			\caption{Winning/drawing strategies for static games. We write $G(X) = B$ if $G(X,b) \iff b \in B$.	 \quot{-} denotes the situation where neither player has a winning or drawing strategy, and N/A denotes a situation that cannot occur.
		}
		\end{center}
	\end{small}
\label{table:offline-games}
\end{table}				
The entries of the table are justified by the following result:
\begin{thm}
	\label{thm:static-determinacy}
	Let $ G  $ be a graph and $ X \subseteq V(G) $ an assignment. We have the following:
	\begin{enumerate}
		\item\label{item:eloise-drawing-static} $G(X,1)$ if and only if Eloise has a static drawing strategy.
		\item\label{item:abelard-drawing-static} $G(X,0)$ if and only if Abelard has a static drawing strategy.
%
	\end{enumerate}
If $G$ is deterministic then the above can be strengthened to static winning strategies.
\end{thm}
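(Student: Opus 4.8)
The plan is to prove statement~(1) by induction on the decomposition tree $\tree(G)$, and to obtain statement~(2) by duality: passing to $\compl G$ interchanges $\dis$ with $\con$ in the decomposition tree, $\mc$ with $\ms$ at prime nodes, and the roles of Eloise and Abelard, so that Abelard having a static drawing strategy on $G$ under $X$ is the same as Eloise having one on $\compl G$ under $\compl X$; since $G(X,0)$ iff $\compl G(\compl X,1)$ by Remark~\ref{rmk:duality}, part~(1) applied to $\compl G$ yields part~(2). For the base case, the singleton $\{x\}$ has a unique play with outcome $x$, which is a win-or-draw for Eloise exactly when $x\in X$, i.e.\ exactly when $G(X,1)$. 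For $\tree(G)=\dis(\tree(G_1),\dots,\tree(G_n))$ a static Eloise strategy chooses some $G_i$ and plays statically on it, so she has a drawing strategy iff she has one on some $G_i$, iff (induction) $G_i(X,1)$ for some $i$, iff $G(X,1)$. The case $\tree(G)=\con(\tree(G_1),\dots,\tree(G_n))$ is symmetric: Abelard moves first, so Eloise's strategy must be drawing on every $G_i$, matching $G(X,1)$ iff $G_i(X,1)$ for all $i$.

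The prime case $\tree(G)=(G/P)(\tree(M_1),\dots,\tree(M_n))$ is the crux. Here Eloise picks $S\in\mc(G/P)$, Abelard picks $T\in\ms(G/P)$, and play continues on the unique $M_i$ with $S\cap T=\{M_i\}$, or draws if $S\cap T=\emptyset$. I would first record two facts about the quotient: a clique and a stable set share at most one module, and for each fixed $M_i\in S$ there is a $T\in\ms(G/P)$ with $S\cap T=\{M_i\}$ (extend $\{M_i\}$ to a maximal stable set; being a clique, $S$ meets it only in $M_i$). With these, a static Eloise strategy choosing $S$ and sub-strategies $\sigma_i$ is drawing iff each $\sigma_i$ with $M_i\in S$ is drawing on $M_i$: forward, each $M_i\in S$ is realised as a genuine continuation by the second fact; backward, every Abelard response either deadlocks (a draw, acceptable for Eloise) or continues on some $M_i\in S$. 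By induction this reads $S\subseteq\posquot_P(X)$, so Eloise has a static drawing strategy iff some $S\in\mc(G/P)$ satisfies $S\subseteq\posquot_P(X)$, i.e.\ iff $(G/P)(\posquot_P(X),1)$, which in turn is $G(X,1)$ by the lemma expressing evaluation of $G$ through the quotient assignment $\posquot_P(X)$.

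For the deterministic strengthening I would argue that, when $G$ is CIS (Proposition~\ref{prop:det-iff-cis}), no play can ever end in a draw, so that a drawing strategy is automatically winning and (1)--(2) upgrade as stated. This rests on a descent sublemma: if $G$ is CIS then so is every component at a $\dis$-node, every module $M_i$ at a prime node, and every quotient $G/P$. For a module $M_i$, given $C\in\mc(M_i)$ and $D\in\ms(M_i)$, choose a maximal clique $S\ni M_i$ and maximal stable set $T\ni M_i$ of $G/P$; since $S\cap T=\{M_i\}$ and modules are disjoint, a maximal clique of $G$ through $C$ and a maximal stable set of $G$ through $D$ (built via Proposition~\ref{prop:mc-ms-via-decomposition}) can meet only inside $M_i$, so CIS-ness of $G$ forces $C\cap D\neq\emptyset$. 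Dually for $\dis$-components, and for the quotient: any $S\in\mc(G/P)$ and $T\in\ms(G/P)$ lift to a maximal clique and maximal stable set of $G$ whose forced intersection lies in a single module, giving $S\cap T\neq\emptyset$. Hence prime nodes never deadlock and every play produces a variable outcome.

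The main obstacle I anticipate is the prime case, and within the strengthening the descent of the CIS property to the modules $M_i$: both hinge on combining the at-most-one intersection of cliques with stable sets with the disjointness of modules to pin the witnessing intersection inside a chosen module. The propositional cases, and the reduction of~(2) to~(1) by duality, are routine.
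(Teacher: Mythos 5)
Your proposal is correct, and it reaches the same conclusions as the paper while organising the argument differently. The paper proves part~(1) by two global constructions: for the forward direction it defines, from a maximal clique $S\subseteq X$, a strategy for Eloise ``playing according to $S$'' by recursion on $\tree(G)$ (using the classification of maximal cliques in Proposition~\ref{prop:mc-ms-via-decomposition}); for the converse it extracts from a static drawing strategy $\sigma$ the subtree $\tree(\sigma)$ it induces and shows that its set of leaves $L(\sigma)$ is a maximal clique of $G$, via least-common-ancestor and modularity arguments. You instead prove the equivalence node-by-node by induction on $\tree(G)$, with the prime case discharged through the unnamed lemma of Section~\ref{sect:moddecomp} stating that $G(X,1)$ iff $(G/P)(\posquot_P(X),1)$, together with your two facts about quotients (a clique and a stable set of $G/P$ meet in at most one module; any $M_i\in S$ can be realised as the continuation by extending $\{M_i\}$ to a maximal stable set). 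This buys a more modular proof that avoids the global $L(\sigma)$ construction, at the cost of re-deriving locally what Proposition~\ref{prop:mc-ms-via-decomposition} packages globally; the content is essentially the same. Where your write-up is genuinely more careful is the deterministic strengthening: the paper asserts it ``follows immediately from (1) and (2) by the CIS property'' (Proposition~\ref{prop:det-iff-cis}), leaning on the remark on deadlocks and determinism, which is stated without proof and tacitly requires that CIS-ness of $G$ descends to every quotient and every module appearing in the decomposition tree. Your descent sublemma --- lifting $C\in\mc(M_i)$ and $D\in\ms(M_i)$ (respectively $S\in\mc(G/P)$, $T\in\ms(G/P)$) to maximal cliques and stable sets of $G$ via Lemma~\ref{lem:mc-ms-ser-par-comp} and pinning their forced intersection inside a single module --- is exactly the missing justification, and your argument for it is sound.
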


\begin{proof}
Notice that the final clause on deterministic graphs follows immediately from (1) and (2) by the CIS property, Proposition~\ref{prop:det-iff-cis}.
	
For the left-right implication of \ref{item:eloise-drawing-static}, the idea is that 
Eloise plays to remain in a maximal clique inside the given assignment.
Formally, given a graph $G$, an assignment $X$, and some maximal clique $S\subseteq X$, we describe what it means for Eloise to play \emph{according to} $S$, appealing to the classification of maximal cliques in Propostion~\ref{prop:mc-ms-via-decomposition}:
\begin{itemize}
	\item If $\tree(G)$ is a singleton then there is nothing to play and, since $S$ is nonempty, Eloise wins. 
	\item If $\tree(G) = \dis (\tree (G_1), \dots, \tree(G_n)$, then we must have $S \in \mc(G_i)$, and so Eloise chooses this $G_i$ and continues playing according to $S$.
	\item If $\tree(G) = \con(\tree (G_1), \dots, \tree (G_n))$, then we must have that $S_i = S \cap V(G_i) \in \mc(G_i)$, so if Abelard chooses $G_i$ Eloise continues playing according to $S_i$.
	\item If $\tree(G ) = (G/P) (\tree (M_1), \dots, \tree (M_n) )$, then $S = \bigcup\limits_{i \in I} S_i$, for some $\{M_i \}_{i \in I} \in \mc(G/P)$ and $S_i \in \mc (M_i)$. Eloise plays the maximal clique $\{M_i\}_{i \in I}$ and, if the play continues on some module $M_i$, with $i\in I$, then she continues playing according to $S_i$.
\end{itemize}
 There are two options for a play of this strategy: either a variable is reached (the first case) and Eloise wins, or at a prime node (fourth case) the intersection of Eloise's and Abelard's sets is empty and the game is drawn.
%

Conversely, suppose $\sigma$ is a static drawing strategy for Eloise and write $\tree(\sigma)$ for the subtree of $\tree(G)$ induced by it. Formally, $\tree(\sigma)$ is the smallest subtree satisfying:
\begin{itemize}
	\item The root of $\tree(G)$ belongs to $\tree(\sigma)$;
	\item For any $\dis$ node in $\tree(\sigma)$ the child chosen by $\sigma$ belongs to $\tree(\sigma)$;
		\item For any $\con $ node in $\tree(\sigma)$, all its children also belong to $\tree(\sigma)$;
		\item For a prime node $G/P$ in $\tree(\sigma)$, all children in the maximal clique chosen by $\sigma$ belong to $\tree(\sigma)$.
\end{itemize}
Let us write $L(\sigma)$ for the set of leaves of $\tree(\sigma)$. We have the following:
\begin{itemize}
	\item $L(\sigma)$ is a clique of $G$: let $x,y\in L(\sigma)$ be distinct and let us inspect the root of the smallest subtree of $\tree(\sigma)$ (equivalently $\tree(G)$) containing both $x$ and $y$. It cannot be a $\dis$ node by construction of $\tree(\sigma)$, and if it is a $\con$ or $G/P$ node, then $\redge x y $ in $G$ by construction of $\tree(\sigma)$.
	\item $L(\sigma)$ is maximal: suppose not and let $z \notin L(\sigma)$ such that $\redge{z}{L(\sigma)}$ in $G$. Let us take the smallest subtree of $\tree(\sigma)$ whose root $\nu$ induces a subtree of $\tree(G)$ containing $z$.
	$\nu$ cannot be a $\dis $ node, since that contradicts $\redge z {L(\sigma)}$; $\nu $ cannot be a $\con $ node since that contradicts leastness of $\nu$; finally, if $\nu$ is a prime node $G/P$, then by modularity we have that the clique chosen by $\sigma$ can be extended to the module containing $z$, contradicting maximality.
\end{itemize}
%
\ref{item:abelard-drawing-static} follows by duality.
%
%
%
%
%
%
\end{proof}

If $ X  $ evaluates to neither 0 nor 1, it follows that neither player necessarily has a drawing/winning strategy in a static setting. 
Note that this does not contradict finite determinacy, which a priori holds only for sequential games.
\begin{example}
	[Non-determinacy]
	\label{ex:non-determinacy-static}
Let us revisit the $ P_4 $:
\begin{equation}
\label{eqn:pfour-non-determinacy-static}
\FourGraph{x,y,w,z}rrggrg
\end{equation}
Recall that under the assignment $X = \{w,z \}$, we have neither $G(X,0)$ nor $G(X,1)$.
It turns out that neither player has a static drawing strategy in this circumstance, as we now argue.
Since the $\pfour$ is prime, the corresponding game is one-shot, so it suffices to show that neither player has a drawing move.
\begin{itemize}
	\item If Eloise plays $\{x,y\}$ then she will lose if Abelard plays $\{x,z\} $ or $\{w,z\}$;
	\item If Eloise plays $\{x,w\}$ then she will lose if Abelard plays $\{x,z\}$;
	\item If Eloise plays $\{y,z \}$ then she will lose if Abelard plays $\{w,y\}$.
\end{itemize}
Since the $\pfour$ is isomrophic to its dual, it is immediate that Abelard too has no drawing strategy.
%
\end{example}

As we will see in the next subsection, finite derminacy \emph{does} apply to sequential games, and in situations like the $\pfour$ above, it is the second player who actually has a winning strategy.

It is natural to wonder whether any of the results of Table~\ref{table:offline-games} can be strengthened from drawing to winning.
This is not the case, as we can see from the following example.
\begin{example}
	\label{ex:5-cycle-no-forced-win-static}
Let us recall the $5$-cycle:
\begin{equation}
\label{eqn:5-cycle-no-forced-win-static}
			\FiveGraphCirc{v,w,x,y,z}rggrrggrgr
\end{equation}
Recall that this graph is not deterministic (though it is total) since the assignment, say, $\{v,w\}$ evaluates to both $0$ (being disjoint from the maximal stable set $\{x,z\}$) and $1$ (being a maximal clique itself).
It is also a prime graph and so the corresponding game is one-shot.

Now, if we take the slightly larger assignment $\{v,z,w\}$, the graph still evaluates to $1$ but no longer evaluates to $0$.
Nonetheless, Eloise cannot force a win: whichever maximal clique she chooses, Abelard may choose a disjoint maximal stable set, as reasoned above, by the symmetry of the graph.
\end{example}

\subsection{Characterisation of evaluation via sequential strategies}

For sequential games, the second player has an advantage since they have strictly more information to exploit at each prime node, namely the opponent's move at that node.

\begin{ex}
	[Winning sequentially]
	Revisiting Example~\ref{ex:5-cycle-no-forced-win-static}, where the drawing player could not force a win, let us consider what happens in a sequential setting.
	Taking the same graph \eqref{eqn:5-cycle-no-forced-win-static} and the same assignment $\{v,z,w\}$, it is clear that Eloise can force a win since every maximal stable set intersects the assignment, and so she can react to Abelard's move with the appropriate maximal clique.
%
\end{ex}

What is more, finite determinacy now applies and so the game is completely determined.
\begin{example}
	[Determinacy and non-totality]
	Revisiting Example~\ref{ex:non-determinacy-static}, where neither player could force even a draw, let us consider what happens in a sequential setting. 
	Taking the same graph \eqref{eqn:pfour-non-determinacy-static} and assignment $X= \{w,z\}$, it turns out that the second play can actually force a win.
In fact, the case analysis of Example~\ref{ex:non-determinacy-static} immediately shows that Abelard wins when playing second.
\end{example}

Table~\ref{table:online} summarises the circumstances in which a player has a winning or drawing reactionary strategy. The results for static strategies naturally still hold for the first player. 
Again, we distinguish deterministic graphs from general graphs.

\begin{table}[t]			
	\begin{small}
		\begin{center}		
			\vspace{3mm}
			\textbf{Sequential games:}			
			\begin{tabularx}{\textwidth}{l|>{\centering\arraybackslash}m{.95in}|>{\centering\arraybackslash}m{.95in}|>{\centering\arraybackslash}m{.95in}|>{\centering\arraybackslash}m{1in}}
				
				\hlineB{4}
				&$ G(X) = \{1\}$ & $ G(X)=\{0\} $ & $G(X)=\emptyset$ & $G(X)=\{0,1\}  $\\ \hline
				
				Det. & 
				As the second player, Eloise has a winning strategy & 
				As the second player, Abelard has an winning strategy & 
				As the second player, Eloise and Abelard have a winning strategy & 
				N/A \\ \hline
				
				Non-det. & 
				As the second player, Eloise has a winning strategy& 
				As the second player, Abelard has a winning strategy & 
				As the second player, Eloise and Abelard have a winning strategy & 
				As the second player, Eloise and Abelard have a drawing strategy\\

				\hlineB{4}
			\end{tabularx}
			\vspace{1mm}
			\caption{Winning/drawing strategies for sequential games. N/A denotes a situation that cannot occur.
				We write $G(X) = B$ if $G(X,b) \iff b \in B$.		
		}
		\end{center}
	\end{small}
\label{table:online}
\end{table}

The entries of the table are justified by the following result, along with Theorem~\ref{thm:static-determinacy}:
\begin{thm}
	\label{thm:sequential-determinacy}
	Let $ G  $ be a graph and $ X\subseteq V(G) $ an assignment. We have the following:
	\begin{enumerate}
		\item\label{item:eloise-winning-online} $\neg G(X,0)$ if and only if Eloise has a reactionary winning strategy.
		\item\label{item:abelard-winning-online} $\neg G(X,1)$ if and only if Abelard has a reactionary winning strategy.
%
%
	\end{enumerate}
\end{thm}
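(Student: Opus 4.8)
The plan is to establish item~\ref{item:eloise-winning-online}, namely that $\neg G(X,0)$ holds iff Eloise has a reactionary (second-player) winning strategy, directly by structural induction on the decomposition tree $\tree(G)$, and then to deduce item~\ref{item:abelard-winning-online} from it by the duality of Remark~\ref{rmk:duality}. Throughout I would prove both directions of the biconditional simultaneously, matching the game recursion of Definition~\ref{dfn:games-on-graphs} against the recursive characterisation of evaluation to $0$ from the previous subsection (singleton: $G(X,0)$ iff $x\notin X$; $\dis$: $G(X,0)$ iff $G_i(X,0)$ for all $i$; $\con$: $G(X,0)$ iff $G_i(X,0)$ for some $i$; prime: $G(X,0)$ iff $(G/P)(\negquot_P(X),0)$, i.e.\ iff some $T^*\in\ms(G/P)$ has $M(X,0)$ for every $M\in T^*$).

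The singleton, $\dis$ and $\con$ cases are routine and mirror this recursion. In the singleton case the unique play has outcome $x$, which Eloise wins exactly when $x\in X$, i.e.\ exactly when $\neg G(X,0)$. At a $\dis$ node Eloise moves, so she has a winning strategy iff she can pick a child $G_i$ on which (by the induction hypothesis) she wins, i.e.\ iff $\neg G_i(X,0)$ for some $i$; this matches $\neg G(X,0)$. At a $\con$ node Abelard moves, so Eloise wins iff she wins on every child, i.e.\ iff $\neg G_i(X,0)$ for all $i$, again matching $\neg G(X,0)$. In each case the restriction of a winning strategy to the chosen subgame supplies the converse.

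The prime case $\moddec(G) = (G/P)(\tree(M_1),\dots,\tree(M_n))$ is where the sequential mode does real work, and I expect it to be the main obstacle. For the forward direction, assume $\neg G(X,0)$, so every $T\in\ms(G/P)$ contains a module $M$ with $\neg M(X,0)$. Since Eloise is reactionary, Abelard announces some $T$ first; Eloise picks such a module $M\in T$ with $\neg M(X,0)$, extends $\{M\}$ to a maximal clique $S\in\mc(G/P)$, and plays $S$. Because a clique and a stable set meet in at most one vertex, $S\cap T=\{M\}$, so the play avoids deadlock and descends to $M$, where the induction hypothesis furnishes a reactionary winning strategy. This is precisely the step that fails for a static Eloise, who cannot select $M$ with foreknowledge of $T$; it is what separates the sequential setting from Theorem~\ref{thm:static-determinacy}. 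For the converse, assume $G(X,0)$ and let Abelard announce the distinguished $T^*$ with $M(X,0)$ for all $M\in T^*$. Any $S$ Eloise plays gives $S\cap T^*$ either empty (a draw, hence no win for Eloise) or a singleton $\{M\}$ with $M\in T^*$, whence $M(X,0)$ and, by the induction hypothesis, Eloise has no winning strategy on $M$. Either way Eloise cannot win, so she has no reactionary winning strategy on $G$.

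Finally, item~\ref{item:abelard-winning-online} follows by applying item~\ref{item:eloise-winning-online} to $\compl{G}$ and $\compl{X}$. Since $\mc(G/P)=\ms(\compl{G}/P)$ and the roles of $\dis,\con$ are interchanged under complementation, the game on $\compl{G}$ is obtained from that on $G$ by swapping the two players while preserving the outcome variable; hence a reactionary winning strategy for Eloise on $(\compl{G},\compl{X})$ is exactly a reactionary winning strategy for Abelard on $(G,X)$. Combining this with $\compl{G}(\compl{X},0)\iff G(X,1)$ from Remark~\ref{rmk:duality} yields $\neg G(X,1)$ iff Abelard has a reactionary winning strategy, completing the argument.
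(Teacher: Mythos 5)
Your proposal is correct and takes essentially the same approach as the paper: the decisive prime-node step (Eloise reacts to Abelard's stable set $T$, picks $M\in T$ with $\neg M(X,0)$, extends $\{M\}$ to a maximal clique, and uses the singleton-intersection property to avoid deadlock) is exactly the paper's construction, and item~2 is obtained by the same complementation duality. The only cosmetic difference is in the converse direction, where the paper argues directly—running the assumed winning strategy against Abelard strategies induced by arbitrary maximal stable sets of $G$ and showing the outcome lies in $T\cap X$—while you take the contrapositive by induction via the quotient characterisation of $G(X,0)$; the two arguments rest on the same lemmas and are interchangeable.
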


\begin{proof}	
	For the left-right implication of \ref{item:eloise-winning-online}, the idea is that Eloise aims to maintain the property that all maximal stable sets consistent with the play thus far intersect the given assignment.
	Formally, given an assignment $X$ intersecting every stable set of a graph $G$, we define what it means to play \emph{according to} $X$, appealing to the classification of maximal stable sets in Propostion~\ref{prop:mc-ms-via-decomposition}:
	\begin{itemize}
		\item If $\tree(G)$ is a singleton there is nothing to play and, since $X$ is nonempty, Eloise wins.
		\item If $\tree(G) = \dis (\tree(G_1), \dots, \tree(G_n)$ then $X$ must intersect some $G_i$, and so Eloise chooses this $G_i$ and continues to play according to $X\cap V(G_i)$.
		\item If $\tree(G) = \con (\tree (G_1), \dots, \tree(G_n))$ then $X$ must intersect every $G_i$, so if Abelard chooses some $G_i$ then Eloise continues to play according to $X \cap V(G_i)$.
		\item If $\tree(G) = (G/P) (\tree(G_1), \dots, \tree(G_n))$ and Abelard plays $T \in \ms(G/P)$, then we must have that $X$ intersects some $M\in T$, and so Eloise chooses any maximal clique of $G/P$ extending $M$.
	\end{itemize}
	Notice that any play of this strategy cannot reach a deadlock, by construction, and so it must terminate in the first case, at a variable, and Eloise wins.
	
	Conversely, let $\sigma$ be a reactionary winning strategy for Eloise and, for any maximal stable set $T$, take the play induced by $\sigma$ when Abelard plays according to $T$ (cf.~the proof of Theorem~\ref{thm:static-determinacy}).
	By induction on the length of the play, it is not difficult to see that the play always intersects $T$, and hence $\neg G(X,0)$, as required.
	
	\ref{item:abelard-winning-online} follows by duality.
\end{proof}

	Notice that the bottom-right entry of the table, when $G$ evaluates to both $0$ and $1$, is inherited directly from the static case, Theorem~\ref{thm:static-determinacy}, and cannot be strengthened for obvious reasons.

		\section{A proof system for entailment via non-linear graphs}
		\label{sect:proof-system}

		In this section we give a complete inference system for Boolean Graph Logic in the style of `deep inference': inference rules may rewrite induced subgraphs of a graph under certain situations.
		In order to admit a complete system, we first (conservatively) extend BGL to account for non-linearity.
		
		\subsection{Nonlinear graphs}
%
		
		We now consider graphs where the same variable may occur many times as a node.
		To avoid ambiguity, the graphs considered until now will now be referred to as `linear' graphs.
We now reserve the set $\Var$ for Boolean variables, equipping graphs with an explicit labelling function assigning a variable to each node:
		
		\begin{definition}
			[(Non-linear) graphs]
			A (labelled) graph is a tuple $G=(V,E,L)$ where $V$ is an arbitrary finite set and, as expected, $E \subseteq \binom V 2$.
			Furthermore $L$ is a function $V \to \Var $.
			For a set $U \subseteq V$, we write $\lfloor U\rfloor := \{ L(v):v\in U \}$.
		\end{definition}
		
		

		We may extend the notions of evaluation and entailment to non-linear graphs in a natural way.
		
		\begin{definition}
Let $G$ and $H$ be (non-linear) graphs and $X \subseteq \Var$.
We define the following notions of evaluation,
			\begin{itemize}
				\item $G(X,1)$ if $\exists S \in \mc(G). \lfloor S \rfloor \subseteq X$.
				\item $G(X,0)$ if $\exists T \in \ms(G). \lfloor T \rfloor \cap X = \emptyset$.
			\end{itemize}
		and the following notions of entailment:
		\begin{itemize}
			\item $G \entails \con H$ if $\forall S \in \mc(G). \exists S'\in \mc (H). \lfloor S'\rfloor \subseteq \lfloor S \rfloor$.
			\item $G \entails \dis H$ if $\forall T \in \ms (H) . \exists T'\in \ms (H) . \lfloor T' \rfloor \subseteq \lfloor T \rfloor$.
		\end{itemize}
		\end{definition}
		
		Clearly these notions admit the anlogous ones for linear graphs as special cases.
		What is more, they are conservative over the analogous notions for non-read-once formulas, when restricted to non-linear cographs.
		We do not go into detail on this point but leave the verification of this fact as an exercise to the reader.
%

		\subsection{Deep inference and rules on modules}
		
		Let us write $G[M]$ to distinguish a module $M$ in a graph $G$.
		We may then write $G[M']$ for the graph where $M$ is replaced by $M'$ in $G$, retaining all the edges connected to the module.
		
		\begin{example}
			Let $G[M]$ be the following graph, with only edges indicated, where $M$ is the module $\{x,y \}$:
			\[
			\begin{tikzpicture}
			\node[vertex] (v) at (0,0) {$v$};
						\node[vertex] (w) at (1,0) {$w$};

				\node[vertex] (x) at (2.15,0) {$x$};
				\node[vertex] (y0) at (2.85,0.5) {$y_0$};
								\node[vertex] (y1) at (2.85,-0.5) {$y_1$};
				
				\draw[b] (x) -- (y0);
							\draw[b] (x) -- (y1);

				\node[draw,fit=(x) (y0) (y1)] (M) {};
				
			\node[vertex] (z) at (4,0) {$z$};
			
			\draw[b] (v) -- (w);
			\draw[b] (w) -- (M);			
			\draw[b] (M) -- (z);	
					
			\end{tikzpicture}
			\]
			If we set $M'$ to be the graph below, left, then $G[M']$ is the graph below, right:
			\[
			\raisebox{-0.5\height}{
		\begin{tikzpicture}
		\node[vertex] (x0) at (2.15,0.5) {$x$};
		\node[vertex] (x1) at (2.15,-0.5) {$x$};
		\node[vertex] (y0) at (2.85,0.5) {$y_0$};
		\node[vertex] (y1) at (2.85,-0.5) {$y_1$};
		
		\draw[b] (x0) -- (y0);
		\draw[b] (x1) -- (y1);
		\end{tikzpicture}	
		}
	\qquad,\qquad
	\raisebox{-0.5\height}{
\begin{tikzpicture}
\node[vertex] (v) at (0,0) {$v$};
\node[vertex] (w) at (1,0) {$w$};

\node[vertex] (x0) at (2.15,0.5) {$x$};
\node[vertex] (x1) at (2.15,-0.5) {$x$};
\node[vertex] (y0) at (2.85,0.5) {$y_0$};
\node[vertex] (y1) at (2.85,-0.5) {$y_1$};

\draw[b] (x0) -- (y0);
\draw[b] (x1) -- (y1);

\node[draw,fit=(x0) (x1) (y0) (y1)] (M) {};

\node[vertex] (z) at (4,0) {$z$};

\draw[b] (v) -- (w);
\draw[b] (w) -- (M);			
\draw[b] (M) -- (z);	

\end{tikzpicture}	
}
			\]
			Formally speaking, the two occurrences of $x$ above are different nodes with the same label.
			When displaying graphs visually, we do not make this distinction explicitly, but formally if we indicate multiple occurrences of the same graph $G$, it means that they are all label-preserving isomorphic to $G$.

			Notice that $M$ and $M'$ are relation webs, and that, say, $M \entails{\con} M'$. We also have that $G[M] \entails{\con} G[M']$.
			This is no coincidence, as we will see in the next result.
		\end{example}
		
		\begin{lemma}
			[Deep inference on graphs]
			\label{lem:deep-inference}
			Suppose $M$ is a module of $G$ and $M \entails \star M'$, for $\star \in \{ \vee, \wedge \}$.
			Then $G[M] \entails \star G[M']$.
		\end{lemma}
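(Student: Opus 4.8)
The plan is to apply the reconstruction of maximal cliques and stable sets through a modular quotient, Lemma~\ref{lem:mc-ms-ser-par-comp}, to the coarsest partition that isolates $M$. Concretely, let $P := \{M\} \cup \{ \{v\} : v \in V(G) \setminus M \}$, which is a modular partition of $G[M]$ since $M$ and all singletons are modules. Writing $P'$ for the analogous partition of $G[M']$, the crucial observation is that the two quotients coincide: since $G[M']$ is obtained by substituting $M'$ for $M$ while retaining every edge incident to the module, the map fixing each singleton and sending the node $M$ to the node $M'$ is an isomorphism $G[M]/P \cong G[M']/P'$. Call this common quotient $Q$ and write $\mu$ for its vertex corresponding to $M$ (resp.\ $M'$); every other vertex of $Q$ is a singleton, whose unique maximal clique and maximal stable set is itself.

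I would prove the conjunctive statement directly and obtain the disjunctive one by duality. So suppose $M \entails{\con} M'$ and let $S \in \mc(G[M])$. By Lemma~\ref{lem:mc-ms-ser-par-comp} there is a maximal clique $\mathcal{S} \in \mc(Q)$ and maximal cliques $S_N \in \mc(N)$ for $N \in \mathcal{S}$ with $S = \bigcup_{N \in \mathcal{S}} S_N$; for a singleton node $N = \{v\}$ we necessarily have $S_N = \{v\}$. If $\mu \notin \mathcal{S}$, then $S$ is built entirely from singletons, and since $\mathcal{S} \in \mc(Q) = \mc(G[M']/P')$ the same lemma applied to $G[M']$ gives $S \in \mc(G[M'])$, so $S$ itself is the required subclique. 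If $\mu \in \mathcal{S}$, then $S_\mu \in \mc(M)$, and by $M \entails{\con} M'$ there is $S'_\mu \in \mc(M')$ with $S'_\mu \subseteq S_\mu$. Set $S^* := S'_\mu \cup \bigcup_{N \in \mathcal{S}, N \neq \mu} S_N$. Since $\mathcal{S} \in \mc(Q)$ and each component is a maximal clique of its node, Lemma~\ref{lem:mc-ms-ser-par-comp} (now for $G[M']$) yields $S^* \in \mc(G[M'])$, and $S^* \subseteq S$ because $S'_\mu \subseteq S_\mu$ while all other components are unchanged. As $S$ was arbitrary, $G[M] \entails{\con} G[M']$.

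For the disjunctive case I would invoke Remark~\ref{rmk:duality} rather than repeat the argument. Complementation preserves modules and commutes with substitution, so $\compl{G[M]} = \compl{G}[\compl M]$ and $\compl{G[M']} = \compl{G}[\compl{M'}]$, with $\compl M$ a module of $\compl G$. By duality $M \entails{\dis} M'$ is equivalent to $\compl{M'} \entails{\con} \compl M$, and $G[M] \entails{\dis} G[M']$ is equivalent to $\compl{G[M']} \entails{\con} \compl{G[M]}$, i.e.\ $\compl{G}[\compl{M'}] \entails{\con} \compl{G}[\compl M]$. This last statement is exactly an instance of the conjunctive case already proved, with $\compl G$, $\compl{M'}$, $\compl M$ playing the roles of $G$, $M$, $M'$, so the disjunctive case follows.

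The only real subtlety — and what I would check most carefully — is that the two quotients genuinely agree as the same graph $Q$, so that Lemma~\ref{lem:mc-ms-ser-par-comp} can be read off for $G[M]$ and re-read for $G[M']$ with the identical index set $\mathcal{S}$. This is where the definition of $G[M']$ as ``$M'$ substituted for $M$ retaining all incident edges'' does the work: it guarantees that $M'$ sits in $G[M']$ as a module with exactly the same external neighbourhood as $M$, so that the maximal cliques and stable sets of $Q$ are literally shared and only the local data at $\mu$ changes. Everything else is bookkeeping.
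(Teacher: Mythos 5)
Your proposal is correct and follows essentially the same route as the paper's proof: decompose a maximal clique $S \in \mc(G[M])$ via Lemma~\ref{lem:mc-ms-ser-par-comp} into its $M$-part and the rest, replace the $M$-part by a smaller maximal clique of $M'$ supplied by $M \entails{\con} M'$, reassemble via the same lemma, and obtain the disjunctive case by duality. The only difference is presentational: you make explicit the modular partition (singletons together with $M$) and the quotient isomorphism that the paper's two-case argument uses implicitly, and you spell out the duality step that the paper dispatches in one clause.
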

	\begin{proof}
		We consider only the case when $\star = \con$, the case of $\star = \dis$ being dual.

Let $S \in \mc(G[M])$. We have the following cases:
\begin{itemize}
	\item If $S \cap M = \emptyset$ then also $S \in \mc (G[M'])$.
	\item Otherwise, we may write $S = S_M \sqcup S'$ for some $S_M \in \mc(M)$, by Lemma~\ref{lem:mc-ms-ser-par-comp}.
	Now, since $M \entails \con M'$, we have some $S_M'\subseteq S_M$ with $S_M'\in \mc(M')$.
	Therefore, again by Lemma~\ref{lem:mc-ms-ser-par-comp}, we have that $S_M'\sqcup S' \in \mc(G[M'])$.
\end{itemize}
Thus indeed $G[M] \entails \con G[M']$.
	\end{proof}
	
The proposition above is	
		a generalisation of `deep inference' reasoning on formulas, where we may operate under arbitrary alternations of $\vee $ and $\wedge$.
%

\begin{definition}
	[Inference rules, systems and derivations]
	An \emph{inference rule} on graphs is simply a binary relation $\rightarrow$ on graphs. 
	A \emph{proof system} is a set of inference rules,\footnote{Note that, in this presentation there is not much difference between a rule and a system, but we maintain the distinction as it is natural from the proof theory and rewriting theory points of view.} and a \emph{derivation} in a proof system is just a sequence of graphs $(G_1, \dots , G_n)$ where each $(G_i,G_{i+1})$ is an instance of a rule in the system.
\end{definition}

Inference rules may be specified in many different ways, and we will introduce some bespoke notation in what follows in order to compactly write inference rules.

		Given the proposition above, we may safely import the standard structural rules from deep inference proof theory, restricted to modules:
		\begin{definition}
			[Structural rules]
The system $\msks$ consists of the following rules:\footnote{The subscripts $l$ and $r$ are usually written as annotations $\uparrow$ and $\downarrow$ respectively, but we chose a different notation to reduce the number of arrows in use. The $l$ and $r$ subscripts is a reference to sides of the sequent calculus.}
	\[
	\begin{array}{rccc}
\wk_r : & 
\Mod{{G_i}}
& \to &
\vbcengraph{\TwoGraphVertMod{G_0,G_1}{g}} \\
\cntr_r : &
\vbcengraph{\TwoGraphVertMod{G,G}{g}} & \to & \Mod G 
	\end{array}
	\quad \quad \quad
	\begin{array}{rccc}
\wk_l : &
\vbcengraph{	\TwoGraphVertMod{G_0,G_1}{r}}
& \to & 
\Mod{G_i}\\
\cntr_l : &
\Mod G &\to & \vbcengraph{\TwoGraphVertMod{G,G}{r}}
	\end{array}
	\]
for $i \in \{0,1\}$.
The fact that the LHS and RHS of these rules are boxed indicates that the corresponding induced subgraphs must be modules in the LHS and RHS, respectively, of an instance of the rule. There may be an ambient surrounding graph that is not indicated, but no nodes outside the module are affected by the rule.
There are no further restrictions on the graphs indicated or the surrounding graph outside the module.
		\end{definition}

For comparison, we give also the formula theoretic versions of some of the rules above:
\begin{equation}
\label{eqn:msks-formula-version}
\begin{array}{rrcl}
\wk_r : & A_i & \to & A_0 \dis A_1 \\
\cntr_r : & A \dis A & \to & A 
\end{array}
\qquad
\begin{array}{rrcl}
\wk_l : & A_0 \con A_1 & \to & A_i \\
\cntr_l : & A& \to & A \con A
\end{array}
\end{equation}
Usually in deep inference proof theory we must operate under some equational theory, here associativity and commutativity of $\dis $ and $\con$.
However this is implicit in the graph theoretic setting.
	
\begin{definition}
	[Soundness and completeness]
We say that an inference rule $\rightarrow$ is \emph{sound} for $\entails \star$, for $\star \in \{\dis,\con \}$, if whenever $G\rightarrow H$ we have $G \entails \star H$.	
A system is sound whenever all its inference rules are.
We say that a system is \emph{complete} for $\entails \star$ if, whenever $G\entails \star H$, there is a derivation from $G$ to $H$.
\end{definition}

Since $\msks$ was induced by a sound system on formulas, we immediately have the following from Lemma~\ref{lem:deep-inference}:
		\begin{proposition}
			\label{prop:msks-sound}
			$\msks$ is sound for both $\entails \con$ and $\entails \dis$.
		\end{proposition}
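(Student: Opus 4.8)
The plan is to reduce the soundness of each of the four rules to a single \emph{local} entailment between the rewritten modules, and then discharge these local entailments directly from the combinatorics of $\dis$- and $\con$-composition. By Lemma~\ref{lem:deep-inference}, if $M$ is the boxed module on the left of a rule instance and $M'$ the boxed module on the right, then $M \entails\star M'$ already yields $G[M] \entails\star G[M']$ for the whole graph. So it suffices to establish $M \entails\con M'$ and $M \entails\dis M'$ for the module-level version of each rule.

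First I would read off the pair $(M,M')$ from each rule, remembering that a green edge between super-nodes denotes a disjunction (disjoint union) and a red edge a conjunction (join). For $\wk_r$ we have $M = G_i$ and $M' = G_0 \dis G_1$; for $\cntr_r$, $M = G \dis G$ and $M' = G$; for $\wk_l$, $M = G_0 \con G_1$ and $M' = G_i$; and for $\cntr_l$, $M = G$ and $M' = G \con G$. These are exactly the graph renderings of the formula rules displayed in \eqref{eqn:msks-formula-version}.

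Next I would verify the resulting eight entailments using Proposition~\ref{prop:mc-ms-via-decomposition}, which states that a maximal clique of a disjunction lies wholly in one disjunct while a maximal stable set is a union of one maximal stable set per disjunct (and dually for conjunctions). Each check is then a one-line matching. For $\wk_r$: every $S \in \mc(G_i)$ is still maximal in $G_0 \dis G_1$, giving $\entails\con$; and every $T = T_0 \cup T_1 \in \ms(G_0 \dis G_1)$ has $T_i \in \ms(G_i)$ as a subset with $\lfloor T_i\rfloor \subseteq \lfloor T\rfloor$, giving $\entails\dis$. The rule $\wk_l$ is dual. For the contractions one matches a clique (or stable set) confined to one of the two identical copies of $G$ against its label-preserving twin in the single copy, so that the two have equal label sets. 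I would stress that none of these verifications require the components $G_0, G_1, G$ to be cographs, deterministic, or total: the clique/stable-set facts are purely structural, which is precisely why the structural rules of Boolean logic lift to arbitrary graph modules in BGL.

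The only point demanding care -- and the step I would be most careful about -- is the two contraction rules. There the two occurrences of $G$ are formally distinct nodes bearing the same labels, so the matching must be performed through the label projection $\lfloor\cdot\rfloor$ rather than on the underlying node sets; once phrased that way, $\cntr_r$ and $\cntr_l$ fall out by duality from the two weakening computations. Everything else is an immediate reading of Proposition~\ref{prop:mc-ms-via-decomposition}, as the remark ``we immediately have the following from Lemma~\ref{lem:deep-inference}'' suggests.
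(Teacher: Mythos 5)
Your overall strategy is exactly the paper's: the paper proves this proposition in one line, invoking Lemma~\ref{lem:deep-inference} to reduce soundness to the module-level entailments and the (graph renderings of the) formula rules \eqref{eqn:msks-formula-version}, and your proposal correctly fills in the module-level checks via Proposition~\ref{prop:mc-ms-via-decomposition} -- indeed more carefully than the paper, since the components $G_0,G_1,G$ are arbitrary graphs rather than formulas, so the graph-level clique/stable-set characterisation is genuinely what is needed. Your treatment of the two weakening rules and of the labelling subtlety is correct.

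There is, however, a concrete flaw in how you dispatch the contractions. First, contraction is \emph{not} dual to weakening: under the duality of Remark~\ref{rmk:duality} ($G \entails\con H$ iff $\compl H \entails\dis \compl G$, and $\compl{G \dis G} = \compl G \con \compl G$), the pairs are $(\wk_r,\wk_l)$ and $(\cntr_r,\cntr_l)$, so the contraction rules cannot ``fall out by duality from the two weakening computations.'' Second, your twin-matching argument as stated (``a clique or stable set confined to one of the two identical copies, matched against its twin'') only covers $\cntr_r$ for $\entails\con$ and $\cntr_l$ for $\entails\dis$, i.e.\ the directions where maximal cliques of $G \dis G$ (resp.\ maximal stable sets of $G \con G$) lie in a single copy. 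The remaining two checks -- $\cntr_r$ for $\entails\dis$ and $\cntr_l$ for $\entails\con$ -- are dual to \emph{each other}, not to the ones you did, and there the witness is not confined to one copy: given $T \in \ms(G)$ one must take $T' = T^{(1)} \cup T^{(2)}$, the union of its twins in both copies, which is maximally stable in $G \dis G$ by Proposition~\ref{prop:mc-ms-via-decomposition} and satisfies $\lfloor T' \rfloor = \lfloor T \rfloor$ (and dually for maximal cliques of $G \con G$). With that correction the eight entailments are all established and the proof is complete.
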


	\begin{remark}
		[Structural rules beyond modules]
		Despite the fact that we have restricted our deep inference rules to modules, it is not hard to see that there are variations of the rules $\wk_r$ and $\cntr_l$ that operate on arbitrary (induced) subgraphs yet remain sound.
		We do not give details here, being beyond the scope of this work, but we do point out a fundamental problem with extending the other structural rules to non-modular subgraphs, in particular the $\cntr_r$ rule.
		Consider, for example, the following situation:
		\[
		\ThreeGraph{x,y,y}rgg
		\]
		Attempting to apply $\cntr_r$ to the two $y$ nodes would leave us with a choice of how to resolve the clash between the upper $\redge x y$ and the lower $\gedge x y$.
	\end{remark}

	\subsection{Entailment-specific rules}
		Notice that all the rules thus far introduced are sound for both $\entails \dis$ and $\entails \con$. 
	Since we know that these two entailments are distinct, any complete system for either entailment must have additional rules.
	We introduce these in the following definition:
	
	\begin{definition}
		We define the following rules,\footnote{Notice that we could have allowed $x$ to be an arbitrary module $M$ instead, but the following exposition is slightly simpler by restricting to this atomic version.}
		\[
		\Copy : 
		\ThreeGraph{x,R_0,R_1}rrg
		\ \to \
			\FourGraph{x,R_0,x,R_1}rggggr
		\qquad \qquad \Dopy \ : \ 
		\ThreeGraph{x,G_0,G_1}ggr
		\ \to \ 
		\FourGraph{x,G_0,x,G_1}grrrrg
		\]
		where:
		\begin{itemize}
			\item For $\Copy$, $\{R_0,R_1\}$ partitions the set $\{y :  \{x,y\} \in E(\lhs) \}$, i.e.\ $R_0 \sqcup R_1$ is the set of `red' edges of the LHS including $x$.
			\item For $\Dopy$, $\{G_0,G_1\}$ partitions the set $\{y :  \{x,y\} \notin E(\lhs) \}$, i.e.\ $G_0\sqcup G_1$ is the set of `green' edges of the LHS including $x$.
		\end{itemize}
	Neither the LHS nor the RHS of each rule need form modules, but the indicated edge-relationships must hold. 
	There are no restrictions on the unindicated edges further to the conditions on $R_0, R_1, G_0,G_1$ above.
	On the RHS of both rules there are two occurrences of $x$;
	formally, these are two different nodes with the same label $x$.
	\end{definition}

	We will soon use these rules to achieve normal forms of graphs that drive our ultimate completeness proof.
The point of these rules is not only that they are sound for $\entails \con$ and $\entails \dis$, respectively, but moreover that they do not change the Boolean relation computed, as shown in the following result.

			\begin{lemma}
				\label{lem:Copy-Dopy-sound}
			We have the following:
			\begin{enumerate}
				\item\label{item:copy-pres-mcs} If $G \annarr{\Copy} H$ then $\lfloor \mc (G) \rfloor  = \lfloor \mc (H) \rfloor$, i.e.\ $G $ and $H$ have the same maximal cliques, up to the variables occurring in them.
				\item\label{item:dopy-pres-mss} If $G \annarr{\Dopy} H$ then $\lfloor \ms (G) \rfloor  = \lfloor \ms (H) \rfloor$, i.e.\ $G $ and $H$ have the same maximal stable sets, up to the variables occurring in them.
			\end{enumerate}
		\end{lemma}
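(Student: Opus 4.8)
The plan is to prove \ref{item:copy-pres-mcs} directly by classifying the maximal cliques of $G$ and of $H$, and then to obtain \ref{item:dopy-pres-mss} for free by duality. Indeed $\Dopy$ is precisely $\Copy$ read in the complement (the green neighbours of $x$ partitioned into $G_0\sqcup G_1$ become the red neighbours of $x$ in $\compl G$), and since $\ms(G)=\mc(\compl G)$ we have $G\annarr{\Dopy}H$ iff $\compl G\annarr{\Copy}\compl H$; applying \ref{item:copy-pres-mcs} to $\compl G,\compl H$ then yields $\lfloor\ms(G)\rfloor=\lfloor\ms(H)\rfloor$, exactly as in Remark~\ref{rmk:duality}.

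For \ref{item:copy-pres-mcs} I would first fix notation. Write $V(G)=\{x\}\cup R_0\cup R_1\cup Z$, where $R_0\sqcup R_1$ is (by hypothesis) the whole set of red neighbours of $x$ and $Z$ is the set of green neighbours of $x$. The graph $H$ replaces $x$ by two nodes $x_1,x_2$, both labelled $x$, with $x_1$ red exactly to $R_0$, $x_2$ red exactly to $R_1$, the pair $x_1,x_2$ mutually green, and both green to $Z$; crucially $G$ and $H$ agree on the common induced subgraph $W:=G\restr(R_0\cup R_1\cup Z)=H\restr(R_0\cup R_1\cup Z)$. I would assume, as is implicit in ``partition'', that $R_0$ and $R_1$ are nonempty: otherwise $\Copy$ introduces a stray green copy of $x$ which is its own maximal clique, with no matching maximal clique in $G$, and the statement genuinely fails.

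The key structural point is that the all-green biclique between $R_0$ and $R_1$ forbids any clique from meeting both blocks. Hence every maximal clique of $G$ containing $x$ has the form $\{x\}\cup C$ with $C\in\mc(G\restr R_0)$ or $C\in\mc(G\restr R_1)$, a $Z$-node being unable to extend it (it is green to $x$), and nonemptiness of the opposite block keeps its nodes green to $C$ so that these are truly maximal. Dually, in $H$ the mutually green $x_1,x_2$ force every clique to contain at most one of them, and a maximal clique through $x_1$ (resp.\ $x_2$) has the form $\{x_1\}\cup C$ with $C\in\mc(G\restr R_0)$ (resp.\ $\{x_2\}\cup C$ with $C\in\mc(G\restr R_1)$). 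Reading off labels, both $\lfloor\mc(G)\rfloor$ and $\lfloor\mc(H)\rfloor$ contribute exactly $\{x\}\cup\lfloor C\rfloor$ for $C\in\mc(G\restr R_0)$ and for $C\in\mc(G\restr R_1)$ on the cliques meeting the copies. Finally, the maximal cliques of $G$ avoiding $x$ and those of $H$ avoiding $\{x_1,x_2\}$ are literally the same sets, namely the maximal cliques of $W$ contained in neither $R_0$ nor $R_1$, since ``not extendable by $x$'' in $G$ and ``not extendable by $x_1$ or by $x_2$'' in $H$ are the same condition. Combining the three types gives $\lfloor\mc(G)\rfloor=\lfloor\mc(H)\rfloor$.

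I expect the main obstacle to be the maximality bookkeeping rather than the clique-forming: in each case one must verify that no outside node can extend the candidate clique, and it is precisely here that both the $R_0/R_1$ green barrier and the nonemptiness of the blocks are needed (they are what rule out the degenerate $\{x_i\}$ maximal cliques). Conceptually the entire computation is the distributivity $x\wedge(R_0\vee R_1)=(x\wedge R_0)\vee(x\wedge R_1)$ lifted through an ambient graph; carrying it out at the level of maximal cliques rather than formulas is exactly what allows the edges internal to $R_0$, $R_1$ and $Z$, and their cross edges, to remain completely arbitrary.
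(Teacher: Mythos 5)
Your proposal is correct and takes essentially the same route as the paper's proof: both arguments rest on the green barrier between $R_0$ and $R_1$ (and between the two copies of $x$) to set up a label-preserving correspondence between $\mc(G)$ and $\mc(H)$, and both dispatch the $\Dopy$ case by duality. Your three-way classification of maximal cliques is just an unpacking of the paper's bijection $S \mapsto S'$, with the added (and correct) observation that nonemptiness of the blocks $R_0,R_1$ --- implicit in the word ``partition'' --- is genuinely needed for the statement to hold.
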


	\begin{proof}
		We prove only \ref{item:copy-pres-mcs}, the case of \ref{item:dopy-pres-mss} being dual.
		
Consider the corresponding instance of $\Copy$,
\[
\ThreeGraph{x,R_0,R_1}rrg
\quad \to \quad
\FourGraph{x,R_0,x,R_1}rggggr
\]
and let us call the node labelled $x$ on the LHS $v$, the upper node labelled $x$ on the RHS $v_0$, and the lower node labelled $x$ on the RHS $v_1$. Assume all other nodes of the RHS have the same name as their corresponding nodes on the LHS.

Suppose $S \in \mc(\lhs)$, and notice that $S$ must be disjoint from either $R_0$ or $R_1$, since $\gedge{R_0}{R_1}$.
We define $S' \in \mc(\rhs)$ with $\lfloor S' \rfloor = \lfloor S \rfloor$ as follows:
\begin{itemize}
	\item If $v \notin S$ then we define $S' = S$;
		\item If $v \in S$ and $S\cap V(R_0) = \emptyset$, then we define $S' = (S \setminus\{v\}) \cup \{v_1\}$;
		\item If $v \in S$ and $S\cap V(R_1) = \emptyset$, then we define $S' = (S \setminus\{v\}) \cup \{v_0\}$.
\end{itemize}
By construction we have that $S \cong S'$, so $S'$ is a clique, and maximality is immediate.

For the converse direction, notice that any maximal clique of the RHS also cannot intersect both $R_0$ and $R_1$, and furthermore cannot contain both $v_0$ and $v_1$, since $\gedge {v_0} {v_1}$.
Thus the mapping from $S$ to $S'$ above is a bijection $\mc(\lhs) \to \mc(\rhs)$, finishing the proof.
%
		\end{proof}

	The following is an immediate consequence of the preceding lemma:
	\begin{proposition}
		\label{prop:copy-dopy-sound}
		We have the following:
		\begin{enumerate}
			\item 	Both $\Copy $ and $\Copy^{-1}$ are sound for $\entails \con$.
			\item 	Both $\Dopy $ and $\Dopy^{-1}$ are sound for $\entails \dis$.
		\end{enumerate}
		
	\end{proposition}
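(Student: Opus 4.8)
The plan is to derive this directly from Lemma~\ref{lem:Copy-Dopy-sound}, which already does all the substantive work; what remains is just to unfold the definitions of $\entails \con$ and $\entails \dis$. The key observation is that each of these entailments only asks, for every maximal clique (resp.\ stable set) on one side, that \emph{some} maximal clique (resp.\ stable set) on the other side be contained in it up to labels. Hence whenever two graphs have literally the same maximal cliques up to labelling, i.e.\ $\lfloor \mc(G)\rfloor = \lfloor \mc(H)\rfloor$, the defining condition of $\entails \con$ holds in \emph{both} directions, since each label-set appearing on one side appears identically on the other and a set is trivially a subset of itself. The only reason the lemma was phrased as an equality of label-sets, rather than a one-sided inclusion, is precisely so that this symmetry gives soundness of the rule and its inverse simultaneously.

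Concretely, for part~(1) I would suppose $G \annarr{\Copy} H$ and invoke Lemma~\ref{lem:Copy-Dopy-sound}(\ref{item:copy-pres-mcs}) to get $\lfloor \mc(G)\rfloor = \lfloor \mc(H)\rfloor$. To check $G \entails \con H$, take any $S \in \mc(G)$; then $\lfloor S\rfloor \in \lfloor \mc(G)\rfloor = \lfloor\mc(H)\rfloor$, so there is $S' \in \mc(H)$ with $\lfloor S'\rfloor = \lfloor S\rfloor \subseteq \lfloor S\rfloor$, which is exactly what $\entails \con$ requires. For soundness of $\Copy^{-1}$ I would run the identical argument with $G$ and $H$ interchanged: since $H \annarr{\Copy^{-1}} G$ is by definition $G \annarr{\Copy} H$, the same equality now reads, for each $S\in\mc(H)$, as supplying a matching $S'\in\mc(G)$, giving $H \entails\con G$.

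Part~(2) is completely dual: I would replace $\mc$ by $\ms$, $\Copy$ by $\Dopy$, and $\entails\con$ by $\entails\dis$, appealing to Lemma~\ref{lem:Copy-Dopy-sound}(\ref{item:dopy-pres-mss}). I expect no real obstacle, since all the genuine content is packaged in the lemma. The one point meriting a moment's care is that $\entails\dis$ quantifies over the maximal stable sets of the \emph{target} graph rather than the source, so the quantifier directions differ between the two entailments; but because the lemma provides an honest equality $\lfloor\ms(G)\rfloor = \lfloor\ms(H)\rfloor$, the required matching exists regardless of which graph plays the role of target, so both $\Dopy$ and $\Dopy^{-1}$ come out sound in the same breath.
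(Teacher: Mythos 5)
Your proof is correct and takes essentially the same route as the paper: the paper states this proposition as an immediate consequence of Lemma~\ref{lem:Copy-Dopy-sound}, and your argument is exactly the unfolding of that implication, including the key point that the \emph{equality} of label-sets $\lfloor \mc(G)\rfloor = \lfloor \mc(H)\rfloor$ (resp.\ $\lfloor \ms(G)\rfloor = \lfloor \ms(H)\rfloor$) is what makes both a rule and its inverse sound simultaneously, despite the differing quantifier directions of $\entails\con$ and $\entails\dis$.
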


\subsection{Reductions to DNF and CNF}

	Our completeness strategy is motivated by the disjunctive and conjunctive normal forms of Boolean functions.
In fact, Boolean relations may naturally be associated with a DNF, determining evaluation to $1$, and a CNF, determining evaluation to $0$.
Let us frame these normal forms in a graph-theoretic context.		
\begin{definition}
	[DNFs and CNFs]
	A DNF is a graph where all maximal cliques are disjoint. A CNF is a graph where all maximal stable sets are disjoint.
\end{definition}
Of course, DNFs are just (non-linear) relation webs of formulas of the form $\bigvee_i \bigwedge_j x_{ij}$ and CNFs are just (non-linear) relation webs of formulas of the form $\bigwedge_i \bigvee_j x_{ij}$.
We will thus identify DNFs and CNFs with their formula representations when it is convenient.

	The main result of this subsection is that we may generate DNFs and CNFs of arbitrary graphs using our entailment-specific rules:
\begin{lemma}
	\label{lem:dnf-cnf}
	For any graph $G$ we have the following:
	\begin{enumerate}
		\item There is a DNF $A$ with $G\annarr{\Copy}^* A$
		\item There is a CNF $B$ with $G \annarr{ \Dopy}^* B$.
	\end{enumerate}
\end{lemma}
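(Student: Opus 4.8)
The plan is to prove (1) and obtain (2) by the duality of Remark~\ref{rmk:duality}, under which complementation interchanges $\Copy$ with $\Dopy$, $\mc$ with $\ms$, and DNFs with CNFs. For (1) I would first install a termination measure. By Lemma~\ref{lem:Copy-Dopy-sound} each application of $\Copy$ induces a bijection $\mc(G) \to \mc(H)$ preserving clique sizes, while adding exactly one vertex; hence the quantity
\[
\Phi(G) \ := \ \sum_{S \in \mc(G)} |S| \ - \ |V(G)|
\]
strictly decreases by $1$ at each $\Copy$-step. Since every vertex lies in at least one maximal clique we have $\Phi(G) \geq 0$, with $\Phi(G)=0$ precisely when the maximal cliques are pairwise disjoint, i.e.\ when $G$ is a DNF. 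Thus any maximal $\Copy$-reduction from $G$ terminates, necessarily at a DNF, and it remains only to establish the \emph{progress property}: if $G$ is not a DNF then some instance of $\Copy$ applies to $G$.

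For progress I would recurse on the modular decomposition tree of Definition~\ref{dfn:dec-tree}, applying $\Copy$ inside a module while leaving the ambient graph fixed (the rule explicitly permits an unindicated surrounding graph, and soundness of such deep applications is Lemma~\ref{lem:deep-inference}). If $G = \dis(G_1,\dots,G_n)$ then its maximal cliques are exactly those of the components and are automatically disjoint across them, so $G$ is a DNF iff each $G_i$ is; if some $G_i$ is not, recurse into it, noting each component is a module of $G$. The substantive work is the $\con$ and prime cases, where maximal cliques genuinely overlap and must be distributed. A node $x$ (or, via the generalisation noted in the footnote to the definition of $\Copy$, a module $M$) is splittable exactly when its red-neighbourhood partitions into nonempty $R_0 \sqcup R_1$ with no red edge between them, i.e.\ when the red graph induced on that neighbourhood is disconnected. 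In the conjunctive case $G=\con(G_1,\dots,G_n)$ I would reduce each $G_i$ to a DNF first, so that a union of co-components becomes a module whose neighbourhood is a disjoint union of cliques, hence red-disconnected, realising the distributive step $M \con (\phi_0 \dis \phi_1) \to (M\con\phi_0)\dis(M\con\phi_1)$.

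The main obstacle, and the step demanding the most care, is the prime case $G = (G/P)(\tree(M_1),\dots,\tree(M_n))$ with $G/P$ prime: here every proper module of $G$ that is a union of children is trivial, so one may only copy a single child $M_i$, and this requires the corresponding vertex of $G/P$ to have a disconnected neighbourhood. This is genuinely delicate, since reducing the children does not alter quotient-level adjacencies, and it is \emph{not} true that every prime graph has such a vertex — locally connected prime graphs exist (for instance the triangular graph on $K_5$, the complement of the Petersen graph, is prime yet every neighbourhood is connected). I would therefore isolate the prime case as the crux, either sharpening the splittability criterion or, more plausibly, leaning on the module version of $\Copy$ together with a careful global distribution strategy; pinning down exactly which prime quotients admit a valid first step is where the real combinatorial content lies, and I would concentrate my effort there before invoking the measure $\Phi$ to conclude termination at a DNF, with (2) then following by duality.
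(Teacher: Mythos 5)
Your termination argument is correct, and it is essentially the paper's own argument recast as a potential function: the paper notes that a $\Copy$-step fixes the number of maximal cliques while strictly decreasing the number of intersecting pairs, whereas you note that it fixes $\sum_{S\in\mc(G)}|S|$ (via the size-preserving bijection from Lemma~\ref{lem:Copy-Dopy-sound}) while adding one vertex; either quantity bounds the length of any $\Copy$-derivation, and your characterisation ``$\Phi(G)=0$ iff $G$ is a DNF'' is right. The genuine gap is the half you explicitly leave open: \emph{progress}, i.e.\ that every $\Copy$-normal form is a DNF (the paper's Observation~\ref{obs:nf-dnf-cnf}). Without it, termination only shows that maximal $\Copy$-sequences end, not that they end at DNFs, so Lemma~\ref{lem:dnf-cnf} does not follow from what you have established. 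Note also that the mechanism you propose for progress is not available as stated: $\Copy$ requires $R_0\sqcup R_1$ to exhaust the neighbourhood of $x$ in the \emph{whole} graph, so one cannot ``apply $\Copy$ inside a module while leaving the ambient graph fixed''; Lemma~\ref{lem:deep-inference} is a statement about entailment between graphs differing in a module, not a closure of the rewrite rules under module contexts.

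However, the obstruction you isolate at the prime case is real, and it is a defect of the paper's proof rather than of your strategy. The paper disposes of progress in one line: two intersecting maximal cliques ``would form a redex for $\Copy$''. Under the stated rule this is false, since a redex at $x$ requires the subgraph induced on \emph{all} of $N(x)$ to be disconnected. Already the diamond ($K_4$ minus an edge, the web of $(a\dis d)\con b\con c$) has intersecting maximal cliques $\{a,b,c\}$ and $\{b,c,d\}$ but no vertex with disconnected neighbourhood; that instance can be rescued by the modular variant of $\Copy$ mentioned in the paper's footnote (copy the module $\{b,c\}$, splitting its external neighbourhood $\{a\}\sqcup\{d\}$). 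Your example, the complement of the Petersen graph, defeats even that repair: it is prime (so it has no nontrivial modules), every vertex's neighbourhood induces a connected prism (two triangles joined by a perfect matching), and yet its maximal cliques, the stars and triangles of $K_5$, intersect; so it is a $\Copy$-normal form that is not a DNF. This refutes Observation~\ref{obs:nf-dnf-cnf} and, with it, Lemma~\ref{lem:dnf-cnf} as literally stated. The honest conclusion is that your proposal is incomplete, but it cannot be completed (nor can the paper's proof) until the rule or the Observation is repaired --- for instance by redefining $\Copy$ so that a redex is induced by any pair of intersecting maximal cliques, which is what the paper's one-line justification tacitly assumes --- and checking that such a liberalised rule still satisfies Lemma~\ref{lem:Copy-Dopy-sound} is where the real work lies.
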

In what follows, we will only concern ourselves with DNFs and completeness for $\entails \con$, with the case of CNFs and $\entails \dis$ following by duality.

Notice that the rule $\Copy$ bears semblance to the distibutivity rule on formulas:
\[
\Copy' : \quad A\con (B\dis C) \quad \entails{} \quad (A\con B) \dis (A \con C)
\]
Indeed, the rule above is a special case of $\Copy$ when all nodes are indicated: if $\Copy' : A \to B$ then $\Copy : \web(A) \to \web(B) $.\footnote{This holds also for deep versions of $\Copy'$ if we allow $\Copy$ to operate within modules too.}
On the other hand, $\Copy$ is strictly more general than $\Copy'$, as shown in the following example.
\begin{example}
	\label{ex:copy-cograph-to-p5}
	Consider the following instance of $\Copy$:
	\[
	\FourGraph{w,x,y,z}rrggrr
	\quad \to \quad
	\raisebox{-0.4\height}{
\begin{tikzpicture}
\node[vertex] (w1) at (0.5,0) {$w$};
\node[vertex] (x) at (1.4,0) {$x$};
\node[vertex] (w2) at (0, -0.5) {$w$};
\node[vertex] (y) at (0,-1.4) {$y$};
\node[vertex] (z) at (1.4,-1.4) {$z$};

\draw[r] (w1) -- (x);
\draw[r] (w2) -- (y);
\draw[r] (x) -- (z);
\draw[r] (y) -- (z);
\draw[g] (w1) -- (w2);
\draw[g] (w1) -- (y);
\draw[g] (w1) -- (z);

\draw[g] (w2) -- (z);
\draw[g] (w2) -- (x);

\draw[g] (x) -- (y);
\end{tikzpicture}	
}
	\]
	Notice that the LHS is the relation web of the formula $(w\dis z) \con (x \dis y)$, while the RHS is isomorphic to the $P_5$, and so corresponds to no formula.
\end{example}

However, it is not hard to see that DNFs remain the only normal forms of $\Copy$:
\begin{observation}
	\label{obs:nf-dnf-cnf}
	The only normal forms of $\Copy$ are DNFs, and the only normal forms of $\Dopy$ are CNFs.
\end{observation}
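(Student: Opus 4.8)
The plan is to read off the statement from Lemma~\ref{lem:dnf-cnf} together with a short direct check, establishing that the $\Copy$-normal forms are exactly the DNFs (and dually for $\Dopy$ and CNFs). First I would record what a DNF looks like as a graph: by definition all of its maximal cliques are disjoint, so each vertex belongs to exactly one maximal clique and, since every edge extends to a maximal clique, both of its endpoints lie in that same clique; hence a DNF is nothing but a disjoint union of cliques. I will use this concrete description throughout.

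For the inclusion that every normal form is a DNF---which is what ``the only normal forms are DNFs'' asserts---the argument is immediate once Lemma~\ref{lem:dnf-cnf} is available. Let $G$ be in $\Copy$-normal form. By Lemma~\ref{lem:dnf-cnf}(1) there is a DNF $A$ with $G \annarr{\Copy}^{*} A$. Since, by hypothesis, no instance of $\Copy$ applies to $G$, this rewrite sequence has length $0$, so $G = A$ is already a DNF.

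For the converse, that every DNF is in fact a normal form, I would argue directly that no instance of $\Copy$ can fire on a disjoint union of cliques. Any such instance isolates a node (or, in the deep form of the rule, a module) $x$ and partitions its red-neighbourhood as $R_0 \sqcup R_1$ with both parts nonempty and no edges between them. But in a disjoint union of cliques the neighbourhood of a vertex is exactly the remainder of its clique, and the red-neighbourhood of a module is either empty or, again, a clique; in either case it contains no non-adjacent pair and so cannot be split into two nonempty green-separated parts. As every induced submodule of a disjoint union of cliques is again a disjoint union of cliques, the same obstruction rules out deep applications, so $G$ admits no $\Copy$-step and is normal.

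The statement for $\Dopy$ and CNFs then follows formally by the duality of Remark~\ref{rmk:duality}: complementation exchanges maximal cliques with maximal stable sets, DNFs with CNFs, and $\Copy$ with $\Dopy$, reducing the CNF case to the DNF case verbatim. I expect the only real subtlety to be that all of the genuine work is packaged inside Lemma~\ref{lem:dnf-cnf}; the ``normal form $\Rightarrow$ DNF'' direction is delicate precisely because it is equivalent to the assertion that every non-DNF admits a $\Copy$-step, and a self-contained proof of that (not routing through Lemma~\ref{lem:dnf-cnf}) would require locating, via the modular decomposition, a module whose red-neighbourhood is disconnected---with the prime nodes forming the awkward case.
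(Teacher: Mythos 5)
Your identification of DNFs with disjoint unions of cliques, your check that such graphs admit no $\Copy$-redex, and the duality reduction for $\Dopy$ are all fine. The genuine gap is in the inclusion the observation actually asserts, namely that every $\Copy$-normal form is a DNF: you obtain it by citing Lemma~\ref{lem:dnf-cnf}, but in the paper the proof of Lemma~\ref{lem:dnf-cnf} \emph{begins} by invoking Observation~\ref{obs:nf-dnf-cnf} --- it is precisely because the only normal forms are DNFs that ``every graph rewrites to a DNF'' can be reduced to termination of $\Copy$. So your route is circular with respect to the paper's development; the observation must be proved before, and independently of, that lemma. What has to be shown directly is that every graph which is not a DNF contains a $\Copy$-redex, and this is exactly what the paper's one-line proof addresses (a non-DNF has two intersecting maximal cliques, which are claimed to form a redex). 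Your proposal contains no substitute for this step --- you explicitly defer it --- so as it stands it does not establish the statement.

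Moreover, your closing worry about that deferred step is well-founded, which is why the gap is not routinely fillable. As the rule is stated, a $\Copy$-redex at a node $x$ requires the \emph{entire} set of neighbours of $x$ to split into two nonempty parts with all edges between them absent, i.e.\ the neighbourhood of $x$ must induce a disconnected subgraph; two maximal cliques meeting at $x$ only yield two non-adjacent neighbours of $x$, which is strictly weaker. Concretely, in $K_4$ minus an edge (vertices $x,a,b,c$ with all pairs adjacent except $a,c$ --- this is even the web of $(a \dis c) \con x \con b$), the maximal cliques $\{x,a,b\}$ and $\{x,b,c\}$ intersect, yet every vertex has a connected neighbourhood, so the atomic rule admits no redex there; a redex only appears for the \emph{module} form of $\Copy$ (take the module $\{x,b\}$, whose external neighbourhood $\{a,c\}$ is disconnected), which is the version the paper's footnote sets aside. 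So the direct argument you would need is exactly the one you gesture at via modular decomposition, and it is also the point at which the paper's own one-line proof is too quick.
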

\begin{proof}[Proof sketch]
	Any graph that is not a DNF has two intersecting maximal cliques, which would form a redex for $\Copy$.
	The argument for $\Dopy$ is dual.
\end{proof}

\begin{example}
	Revisiting the above Example~\ref{ex:copy-cograph-to-p5}, we may continue applying $\Copy$ as follows (now with only edges indicated):
	\[
	\raisebox{-0.4\height}{
		\begin{tikzpicture}
		\node[vertex] (w1) at (0.5,0) {$w$};
		\node[vertex] (x) at (1.4,0) {$x$};
		\node[vertex] (w2) at (0, -0.5) {$w$};
		\node[vertex] (y) at (0,-1.4) {$y$};
		\node[vertex] (z) at (1.4,-1.4) {$z$};
		
		\draw[b] (w1) -- (x);
		\draw[b] (w2) -- (y);
		\draw[b] (x) -- (z);
		\draw[b] (y) -- (z);
		\end{tikzpicture}}	
		\ \to \ 
		\raisebox{-0.4\height}{
			\begin{tikzpicture}
			\node[vertex] (w1) at (0.5,0) {$w$};
			\node[vertex] (x) at (1.4,0) {$x$};
			\node[vertex] (w2) at (0, -0.5) {$w$};
			\node[vertex] (y) at (0,-1.4) {$y$};
			\node[vertex] (z1) at (0.9,-1.4) {$z$};
			\node[vertex] (z2) at (1.4,-0.9) {$z$};
			
			\draw[b] (w1) -- (x);
			\draw[b] (w2) -- (y);
			\draw[b] (x) -- (z2);
			\draw[b] (y) -- (z1);
			\end{tikzpicture}}	
		\quad \to \quad \cdots \quad \to \quad
		\raisebox{-0.4\height}{
			\begin{tikzpicture}
			\node[vertex] (w1) at (0.5,0) {$w$};
			\node[vertex] (x1) at (1.4,0) {$x$};
					\node[vertex] (x2) at (1.9,-0.5) {$x$};
			\node[vertex] (w2) at (0, -0.5) {$w$};
			\node[vertex] (y1) at (0,-1.4) {$y$};
					\node[vertex] (y2) at (0.5,-1.9) {$y$};
			\node[vertex] (z1) at (1.4,-1.9) {$z$};
			\node[vertex] (z2) at (1.9,-1.4) {$z$};
			
			\draw[b] (w1) -- (x1);
			\draw[b] (w2) -- (y1);
			\draw[b] (x2) -- (z2);
			\draw[b] (y2) -- (z1);
			\end{tikzpicture}}
	\]
	The first step applies $\Copy $ on $z$, and then there are two steps applying $\Copy$ to $x$ and $y$ in any order.
	The resulting graph is a DNF.
	There is also a derivation from the original formula $(w\dis z) \con (x \dis y)$ to DNF by continuously applying $\Copy'$ (using deep inference), but by definition that derivation only includes cographs, unlike the one above.
\end{example}

We are now ready to prove the main result of this subsection.
\begin{proof}
	[Proof of Lemma~\ref{lem:dnf-cnf}]
	We prove only (1), since (2) follows by duality.
	By Observation~\ref{obs:nf-dnf-cnf} above, it suffices to show that the rule $\Copy $ is \emph{terminating}, i.e.\ that any $\Copy$-derivation has finite length.
	

Recall from the proof of Lemma~\ref{lem:Copy-Dopy-sound} that the number of maximal cliques remains constant for any instance of $\Copy$. 
However, we have strictly reduced the number of intersections, since any maximal cliques intersecting at the duplicated variable become disjoint on the RHS.
Thus any $\Copy$ derivation must terminate, as required.
\end{proof}

		\subsection{A completeness result}
We are now ready to give our systems for both entailment relations.

\begin{definition}
	[Systems for entailment]
	We define the following systems of inference rules:
	\begin{itemize}
		\item $ \consys \ := \ \msks \cup \{ \Copy, \Copy^{-1} \}$.
			\item $\dissys \ := \ \msks \cup \{ \Dopy, \Dopy^{-1} \}$.
	\end{itemize}
\end{definition}

Notice that it is immediate from Propositions~\ref{prop:msks-sound} and \ref{prop:copy-dopy-sound} that $\consys$ and $\dissys$ are sound for $\entails \con$ and $\entails \dis$ respectively.
The main goal of this section is to establish the converse result:

		\begin{theorem}
			[Completeness]
			\label{thm:completeness}
		We have:
		\begin{enumerate}
		\item If $G \entails \wedge H$ then $G \consys H$.
		\item if $G \entails \vee H$ then $G\dissys H$.
		\end{enumerate}
		\end{theorem}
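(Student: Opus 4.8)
The plan is to prove (1) in detail and obtain (2) by duality. The whole argument is driven by the reduction to DNF of Lemma~\ref{lem:dnf-cnf}, exploiting that $\Copy$ is reversible in $\consys$ (since $\Copy^{-1}\in\consys$) whereas the structural rules $\msks$ are not. First I would reduce $G$ to a DNF $A$, so that $G\annarr{\Copy}^* A$ is a $\consys$-derivation. Separately I would reduce $H$ to a DNF $B$, i.e.\ $H\annarr{\Copy}^* B$, and then read this derivation \emph{backwards}: each step reverses to an instance of $\Copy^{-1}\in\consys$, giving $B\consys H$. By Lemma~\ref{lem:Copy-Dopy-sound}(1), $\Copy$ preserves maximal cliques up to their labels, so $\lfloor\mc(A)\rfloor=\lfloor\mc(G)\rfloor$ and $\lfloor\mc(B)\rfloor=\lfloor\mc(H)\rfloor$. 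In a DNF the maximal cliques are exactly its pairwise disjoint terms, so $G\entails\con H$ unwinds to: every term $C$ of $A$ has a term $D$ of $B$ with $\lfloor D\rfloor\subseteq\lfloor C\rfloor$. It then remains to derive $A\consys B$ using only $\msks$, whence $G\consys A\consys B\consys H$ finishes.

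For the bridge $A\to^* B$ I would work entirely within $\msks$, in three phases. Note that each term of a DNF is a clique and hence a module, so $\wk_l$ and $\cntr_l$ are legitimate instances of the module-restricted rules when applied inside a term (cf.~Lemma~\ref{lem:deep-inference}). Phase one: for each term $C$ of $A$, pick a term $D$ of $B$ with $\lfloor D\rfloor\subseteq\lfloor C\rfloor$ and transform $C$ into an isomorphic copy of $D$; since $D$ is a clique it is determined by its label multiset, and starting from $C$ I can delete superfluous literals with $\wk_l$ and duplicate needed ones with $\cntr_l$ to match $D$ exactly. This turns $A$ into a disjunction $A'$ of terms of $B$, possibly with wrong multiplicities. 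Phases two and three: adjust the multiplicity of each term-type to agree with $B$ exactly --- using $\cntr_r$ to merge surplus copies and $\wk_r$ (with $G_1:=D$) to introduce missing copies, each freshly attached green-connected exactly as in the DNF $B$. The result is $B$.

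The main architectural obstacle is precisely the irreversibility of $\msks$: one cannot normalise both $G$ and $H$ to a common canonical DNF and simply reverse one half. This is circumvented above by using \emph{only} $\Copy^{-1}$ (never a structural rule) to travel from the DNF of $H$ back to $H$, and by orienting every structural step in the productive direction --- shrinking the source terms of $A$ and weakening in the target terms of $B$ --- which is exactly what the containment direction of $\entails\con$ allows. The residual care is routine: verifying that each $\wk_l/\cntr_l$ acts on a genuine module, and that the merged and introduced disjuncts reproduce $B$ up to the ambient $\eqac$-quotient that the graph setting quotients by implicitly.

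Finally, part (2) follows by duality (cf.~Remark~\ref{rmk:duality}), replacing $\Copy$ by $\Dopy$, maximal cliques by maximal stable sets, DNFs by CNFs, and $\entails\con$ by $\entails\dis$, with Lemma~\ref{lem:Copy-Dopy-sound}(2) and Lemma~\ref{lem:dnf-cnf}(2) in place of their conjunctive counterparts. One should only note that the containment supplied by $\entails\dis$ now runs from the clauses of the \emph{target} CNF to the clauses of the source, so the roles of the rules dualise: source clauses are \emph{grown} into target clauses via $\wk_r$ (inside a clause, which is again a module), while whole clauses are duplicated or deleted via $\cntr_l$ and $\wk_l$.
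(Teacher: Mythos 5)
Your proposal is correct and follows essentially the same route as the paper: normalise $G$ to a DNF $A$ via $\Copy$, normalise $H$ to a DNF $B$ via $\Copy$ and traverse that derivation backwards with $\Copy^{-1}$, then bridge $A$ to $B$ inside $\msks$ using Lemma~\ref{lem:Copy-Dopy-sound} to transfer the hypothesis $G\entails\con H$ to the DNFs. The only difference is presentational: your three-phase $\msks$ bridge (shrink terms with $\wk_l/\cntr_l$, then fix multiplicities with $\cntr_r/\wk_r$) is exactly the content of the paper's Lemma~\ref{lem:completeness-dnf-cnf}, which you reprove inline rather than cite.
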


	We are almost ready to prove this, but we need the 
	following intermediate result, which is well-known in deep inference proof theory:
		\begin{lemma}
			[Completness for DNFs and CNFs]
			\label{lem:completeness-dnf-cnf}
			We have the following:
			\begin{enumerate}
				\item If a DNF $A$ logically implies a DNF $B$ then there is a $\msks $ derivation $A \rightarrow^* B$.
				\item If a CNF $A$ logically implies a CNF $B$ then there is a $\msks $ derivation $A \rightarrow^* B$.
			\end{enumerate}
		\end{lemma}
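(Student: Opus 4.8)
The plan is to prove (1) directly and obtain (2) by duality, exactly as the lemma is organised. Write $A = \bigvee_{i} C_i$ and $B = \bigvee_{k} D_k$, where each $C_i$ and $D_k$ is a term (a conjunction of variables). The first step is to record the standard characterisation of implication between DNFs: since $A$ logically implies $B$ iff each disjunct satisfies $C_i \Rightarrow B$, and since a term $C_i$ satisfies $C_i \Rightarrow B$ iff $B$ evaluates to $1$ on the assignment $\lfloor C_i \rfloor$ (by monotonicity, taking the least assignment making $C_i$ true), Proposition~\ref{prop:evaluation-characterisation} gives that $C_i \Rightarrow B$ iff there is some term $D_{k(i)}$ of $B$ with $\lfloor D_{k(i)} \rfloor \subseteq \lfloor C_i \rfloor$. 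So for every term $C_i$ of $A$ I may fix such a term $D_{k(i)}$ of $B$, and this is the only place where the hypothesis is actually consumed.

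Given this, I would build the derivation $A \rightarrow^* B$ from the structural rules in three phases, all applied deeply (recall from \eqref{eqn:msks-formula-version} that deep inference together with the implicit associativity/commutativity lets us operate on any subformula/module). First, inside each disjunct I shrink $C_i$ down to $D_{k(i)}$: since $\lfloor D_{k(i)} \rfloor \subseteq \lfloor C_i \rfloor$, every variable of $C_i$ outside $D_{k(i)}$, and any repeated occurrence, can be discarded by an application of $\wk_l$ used in the form $A_0 \con A_1 \to A_1$ (after rearranging by AC). This turns $A$ into $\bigvee_i D_{k(i)}$, a disjunction whose disjuncts are all terms of $B$. Second, I merge repeated disjuncts by $\cntr_r$, arriving at $\bigvee_{k \in K} D_k$ where $K = \{ k(i) : i \}$. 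Third, for each $k \notin K$ a single application of $\wk_r$, in the form $A_0 \to A_0 \dis D_k$, appends the missing disjunct $D_k$; after all such steps the result is exactly $\bigvee_k D_k = B$. Observe that only the three directions of weakening/contraction sound for $\entails \con$ are used, namely projecting out conjuncts, contracting disjuncts, and weakening in disjuncts, so no unsound move (adding conjuncts or deleting disjuncts) is ever needed.

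The CNF case (2) follows by dualising. A CNF is a conjunction of clauses; implication between CNFs is characterised, via Proposition~\ref{prop:entailment-characterisation} (or directly by duality, cf.~Remark~\ref{rmk:duality}), by the condition that each clause of $B$ contains some clause of $A$. The same three-phase construction then runs with the roles of $\con$/$\dis$ and of $\wk_l/\wk_r$, $\cntr_l/\cntr_r$ interchanged: grow a chosen clause of $A$ up to each clause of $B$ by $\wk_r$, duplicate a clause of $A$ by $\cntr_l$ when one clause must serve several clauses of $B$, and drop the unused clauses of $A$ by $\wk_l$.

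I do not expect a serious obstacle here, which is why the lemma is flagged as well-known: the only points requiring care are the bookkeeping of multiplicities (ensuring contraction is available to identify repeated terms or clauses) and the discipline of using each structural rule only in its implication-preserving direction. The cleanest way to keep this honest is to track throughout that the intermediate DNFs form a chain under $\entails \con$, so that soundness of $\msks$ (Proposition~\ref{prop:msks-sound}) is never violated.
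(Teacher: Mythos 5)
Your overall route is the same as the paper's: extract from the implication a map sending each term $C_i$ of $A$ to a term $D_{k(i)}$ of $B$ with $\lfloor D_{k(i)}\rfloor \subseteq \lfloor C_i\rfloor$, rewrite each term into its target by deep structural steps, then repair the disjunction with $\cntr_r$ and $\wk_r$. However, there is a genuine gap in your Phase~1, exactly at the point where you claim $\wk_l$ suffices and that ``no unsound move (adding conjuncts or deleting disjuncts) is ever needed''. The DNFs of this section are \emph{non-linear} graphs, so a term $D_{k(i)}$ of $B$ may contain several nodes carrying the same variable, and the hypothesis only gives the set-level inclusion $\lfloor D_{k(i)}\rfloor\subseteq\lfloor C_i\rfloor$, which is blind to multiplicities. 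Concretely, take $A$ to be a single node labelled $x$ and $B$ two adjacent nodes both labelled $x$: both are DNFs, $A$ logically implies $B$, but no derivation in $\{\wk_l,\cntr_r,\wk_r\}$ turns $A$ into $B$ (one can check that the minimum, over maximal cliques, of the number of repeated labels in a clique never increases under these three rules). You need $\cntr_l : G \to G\con G$, and this is \emph{not} an unsound move: all four rules of $\msks$ are sound for both entailments (Proposition~\ref{prop:msks-sound}); the genuinely unsound directions (adjoining an arbitrary conjunct, deleting a disjunct) are simply not in the system, and $\cntr_l$ is not one of them. The case matters for the intended application, Theorem~\ref{thm:completeness}, where $B$ is a $\Copy$-normal form of an arbitrary non-linear graph $H$ and can perfectly well contain such terms. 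The paper's proof flags precisely this point: its Phase~1 is a $\{\wk_l,\cntr_l\}$-derivation $\bigwedge S_i \to^* \bigwedge S'_{f(i)}$, with $\cntr_l$ supplying occurrences that are repeated in the target but not in the source.

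The same correction dualises into your part~(2): there you will additionally need $\cntr_r$, to merge repeated occurrences of a variable inside a clause of $A$ that are not matched in the target clause of $B$ (e.g.\ $x \dis x$ logically implies the CNF $x$, but $\wk_r$, $\cntr_l$ and $\wk_l$ alone cannot derive it). With these two amendments your argument coincides with the paper's.
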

	\begin{proof}
		We only prove (1), the case of (2) being dual.
		Since the statement only concerns formulas and any rule instance of $\msks$ preserves cograph-ness, we will present the argument in terms of formulas, cf.~\eqref{eqn:msks-formula-version}.
		
		Suppose we have a valid implication,
\begin{equation}
\label{eqn:dnf-implies-dnf}
\bigvee\limits_{i\in I} \bigwedge S_i \entails{} \bigvee\limits_{j\in J} \bigwedge S_j'
\end{equation}
		 for some sets of variables $S_i$ and $S_j'$.
		 Recall that we each $\bigwedge S_i$ and $\bigwedge S_j'$ are called \emph{terms}.
		Appealing to the definition of $\entails \con$, there must be some function $f:I \to J$ such that $\lfloor S_{f(i)}' \rfloor \subseteq \lfloor S_i \rfloor$.
We thus have derivations:
\begin{equation}\label{eqn:dnf-completeness-coweakening}
\{\wk_l, \cntr_l \} \ : \ \bigwedge S_i \ \to^* \ \bigwedge S_{f(i)}'
\end{equation}
($\cntr_l$ is required in case there are multiple occurrences of variables in the RHS).
Applying these to each term of LHS\eqref{eqn:dnf-implies-dnf} yields a DNF $\bigvee\limits_{i \in I} \bigwedge S_{f(i)}'$.
To arrive at RHS\eqref{eqn:dnf-implies-dnf}, we need to use the dual rules:
\begin{itemize}
	\item If $f $ is not surjective, e.g.\ $\forall i \in I . f(i) \neq j$, then we may apply $\wk_r$ to recover the missing terms, e.g.:
	\[
	\wk_r \ : \ \bigvee\limits_{i \in I} \bigwedge S_{f(i)}' \ \to \ \bigvee\limits_{i \in I} \bigwedge S_{f(i)}' \ \dis \ \bigwedge S_j' 
	\]
	\item If $f$ is not injective, e.g.\ $f(i) = f(i' ) =j$ for some distinct $i,i'$, then we may apply $\cntr_r $ to remove duplicate terms, e.g.:
	\[
	\cntr_r \ : \ \bigwedge S_j \dis \bigwedge S_{j}' \ \to \ \bigwedge S_j
	\]
\end{itemize}
We have thus derived $\eqref{eqn:dnf-implies-dnf}$ in $\{\wk_l, \cntr_l, \wk_r, \cntr_r \}$, as required.
		\end{proof}
		
We are now ready to prove the main completeness result:
		\begin{proof}
			[Proof of Theorem~\ref{thm:completeness}]
We prove only (1), the case of (2) following by duality.
			Suppose $G\entails \con H$, and by Lemma~\ref{lem:dnf-cnf} let $A$ and $B$ be DNFs such that:
			\begin{equation}
			\label{eqn:completeness-dnf-LHS}
			G \annarr{ \Copy}^* A
			\end{equation}
			\begin{equation}
			\label{eqn:completeness-dnf-RHS}
			H \annarr{\Copy}^* B
			\end{equation}
			Notice that, since $G$ and $A$ have the same maximal cliques up to variables occurring, by Lemma~\ref{lem:Copy-Dopy-sound}, we have that $G$ and $A$ are equivalent. Similarly for $H$ and $B$ so, since $G\entails \con H$, we also have that $A $ logically implies $B$.
			We may thus build the following $\consys$-derivation:
			\[
			\begin{array}{rcll}
			G & \annarr{\Copy}^* & A & \text{by \eqref{eqn:completeness-dnf-LHS}} \\
			 & \annarr{\msks}^* & B & \text{by Lemma~\ref{lem:completeness-dnf-cnf}} \\
			  & \annarr{\Copy^{-1}}^* & H & \text{by \eqref{eqn:completeness-dnf-RHS}} 
			\end{array}
			\]
			This concludes the proof.
%
%
		\end{proof}

	\todo{example with interpolant?}
	
	\subsection{On atomicity and linearity}
	\label{subsect:atomicity-linearity}
Though the structural proof theory of Boolean Graph Logic is beyond the scope of this paper, we make some observations here that suggest that BGL should enjoy decomposition theorems similar to those of deep inference proof theory \cite{BrunTiu:01:A-Local-:mz}.
We do not give a formal development, saving that for future work.

		A key feature of deep inference is that the contraction rules, $\cntr_l$ and $\cntr_r $ may be reduced to atomic form.
		This is thanks to the \emph{medial} rule:
		\[
		\medial \ :\  (A \con B) \dis (C \con D) \ \to \ (A\dis C) \con (B\dis D)
		\]
		For instance, we may transform a $\cntr_r$ inference $(A\con B) \dis (A\con B) \to (A \con B)$ into one with smaller contraction redexes as follows:
		\[
		\begin{array}{rcl}
		\underline{(A \con B) \dis (A\con B)} & \annarr{\medial} & \underline{(A \dis A)} \con (B\dis B) \\
		& \annarr{\cntr_r} & A \con \underline{(B\dis B)} \\
		& \annarr{\cntr_r} & A \con B
		\end{array}
		\]
		The reduction for $\cntr_l$ is dual.
		A similar reduction can be carried out in the graph theoretic setting by introducing the following medial rule for prime graphs $P[v_1, \dots , v_n]$ (all nodes indicated):
		\[
		P[G_1, \dots, G_n] \dis P[G_1', \dots, G_n'] \quad \to \quad P[\gedge{G_1}{G_1'}, \dots, \gedge{G_n}{G_n'}]
		\]
		It is not hard to verify that this rule is sound for both $\entails \con $ and $\entails \dis$, and also allows us to reduce contraction inferences to atomic form, though we omit the details here.

The medial rule above is an example of a \emph{linear} rule: it does not duplicate or erase any nodes in a graph. 
At the level of formulas such rules are important, since decomposition theorems of deep inference are typically agnostic about the choice of linear rules, once all the structural rules have been made atomic, cf.~\cite{GuglGund:07:Normalis:lr}.
One issue for our system $\consys$ (and $\dissys$) is that the rule $\Copy$ (dually $\Dopy$) is not linear.
However, it turns out that we may decompose it into atomic contraction inferences and the following linear rule:\footnote{This observation was made by Lutz Strassburger.}
\[
\raisebox{-0.4\height}{
\begin{tikzpicture}
\node[vertex] (G0) at (0,0) {$M_0$};
\node[vertex] (G1) at (0,-1) {$M_1$};
\draw[r] (G0) -- (G1);

\node[draw,fit=(G0) (G1) ] (M) {};

\node[vertex] (R0) at (1,0) {$R_0$};
\node[vertex] (R1) at (1,-1) {$R_1$};

\draw[r] (M) -- (R0);
\draw[r] (M) -- (R1);
\draw[g] (R0) -- (R1);
\end{tikzpicture}
}
\quad \to \quad
		\FourGraph{M_0, R_0, M_1, R_1}rggggr
\]
where $\{R_0,R_1\}$ partition the set of edges to the indicated module on the LHS.

It would be interesting to establish whether the incorporation of these linear rules (and perhaps others) and atomisation of the structural rules could lead to a well-behaved proof theory on arbitrary graphs, similar to deep inference.

		\section{Conclusions}
		
		In this work we presented a graph theoretic extension of Boolean logic that we called Boolean Graph Logic (BGL).
		BGL extended the semantics of Boolean logic from Boolean functions to more general Boolean relations, and we recovered a decomposition of entailment into two dual notions.
		The purpose of this article was to establish the fundamental theory behind BGL, for which we gave perspectives via complexity (Section~\ref{sect:complexity}), games (Section~\ref{sect:games}) and proofs (Section~\ref{sect:proof-system}).
		
		In future work we are most interested in developing the structural proof theory of BGL, building on the discussion of Section~\ref{subsect:atomicity-linearity} to establish \emph{decomposition theorems}, \`a la deep inference. 
		Such an investigation would also help compare BGL with the approach of \cite{AccHorStr:20}, complementary to a semantic investigation. Conversely, it would be interesting to see how the logic of \cite{AccHorStr:20} may be extended by structural rules, which can be problematic for the `splitting' technique there used.
		
		It would also be interesting to examine how to incorporate forms of negation and implication natively into BGL.
		For this it would be natural to consider the behaviour of analogous connectives from Computability Logic, cf.~\cite{Japaridze17}.

		\bibliographystyle{alpha}
		\bibliography{graph-refs,timsbib}
		
		\appendix

		\section{Deterministic and total graphs are $P_4$-free}
		\label{app:det-tot-pfourfree}
		In this section we give a self-contained proof of Theorem~\ref{thm:function-p4free}, showing that a graph that is deterministic and total is $ P_4 $-free. We need to introduce an additional concept first.
		
		\begin{defi}
			Let $ G $ be a graph and $ Y \subseteq V(G) $. A \e{selection} with respect to $ Y $ is a set $ Sel = \{T_x \mid T_x \in MS(G) \text{ and } T_x \cap Y = \{x\}	\} $.\\
			We call a selection w.r.t $ Y $ a \e{covering} if there is a $ D \in MS(G) $ with $ D \subseteq \cup_{x \in Y}T_x $ and $ D \cap Y = \emptyset $.\\
			We call a selection w.r.t $ Y $ a \e{non-covering} if it is not a covering.
		\end{defi}
		
		\begin{ex}
			If we choose $ Y $ so that $ Y $ is a clique, then a covering always exists. On the other hand, if we  choose $ Y = V(G) $, then a selection w.r.t $ Y $ does not exist. Take the following simple example:
			Let $ G $ be the following graph: \[ \FourGraph{a,b,c,d}rrrggr \]
			Then $ G $ is deterministic, and thus $ CIS $. Let $ Y = \set{a,c,d} $. Then $ Y $ is a maximal clique, and a selection exists with $ Sel = \set{ \set{a}, \set{b,c}, \set{b,d}}  $.\\
			
			For a less trivial example, take the following graph:
			\[ \SixGraph{a,b,d,c,e,f}rgrgrgrgrrrgggg \]
			
			Let $ Y = \set{c,e} $. Then the only stable set intersecting $ c $ is $ \set{c,e,f} $, but that also intersects $ e $, so there is no selection. 	
			Now let $ Y = \set{b,c,d} $. Then we have the selection $ Sel = \set{ \set{b,d}, \set{c,e,f} , \set{d,a}} $. Not only is it a selection, it is also covering, because we have $ \set{a,e} \subseteq V(G) = \set{b,d} \cup \set{c,e,f} \cup \set{d,a} $, and $ \set{a,e} \in MS(G) $.
		\end{ex}
		
		The following Lemma sheds some more light on coverings, and is the key piece to proving the theorem:
		
		\begin{lem}
			Let $ G $ be a total graph, and $ Y \subseteq V(G) $ such that $ Y $ is not a clique. Then every selection w.r.t $ Y $ is covering.
		\end{lem}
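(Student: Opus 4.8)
The plan is to reduce the statement to a single application of totality, after rewriting the covering condition in a cleaner form. Writing $U := \bigcup_{x \in Y} T_x$, I first observe that since each $T_x$ satisfies $T_x \cap Y = \{x\}$, every set $T_x \setminus \{x\}$ is disjoint from $Y$, and one checks readily that $U \setminus Y = \bigcup_{x\in Y}(T_x \setminus \{x\}) =: U'$. Consequently a covering is exactly a maximal stable set $D$ of $G$ with $D \subseteq U'$ (the requirement $D \cap Y = \emptyset$ then being automatic, as $U' \subseteq V(G) \setminus Y$). So it suffices to produce some $D \in \ms(G)$ with $D \subseteq U'$.

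The key move is then to feed the assignment $X := V(G) \setminus U'$ into the evaluation relation. By definition $G(X,0)$ asserts precisely that there is a maximal stable set disjoint from $X$, i.e.\ contained in $U'$, which is exactly the covering we seek. Since $G$ is total we have $G(X,0)$ or $G(X,1)$, so the entire lemma reduces to showing that $G(X,0)$ must hold; equivalently, to ruling out that $G(X,1)$ holds while $G(X,0)$ fails.

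Thus the crux is the maximal-clique case. Assuming for contradiction that the selection is non-covering (so $G(X,0)$ fails), totality supplies a maximal clique $S \subseteq X = V(G) \setminus U'$, i.e.\ $S \cap U' = \emptyset$. Since a clique and a stable set meet in at most one vertex, for each $x \in Y$ we get $S \cap T_x = S \cap \{x\}$, whence $S \cap U = S \cap Y$, a sub-clique of $Y$. Because $Y$ is not a clique, $S$ cannot contain all of $Y$, so fixing non-adjacent $a,b \in Y$, at least one of them, say $a$, lies outside $S$. Here I would play the maximality of $S$ (every vertex outside $S$ has a non-neighbour in $S$) against the maximality of the stable set $T_a$ (every vertex outside $T_a$ has a neighbour in $T_a$), aiming to force a vertex of $U'$ into $S$ and so contradict $S \cap U' = \emptyset$.

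The main obstacle is exactly this last step. Totality only delivers the disjunction $G(X,0) \vee G(X,1)$, so the entire weight of the argument falls on excluding a ``rogue'' maximal clique $S$ that happens to avoid $U'$, and it is precisely here that the hypothesis that $Y$ is not a clique must be used decisively. Making the interplay between the maximality of $S$ and of the sets $T_x$ actually close (rather than merely producing mutually consistent edges) is the delicate point; I expect to need the full selection, i.e.\ all of the $T_x$ for $x \in Y$ simultaneously, and not just $T_a$, in order to derive the contradiction.
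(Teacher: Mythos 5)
Your setup coincides exactly with the paper's own proof: your assignment $X = V(G)\setminus U'$ is literally the paper's set $B = (V(G)\setminus\bigcup_{x\in Y}T_x)\cup Y$, your reformulation of ``covering'' as $G(X,0)$ is the paper's observation that a non-covering selection forces $\neg G(X,0)$, and extracting via totality a maximal clique $S\subseteq X$ with $S\cap T_x\subseteq\{x\}$ is the same first half of the argument. But the step you flag as the ``main obstacle'' is a genuine gap, and it cannot be closed in the way you propose. The missing ingredient is not a cleverer interplay of maximality conditions but the CIS property: the paper's proof invokes \emph{determinism} of $G$ (every maximal clique meets every maximal stable set, Proposition~\ref{prop:det-iff-cis}) to upgrade $S\cap T_x\subseteq\{x\}$ to $S\cap T_x=\{x\}$, giving $x\in S$ for every $x\in Y$, hence $Y\subseteq S$, so $Y$ is a clique --- the desired contradiction. (The lemma's statement mentions only totality, but the paper's proof explicitly says ``because $G$ is deterministic''; in the sole place the lemma is used, the proof of Theorem~\ref{thm:function-p4free}, $G$ is assumed both deterministic and total, so determinism is the intended, and necessary, hypothesis.)

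Your hoped-for finish from the maximality of $S$ and of the $T_x$ alone is impossible, because the statement is \emph{false} for graphs that are merely total. Take $G=C_5$ with vertices $v,w,x,y,z$ and edges $vw,wx,xy,yz,zv$ (total, by Example~\ref{ex:c5-total}, but not deterministic), and $Y=\{v,x\}$, which is not a clique. The only maximal stable set meeting $Y$ exactly in $v$ is $\{v,y\}$, and the only one meeting it exactly in $x$ is $\{x,z\}$, so the unique selection is $\{\{v,y\},\{x,z\}\}$, with $U'=\{y,z\}$. Since $y$ and $z$ are adjacent, no maximal stable set is contained in $U'$, so this selection is non-covering. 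Concretely, the ``rogue'' clique your argument would need to exclude is $S=\{v,w\}\subseteq V(G)\setminus U'$: it is a maximal clique, each $T_a$ is a maximal stable set, and all the maximality constraints are simultaneously satisfied --- precisely because $C_5$ is not CIS ($S$ is disjoint from $\{x,z\}$). So no amount of work at your final step can succeed from totality alone; any correct proof must, as the paper's does, bring in determinism.
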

		\begin{proof}
			We prove the contrapositive. Assume there is a non-covering selection $ Sel = \{T_x \mid T_x \in MS(G) \text{ and } T_x \cap Y = \{x\} \}$ w.r.t. $ Y $. Define the set 
			\[ B:= (V(G) \setminus \bigcup_{x \in Y} T_x ) \cup Y \]
			Notice that $ B $ intersects every $ T \in MS(G) $: If $ T $ is a maximal stable set that intersects $ Y $, then, as $ Y \subseteq B$, $ T $ also intersects $ B$. If $ T $ is a maximal stable set that doesn't intersect $ Y $, then, because $ Sel $ is not a covering, we have $ T \nsubseteq \cup_{x \in Y}T_x $, so $ T$ intersects $ B $, by the definition of $ B $. \\
			There is no maximal stable set disjoint from $ B $, so $ G $ doesn't evaluate $ B $ to $ 0 $. $ G $ is total, so $ e_G(B,1) $, i.e. there exists a $ S \in MC(G)$ with $ S \subseteq B $.\\
			
			By the definition of the $ T_x$'s, we get $ T_x \cap Y = \{x\} $, and by the definition of $ B $ therefore also $ T_x \cap B = \{x\}$ for every $ x \in Y $. We have $ T_x \in MS(G)$ for every $ x \in Y $, and $ S \in MC(G) $, so, because $ G $ is deterministic, by the CIS property we get $ |S \cap T_x| = 1 $ for all $ x \in Y $. Because $ S $ is contained in $ B $, we get $ S \cap T_x = \{x\} $ for all $ x \in Y $. So $ Y \subseteq S $.
		\end{proof}
		
		\begin{ex}
			Take the following graph, and let $ Y = \set{c,f} $:
			\[ \SevenGraphModule{a,b,c,d,e,f,g}gggggrgrgrrggggrggggr \]
			
			Then $ Y $ is not a clique. 
			We have $ MS(G) = \set{  \set{g,c}, \set{gde},\set{gdb},\set{acf},\set{abdf},\set{aedf}}    $. Then with $ T_c = \set{g,c}$ , $T_f = \set{a,b,d,f}$, the set  $Sel = \set{ T_c , T_f } $ is a selection. 
			
			We have $ \cup_{x \in Y}T_x = T_c \cup T_f = \set{a,b,c,d,f} $.
			Also, $ \set{g,d,b} $ is a maximal stable set with $\subseteq \set{a,b,c,d,f}$ and $ \set{g,d,b} \cap Y = \emptyset $. Therefore, $ Sel $ is not only a selection, but also a covering.

		\end{ex}

		We now have all the results we need to prove the characterisation of Boolean functions.

		\begin{proof}
			[of Theorem~\ref{thm:function-p4free}]
			We prove the left-right implication.
			Assume $ G $ has a $ P_4 $ generated by the nodes $ \{a,b,c,d\} $ like seen below.
			\[ \FourGraph{b,c,a,d}rrggrg \]
			We can extend the edges $ \{a,b\}, \{c,d\} $ to maximal cliques $ S_1,S_2 \in MC(G)$. 
			
			We write $ S_1 = \{a,b\} \sqcup S_{ab} $, $ S_2 = \{c,d\} \sqcup S_{cd} $, where $ S_{ab} , S_{c,d} $ are (possibly empty) sets of nodes, and $ \sqcup $ denotes the disjoint union.
			Likewise, we extend the edges $ \{a,c\}, \{b,d\}  $ to maximal stable sets $ T_1, T_2 $, and write $ T_1 = \{a,c\} \sqcup T_{ac} $, $ T_2 = \{b,c\} \sqcup T_{bd} $. \\
			\begin{minipage}{0.5\textwidth}
				\[	\begin{tikzpicture}
				\draw[red!70,fill=red!30] (0,0.5) ellipse (0.4cm and 1.1cm);
				\draw[red!70,fill=red!30] (1,0.5) ellipse (0.4cm and 1.1cm);
				\node[vertex]  (v1) at (0, 1) {$b$};
				\node[vertex]  (v2) at (1, 1) {$c$};
				\node[vertex] (v3) at (0, 0) {$a$};
				\node[vertex]  (v4) at (1, 0) {$d$};
				\draw[r]  (v1) -- (v2) ;
				\draw[r]  (v1) -- (v3);
				\draw[g] (v1) -- (v4);
				\draw[g]  (v2) -- (v3);
				\draw[r] (v2) -- (v4);
				\draw[g]  (v3) -- (v4);
				\node[text=red] (S1) at (-0.6,1.25) {$S_1$};
				\node[text=red] (S2) at (1.6,1.25) {$S_2$};
				
				\end{tikzpicture}  \]
			\end{minipage}
			\begin{minipage}{0.5\textwidth}
				\[
				\begin{tikzpicture}
				\draw[green!70,fill=green!30,rotate around={45:(0.5,0.5)},] (0.5,0.5) ellipse (0.3cm and 1.2cm);
				\draw[green!70,fill=green!30,rotate around={-45:(0.5,0.5)},] (0.5,0.5) ellipse (0.3cm and 1.2cm);
				\node[vertex]  (v1) at (0, 1) {$b$};
				\node[vertex]  (v2) at (1, 1) {$c$};
				\node[vertex] (v3) at (0, 0) {$a$};
				\node[vertex]  (v4) at (1, 0) {$d$};
				\draw[r]  (v1) -- (v2) ;
				\draw[r]  (v1) -- (v3);
				\draw[g] (v1) -- (v4);
				\draw[g]  (v2) -- (v3);
				\draw[r] (v2) -- (v4);
				\draw[g]  (v3) -- (v4);
				\node[text=green] (T1) at (1.7,1.25) {$T_1$};
				\node[text=green] (T2) at (-0.7,1.25) {$T_2$};
				\end{tikzpicture}  \]
			\end{minipage}
			\vspace{1mm}

			Notice that by the CIS property, we get $ |S_1 \cap T_1| = |S_1 \cap T_2| = |S_2 \cap T_1| = |S_2 \cap T_2| = 1$, so we have $ S_1 \cap T_1 = \{a\} $, $ S_1 \cap T_2 = \{b\} $, $ S_2 \cap T_1 = \{c\} $, $ S_2 \cap T_2 = \{d\} $.
			Therefore, we can easily check that the three sets $ i) \hspace{1mm} \{a,b,c,d\} $, $ ii) \hspace{1mm} S_{ab} \cup S_{cd} $, $ iii) \hspace{1mm} T_{ac}\cup T_{bd} $ are all pairwise disjoint:\\
			
			We show that $ i), ii) $ are disjoint:\\	
			By definition, $ a, b \notin S_{ab} $. Without loss of generality, assume $ a \in S_{cd}$. Then $ |S_2 \cap T_1|=2 $, which is a contradiction. So $ a,b \notin S_{cd}$. A completely symmetric argument shows that $ c,d \notin S_{ab} $, and therefore the two sets are disjoint.\\
			The sets $ i),iii) $ are disjoint by the exact same argument.\\
			
			To show that $ ii),iii) $ are disjoint, notice that 
			\[ (S_{ab} \cup S_{cd}) \cap (T_{ac} \cup T_{bd}) = 
			(S_{ab} \cap T_{ac}) \cup 
			(S_{ab} \cap T_{bd}) \cup 
			(S_{cd} \cap T_{ac}) \cup 
			(S_{cd} \cap T_{bd})   \]
			Due to CIS, the only candidates for these intersections would be $ a,b,c,d $, but by the previous observations, these cannot be contained in the intersection. Thus, each of the four intersections must be empty, and therefore $ ii),iii) $ must be disjoint.\\

			The set $ \{a,d\} $ is not a clique, so by the previous lemma, every selection with respect to it is covering. Notice that the set  $ Sel = \{T_1, T_2\}  $ is a selection w.r.t to $ \{a,d\} $ and is therefore covering. 
			So there is a $ D \in MS(G)  $ with $ D \cap \{a,d\} = \emptyset $ and $ D \subseteq T_1 \cup T_2 $. By the previous observation, we have $ (T_1 \cup T_2)\cap S_1 = \{a,c\}$, and $(T_1 \cup T_2) \cap S_2 = \{b,d\} $. 
			$ D $ is a maximal stable set, so by CIS, $ |D \cap S_1| = |D \cap S_2| = 1 $, so, because $ a,d \notin D $, we get  $ D \cap S_1 = \{c\} $, and $ D \cap S_2 =\{b\} $.

			So we get $ b,c \in D \in MS(G) $, which is a contradiction, because there is a red edge between $ b $ and $ c $. 
		\end{proof}

	\end{document}